\newtheorem{proposition}{Proposition}
\newtheorem{theorem}{Theorem}
\def\be{\begin{equation}}
\def\ee{\end{equation}}
\def\bee{\begin{eqnarray}}
\def\ene{\end{eqnarray}}
\def\bes{\begin{subequations}}
\def\ees{\end{subequations}}
\def\v{\vspace{0.1in}}
\def\no{{\nonumber}}
\begin{document}

\baselineskip=13pt
\renewcommand {\thefootnote}{\dag}
\renewcommand {\thefootnote}{\ddag}
\renewcommand {\thefootnote}{ }

\pagestyle{plain}

\begin{center}
\baselineskip=16pt \leftline{} \vspace{-.3in} {\Large \bf Inverse scattering transforms and $N$-double-pole solutions for the derivative NLS equation with zero/non-zero boundary conditions} \\[0.2in]
\end{center}

\begin{center}
Guoqiang Zhang$^{1,2}$ and Zhenya Yan$^{1,2,*}$\footnote{$^{*}${\it Email address}: zyyan@mmrc.iss.ac.cn}  \\[0.03in]
{\it \small $^{1}$Key Laboratory of Mathematics Mechanization, Academy of Mathematics and Systems Science, \\ Chinese Academy of Sciences, Beijing 100190, China \\
 $^{2}$School of Mathematical Sciences, University of Chinese Academy of Sciences, Beijing 100049, China} \\
\end{center}

\vspace{0.1in}

{\baselineskip=13pt

\begin{tabular}{p{16cm}}
 \hline \\
\end{tabular}

\vspace{-0.18in}

\begin{abstract} \small \baselineskip=12pt

We systematically report a rigorous theory of the inverse scattering transforms (ISTs) for the derivative nonlinear Schr\"odinger (DNLS) equation with both zero boundary condition (ZBC)/non-zero boundary conditions (NZBCs) at infinity and double poles of analytical scattering coefficients. The scattering theories for both ZBC and NZBCs are addressed. The direct problem establishes the analyticity, symmetries and asymptotic behavior of the Jost solutions and scattering matrix, and properties of discrete spectra. The inverse problems are formulated and solved with the aid of the matrix Riemann-Hilbert problems, and the reconstruction formulae, trace formulae and theta conditions are also posed. In particular, the IST with NZBCs at infinity is proposed by a suitable uniformization variable, which allows the scattering problem to be solved on a standard complex plane instead of a two-sheeted Riemann surface. The reflectionless potentials with double poles for the ZBC and NZBCs are both carried out explicitly by means of determinants. Some representative semi-rational bright-bright soliton, dark-bright soliton, and breather-breather solutions  are
examined in detail. These results will be useful to further explore and apply the related nonlinear wave phenomena.

\vspace{0.1in} \noindent {\it Keywords:}  inverse scattering; Riemann-Hilbert problem; derivative nonlinear Schr\"odinger equation; zero/non-zero boundary conditions; double-pole solitons and breathers

\end{abstract}

\vspace{-0.05in}
\begin{tabular}{p{16cm}}
  \hline \\
\end{tabular}

\vspace{-0.15in}

\section{Introduction}

As a fundamental and important nonlinear physical model, the derivative nonlinear Schr\"odinger (DNLS) equation
\begin{gather}\label{DNLS}
\begin{array}{l}
iq_t+q_{xx}+i\sigma(\left|q\right|^2q)_x=0,
\end{array}
\end{gather}
has several physical applications, such as the propagation of circular polarized nonlinear Alfv{\' e}n waves in plasmas \cite{Rogister1971,Mjolhus1976, Mio1976, Mjolhus1989, Mjolhus1997}, weak nonlinear electromagnetic waves in ferromagnetic \cite{Nakata1991}, antiferromagnetic \cite{Daniel2002} or dielectric \cite{Nakata1993} systems under external magnetic fields. The parameter $\sigma$ stands for the relative magnitude of the derivative nonlinearity term, without loss of generality, one can take $\sigma=-1$ (since the case  $\sigma=1$ can be transformed into $\sigma=-1$ by means of $x\to -x$). Eq.~(\ref{DNLS}) can be transformed into the modified NLS equation
 \be\label{dnlsg}
iu_{\tau}+u_{\xi\xi}+g|u|^2u+i\gamma(|u|^2u)_{\xi}=0,
\ee
by means of the gauge transformation~\cite{gauge} $u(\tau, \xi)=q(x, t)e^{i(kx+k^2t)}$ with $x=\frac{\gamma}{\sigma}\xi-\frac{2g}{\sigma}\tau$,\, $t=\frac{\gamma^2}{\sigma^2}\tau$, and $k=\frac{\sigma g}{\gamma^2}$, where the Kerr nonlinear coefficient $\lambda$ and derivative nonlinear coefficient $\gamma$ both depend on nonlinear refractive index $n_2$. The modified NLS equation (\ref{dnlsg}) describes transmission of
femtosecond pulses in optical fibers~\cite{d1,d2,d3}. The solutions of nonlinear wave equations with ZBC and NZBC are always physically interesting subjects.

The  inverse scattering transform (IST) due to Gardner, Greene, Kruskal and Miura~\cite{Gardner1967} provides a powerful approach to discover solutions and properties of some integrable nonlinear wave systems. The long-time leading-order asymptotics of Eqs.~(\ref{DNLS}) and (\ref{dnlsg}) were studied~\cite{RH1,RH3} by the Deift-Zhou method~\cite{RH2}. The IST for the DNLS equation with ZBC has been studied to find its one-soliton solution~\cite{Kaup1978} and $N$-soliton solutions~\cite{Zhou2007}. The ISTs for the DNLS equation with NZBCs at infinity were also developed~\cite{Kawata1978, Chen2004, Chen2006, Lashkin2007, Zhou2012}. However, to the best of our knowledge, these IST works on the DNLS equation only focus on the case that all discrete spectra are {\it simple}. Moreover, no trace formulae or theta conditions were researched, the inverse problems were not formulated by means of Riemann-Hilbert problems. A natural problem is whether explicit {\it multi-pole} solutions can be found for the DNSL equation with ZBC/NZBCs by the approximate IST based on the matrix Riemann-Hilbert problems~\cite{Prinari2006,Demontis2013, Demontis2014,Biondini2014,Prinari2015a, Pichler2017, zhang18}.

In the present paper, we present the ISTs for the DNLS equation with ZBC/NZBCs based on the matrix Riemann-Hilbert problems (RHPs), and present their novel {\it double-pole} solutions by solving the corresponding RHPs. It should pointed out the DNLS equation is associated with the modified Zakharov-Shabat eigenvalue problem, not the usual Zakharov-Shabat eigenvalue problem related to the NLS-type equations~\cite{Shabat1972,Ablowitz1973, Prinari2006,Demontis2013, Demontis2014,Biondini2014,Prinari2015a, Pichler2017, zhang18} such that the discrete spectrum and solving Riemann-Hilbert problems are more complicated.


DNLS equation (\ref{DNLS}) is completely integrable and associated with the following modified Zakharov-Shabat eigenvalue problem \cite{Kaup1978}:
 \begin{align}\label{lax-x}
\varPhi_x&=X\varPhi, \quad X(x,t; k)=ik^2\sigma_3+kQ, \\[0.01in] \label{lax-t}
\varPhi_t&=T\varPhi, \quad \,\,T(x,t; k)=-\left(2k^2+Q^2\right)X-ikQ_x\sigma_3,
\end{align}
where $Q=Q(x,t)$ is written as
\begin{gather}
Q=\begin{bmatrix}
0& q(x,t)\\
\sigma\,q^*(x,t)&0
\end{bmatrix},
\end{gather}
and $\sigma_3$ is one of the Pauli's matrices, which are
\begin{align} \no
\sigma_1=\begin{bmatrix}
0&1\\1&0
\end{bmatrix},\quad
\sigma_2=\begin{bmatrix}
0&-i\\i&0
\end{bmatrix},\quad
\sigma_3=\begin{bmatrix}
1&0\\
0&-1
\end{bmatrix}.
\end{align}

The rest of this paper is organized as follows. In Sec. II, the IST for DNLS with ZBC at infinity is introduced, and solved for the double poles of analytical scattering coefficients by means of the matrix Riemann-Hilber problem. As a consequence, we present a formula for the explicit double-pole $N$-soliton solutions. In Sec. III, we give a detailed theory of the IST for the DNLS equation with NZBCs at infinity, which is more complicated than the case of ZBC since more symmetries and a two-sheeted Riemann surface are required. As a result, we present an explicit formula for the double-pole $N$-solitons for the case of NZBCs. Particularly, we discuss the special double-pole solitons. Finally, the conclusions and discussions are carried out in Sec. IV.

\section{IST with ZBC and double poles}

In this section, we will seek for a solution $q(x, t)$ for DNLS equation (\ref{DNLS}) with $\sigma=-1$ and ZBC
\begin{gather}\label{ZBC}
q(x, t)\to 0,\quad \mathrm{as}\quad x\to \pm\infty.
\end{gather}
The IST for DNLS equation (\ref{DNLS}) with ZBC (\ref{ZBC}) was first presented by Kaup and Newell \cite{Kaup1978}, where the simple poles for reflection coefficients are required. In what follows, we will further present the IST and solitons for Eq.~(\ref{DNLS}) with ZBC and {\it double poles}.

\subsection{Direct scattering with ZBC}

\subsubsection{Jost solutions, analyticity, and continuity}

Considering the asymptotic scattering problem ($x\to\pm\infty$) of the Lax pair (\ref{lax-x}, \ref{lax-t})
\begin{gather} \label{laxs}
\begin{aligned}
\varPhi_x&=X_0 \varPhi, \quad X_0=ik^2\sigma_3, \\[0.01in]
\varPhi_t&=T_0\varPhi, \quad \,\,T_0=-2ik^4\sigma_3=-2k^2X_0,
\end{aligned}
\end{gather}
the fundamental matrix solution $\varPhi^{bg}(x, t; k)$ of system (\ref{laxs}) can be derived as
\begin{gather}
\varPhi^{bg}(x, t; k)=\mathrm{e}^{i\theta(x, t; k)\sigma_3}, \quad \theta(x, t; k)=k^2\left(x-2k^2 t\right).
\end{gather}

 Let $\Sigma:=\mathbb{R}\cup i\mathbb{R}$. We will seek for the Jost solutions $\varPhi_{\pm}(x, t; k)$ such that
\begin{gather}\label{ZBC-Jxjianjin}
\varPhi_{\pm}(x, t; k)=\mathrm{e}^{i\theta(x, t; k)\sigma_3}+o\left(1\right), \quad k\in\Sigma, \quad \mathrm{as} \quad  x\to\pm\infty.
\end{gather}
Consider the modified Jost solutions $\mu_{\pm}(x, t; k)$ in the form
\begin{gather}
\mu_{\pm}(x, t; k)=\varPhi_{\pm}(x, t; k)\,\mathrm{e}^{-i\theta(x, t; k)\sigma_3},
\end{gather}
which leads to $\mu_{\pm}(x, t; k)\to I$ as $x\to \pm \infty$, then we know that $\mu_{\pm}(x,t;k)$ satisfy the following Jost integrable equations
\begin{gather}
\mu_{\pm}(x, t; k)=I+k\int_{\pm\infty}^x \mathrm{e}^{ik^2(x-y)\sigma_3}Q(y, t)\,\mu_{\pm}(y, t; k)\,\mathrm{e}^{-ik^2(x-y)\sigma_3}\,\mathrm{d}y.
\end{gather}
which differs from ones for the NLS equation with ZBC related to the Zakharov-Shabat eigenvalue problem.

Let
\begin{gather} \no
D^+:=\left\{k\in \mathbb{C}\,|\,\mathrm{Re}(k)\mathrm{Im}(k)>0\right\}, \qquad D^-:=\left\{k\in \mathbb{C}\,|\,\mathrm{Re}(k)\mathrm{Im}(k)<0\right\},
\end{gather}
which are shown in Fig. \ref{k-ZBC}. We have the following proposition.
\begin{figure}[!t]
\centering
\includegraphics[scale=0.35]{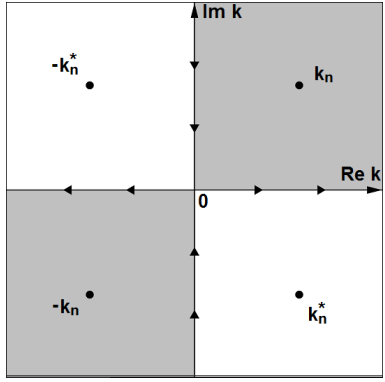}
\caption{Complex $k$-plane. Region $D^+$ (grey region), region $D^-$ (white region), discrete spectra, and orientation of the contours for the Riemann-Hilbert problem.}
\label{k-ZBC}
\end{figure}

\begin{proposition}
Suppose that $q(x,t)\in L^1\left(\mathbb{R}\right)$ and $\varPhi_{\pm j}(x, t; k)$ is the $j$th column of $\varPhi_{\pm}(x, t; k)$, then $\varPhi_\pm(x, t; k)$ have the following  properties:
\begin{itemize}
\item Eq. (\ref{lax-x}) has the unique Jost solutions $\varPhi_{\pm}(x, t; k)$ satisfying (\ref{ZBC-Jxjianjin}) on $\Sigma$;

\item $\varPhi_{+1,-2}(x, t; k)$ can be analytically extended  to $D^{+}$ and continuously extended to $D^{+}\cup\Sigma$;

\item $\varPhi_{-1, +2}(x, t; k)$  can be analytically extended  to $D^{-}$ and continuously extended to $D^-\cup\Sigma$.
\end{itemize}
\end{proposition}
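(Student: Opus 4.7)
The plan is to analyze the Volterra integral equation for the modified Jost functions column by column. Writing $\mu_\pm=(\mu_\pm^{(1)},\mu_\pm^{(2)})$ and conjugating $Q$ by $e^{ik^2(x-y)\sigma_3}$ inside the integral equation displayed immediately before the proposition, a direct computation shows that the two entries of the first column satisfy the coupled Volterra system
\begin{align*}
\mu_{\pm,1}^{(1)}(x,t;k) &= 1 + k\int_{\pm\infty}^{x} q(y,t)\,\mu_{\pm,2}^{(1)}(y,t;k)\,\mathrm{d}y,\\
\mu_{\pm,2}^{(1)}(x,t;k) &= -k\int_{\pm\infty}^{x} e^{-2ik^2(x-y)}\,q^*(y,t)\,\mu_{\pm,1}^{(1)}(y,t;k)\,\mathrm{d}y,
\end{align*}
with an analogous system for $\mu_\pm^{(2)}$ in which $e^{-2ik^2(x-y)}$ is replaced by $e^{+2ik^2(x-y)}$.

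The central observation is the identity $\mathrm{Im}(k^2)=2\,\mathrm{Re}(k)\,\mathrm{Im}(k)$, so the locus $\mathrm{Im}(k^2)=0$ coincides with $\Sigma$, while $\mathrm{Im}(k^2)>0$ on $D^+$ and $\mathrm{Im}(k^2)<0$ on $D^-$. For $\mu_+^{(1)}$ the contour lies in $[x,+\infty)$, so $x-y\le 0$ and $|e^{-2ik^2(x-y)}|=e^{2(x-y)\mathrm{Im}(k^2)}\le 1$ whenever $k\in\overline{D^+}$; the parallel sign check gives bounded exponential kernels for $\mu_-^{(1)}$ on $\overline{D^-}$, for $\mu_+^{(2)}$ on $\overline{D^-}$, and for $\mu_-^{(2)}$ on $\overline{D^+}$. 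Since $\varPhi_\pm=\mu_\pm e^{i\theta\sigma_3}$ merely multiplies each column by an entire exponential in $k$, the analyticity and continuity claims for the columns of $\varPhi_\pm$ reduce at once to the corresponding claims for the columns of $\mu_\pm$.

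With these uniform bounds in hand, I would study the Picard iterates of each Volterra system. A routine induction using $q\in L^1(\mathbb{R})$ shows that the $n$th iterate is dominated by $(|k|\,\|q\|_{L^1(\mathbb{R})})^n/n!$, so the Neumann series converges absolutely and uniformly on compact subsets of the relevant closed region. Each iterate is a finite iterated integral of a function that is entire in $k$, so Weierstrass's theorem gives analyticity in the interior of $D^\pm$, while the same uniform estimates together with dominated convergence yield continuous extension up to $\Sigma$. Uniqueness is the standard Volterra argument: any two Jost solutions differ by a solution of the homogeneous Volterra system, which must vanish identically by the same absolute-convergence estimate.

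The main point of care---and what genuinely distinguishes this DNLS analysis from the familiar Zakharov--Shabat (NLS) treatment---is that the phase carries $k^2$ rather than $k$. This has two consequences that must be handled carefully: the natural analyticity regions form the pair of open quadrants $D^\pm$ rather than half-planes, and the prefactor $k$ outside the integral must be absorbed inside the combinatorial bound on the Neumann series. Both are quantitative rather than conceptual difficulties, so once the exponential bookkeeping above is set up, the remainder of the argument follows the classical Volterra/Neumann scheme.
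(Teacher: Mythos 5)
The paper states this proposition without any proof, so there is nothing to diverge from: your argument is exactly the standard Volterra/Neumann-series treatment that the paper implicitly relies on, and it is correct. In particular, your column-by-column reduction of the integral equation, the identity $\mathrm{Im}(k^2)=2\,\mathrm{Re}(k)\,\mathrm{Im}(k)$ identifying the quadrants $D^{\pm}$ as the regions where the kernels $e^{\mp 2ik^2(x-y)}$ stay bounded for the respective integration directions, and the $(|k|\,\|q\|_{L^1})^n/n!$ domination of the iterates all check out and correctly reproduce the stated assignment of columns to $D^{+}$ and $D^{-}$.
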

The analyticity and continuity for the modified Jost solutions $\mu_{\pm}(x, t; k)$ follow trivially from those of $\varPhi_{\pm}(x, t; k)$.

\begin{proposition}[Time evolution of the Jost solutions]
The Jost solutions $\varPhi_{\pm}(x, t; k)$ are the simultaneous solutions of both parts of the Lax pair (\ref{lax-x},  \ref{lax-t}).
\end{proposition}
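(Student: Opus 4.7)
The plan is to verify that the auxiliary matrix $\Psi_\pm(x,t;k):=\partial_t\varPhi_\pm - T\varPhi_\pm$ vanishes identically, which is exactly what the proposition asserts. The idea is classical: show that $\Psi_\pm$ solves the same $x$-equation as $\varPhi_\pm$, and then use the asymptotic behavior as $x\to\pm\infty$ together with the uniqueness part of the previous proposition to force $\Psi_\pm\equiv 0$.

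First I would compute $\partial_x\Psi_\pm$ directly. Using the $x$-equation $\varPhi_{\pm,x}=X\varPhi_\pm$ and the fact that $q(x,t)$ satisfies the DNLS equation (\ref{DNLS}), which is equivalent to the zero-curvature condition $X_t-T_x+[X,T]=0$ for the pair $(X,T)$ in (\ref{lax-x})--(\ref{lax-t}), one obtains
\begin{equation*}
\partial_x\Psi_\pm = \partial_t(X\varPhi_\pm)-T_x\varPhi_\pm - T\varPhi_{\pm,x}
= (X_t-T_x)\varPhi_\pm + X\varPhi_{\pm,t} - TX\varPhi_\pm
= X\bigl(\varPhi_{\pm,t}-T\varPhi_\pm\bigr) = X\Psi_\pm.
\end{equation*}
Hence $\Psi_\pm$ satisfies the same linear system in $x$ as $\varPhi_\pm$ for every fixed $(t,k)$.

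Next I would analyze the behavior of $\Psi_\pm$ as $x\to\pm\infty$. Since $q\to 0$ and $q_x\to 0$ at infinity (from $q(\cdot,t)\in L^1(\mathbb{R})$ together with the equation, modulo the standard decay assumptions implicit in the direct scattering setup), the matrices $Q$, $Q^2$ and $Q_x$ vanish, so $T(x,t;k)\to T_0=-2ik^4\sigma_3$. On the other hand, the Jost asymptotics (\ref{ZBC-Jxjianjin}) give $\varPhi_\pm\to e^{i\theta\sigma_3}$, and differentiating $\theta=k^2(x-2k^2t)$ yields $\partial_t e^{i\theta\sigma_3}=-2ik^4\sigma_3 e^{i\theta\sigma_3}=T_0 e^{i\theta\sigma_3}$. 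Therefore $\Psi_\pm\to T_0 e^{i\theta\sigma_3}-T_0 e^{i\theta\sigma_3}=0$ as $x\to\pm\infty$. The interchange of the limit $x\to\pm\infty$ with $\partial_t$ is justified by differentiating the Jost integral equation with respect to $t$ and invoking the dominated convergence that underlies the analyticity/continuity statement of the previous proposition.

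Finally, since $\varPhi_\pm$ is a fundamental matrix of $\Phi_x=X\Phi$, we can write $\Psi_\pm(x,t;k)=\varPhi_\pm(x,t;k)\,C_\pm(t,k)$ for some $x$-independent matrix $C_\pm(t,k)$. Letting $x\to\pm\infty$ and using $\varPhi_\pm\to e^{i\theta\sigma_3}$ (an invertible matrix) together with $\Psi_\pm\to 0$ forces $C_\pm(t,k)\equiv 0$, so $\Psi_\pm\equiv 0$, i.e.\ $\varPhi_{\pm,t}=T\varPhi_\pm$. The main obstacle in making this rigorous is the interchange of limits and derivatives used in step two; once one shows from the Volterra series for $\mu_\pm$ that the $t$-derivatives converge with the same rate as $\mu_\pm$ itself to their asymptotic values, the rest is algebraic and follows from the uniqueness of Jost solutions established earlier.
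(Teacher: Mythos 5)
Your proposal is correct and follows essentially the same route as the paper: show that $\varPhi_{\pm,t}-T\varPhi_\pm$ solves the $x$-part of the Lax pair via the zero-curvature condition, express it as $\varPhi_\pm$ times an $x$-independent matrix, and kill that matrix by taking $x\to\pm\infty$ against the Jost asymptotics. The extra care you take with $T\to T_0$, $\partial_t e^{i\theta\sigma_3}=T_0e^{i\theta\sigma_3}$, and the interchange of limits is a more explicit rendering of the paper's one-line "multiplying by $\mathrm{e}^{-i\theta\sigma_3}$ and letting $x\to\pm\infty$" step, not a different argument.
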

\begin{proof}
We just need to illustrate that $\varPhi_{\pm}(x, t; k)$ solve the $t$-part (\ref{lax-t}). Liouville's formula leads to
\begin{align*}
\mathrm{det}\,\varPhi_{\pm}(x, t; k)=\lim_{x\to\pm\infty}\mathrm{det}\,\varPhi_{\pm}(x, t; k)=\lim_{x\to\pm\infty}\mathrm{det}\,\mu_{\pm}(x, t; k)=1.
\end{align*}
That is to say, $\varPhi_{\pm}(x, t; k)$ are the fundamental matrix solutions on $\Sigma$. By the compatibility condition $X_t-T_x+[X, T]=0$, one can find that $\varPhi_{\pm t}(x, t; k)-T\varPhi_{\pm}(x, t; k)$  also solve the $x$-part (\ref{lax-x}). Thus, there exist  two matrices $G_{\pm}(t; k)$ such that
\begin{gather*}
\varPhi_{\pm t}(x, t; z)-T\varPhi_{\pm}(x, t; z)=\varPhi_{\pm}(x, t; z)\,G_{\pm}(t; z),\quad \mathrm{as}\quad k\in\Sigma.
\end{gather*}
Multiplying both sides by $\mathrm{e}^{-i\theta(x, t; z)\sigma_3}$ and letting $x\to\pm\infty$, one  obtains  $G_{\pm}(t; k)=0$. The proof follows.
\end{proof}

\subsubsection{Scattering matrix and reflection coefficients}

Since $\varPhi_{\pm}(x, t; k)$ are two fundamental matrix solutions,  there exists a constant matrix $S(k)$ (not depend on $x$ and $t$) such that
\begin{align}\label{ZBC-ssjz}
\varPhi_+(x, t; k)=\varPhi_-(x, t; k)\,S(k), \quad  k\in\Sigma,
\end{align}
where $S(k)=\left(s_{ij}(k)\right)_{2\times 2}$ is referred to the scattering matrix and its entries $s_{ij}(k)$ as the scattering coefficients.
It follows from Eq.~(\ref{ZBC-ssjz}) that $s_{ij}(z)$'s have the Wronskian representations:
\begin{gather}\label{ZBC-lsj}
\begin{aligned}
s_{11}(k)={\rm Wr}(\varPhi_{+1}(x, t; k), \varPhi_{-2}(x, t; k)),\quad s_{22}(z)={\rm Wr}(\varPhi_{-1}(x, t; z), \varPhi_{+2}(x, t; z)),\\[0.05in]
s_{12}(z)={\rm Wr}(\varPhi_{+2}(x, t; z), \varPhi_{-2}(x, t; z)),\quad s_{21}(z)={\rm Wr}(\varPhi_{-1}(x, t; z), \varPhi_{+1}(x, t; z)).
\end{aligned}
\end{gather}

\begin{proposition}
Suppose that $q(x,t)\in L^1\left(\mathbb{R}\right)$, then $s_{11}(k)$ can be analytically extended  to $D^+$ and continuously extended to $D^+\cup\Sigma$ while $s_{22}(k)$ can be analytically extend  to $D^-$ and continuously extended to $D^-\cup\Sigma^0$. Moreover, both $s_{12}(k)$ and $s_{21}(k)$ are continuous in $\Sigma$.
\end{proposition}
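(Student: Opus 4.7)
The plan is to read the proposition as a direct corollary of the Wronskian representations in (\ref{ZBC-lsj}) combined with the analyticity and continuity statements for the columns of $\varPhi_{\pm}(x,t;k)$ given in Proposition~1. Since the Wronskian $\mathrm{Wr}(u,v)$ of two column vectors is just the $2\times 2$ determinant of the matrix with columns $u,v$, it is a polynomial (in fact bilinear) function of the entries of $u$ and $v$, hence inherits both analyticity and continuity from its arguments. The key is therefore to match each scattering coefficient with the appropriate pair of Jost columns and read off the intersection of their analyticity domains.

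First I would treat $s_{11}(k)=\mathrm{Wr}(\varPhi_{+1},\varPhi_{-2})$. By Proposition~1, $\varPhi_{+1}$ and $\varPhi_{-2}$ are both analytic in $D^{+}$ and continuous on $D^{+}\cup\Sigma$; hence their Wronskian is analytic in $D^{+}$ and continuous on $D^{+}\cup\Sigma$. Symmetrically, $s_{22}(k)=\mathrm{Wr}(\varPhi_{-1},\varPhi_{+2})$ involves two columns analytic in $D^{-}$ and continuous on $D^{-}\cup\Sigma$, so the same reasoning applies with $D^{+}$ replaced by $D^{-}$. For $s_{12}(k)=\mathrm{Wr}(\varPhi_{+2},\varPhi_{-2})$ and $s_{21}(k)=\mathrm{Wr}(\varPhi_{-1},\varPhi_{+1})$, the two column factors have \emph{disjoint} domains of analyticity (one lives in $D^{+}$, the other in $D^{-}$), so the Wronskian is in general only defined on the common boundary $\Sigma$, and on this contour it is continuous because both factors are continuous up to $\Sigma$.

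To make the argument rigorous I would also note that the bilinearity of the determinant means that, once each column is verified to be analytic in the stated open set, no further uniformity hypothesis is needed for analyticity of the Wronskian; and the $L^{1}$ hypothesis on $q(x,t)$ enters only implicitly through Proposition~1 (it is what guarantees convergence of the Volterra series defining $\mu_{\pm}$ and hence the analyticity/continuity of the Jost columns). Independence of $x,t$ follows from the fact that $\varPhi_{\pm}$ both satisfy the same first-order system (\ref{lax-x}) with trace-free $X$, so their Wronskian is $x$-independent; a standard check using the $t$-part shows it is also $t$-independent. One can therefore evaluate the Wronskians at any convenient $x$ (for instance by taking limits $x\to\pm\infty$ using $\mu_{\pm}\to I$) to confirm they coincide with the scattering entries in (\ref{ZBC-ssjz}).

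I do not anticipate a genuine obstacle here: the only delicate points are bookkeeping, namely (i) confirming that the $x$-independence of $\mathrm{Wr}(\varPhi_{+j},\varPhi_{-k})$ is preserved throughout the joint analyticity region and not only on $\Sigma$ where the scattering relation (\ref{ZBC-ssjz}) was originally posited, and (ii) keeping track of the behaviour near the origin $k=0$ and at infinity, where the prefactor $k$ in the integral equation for $\mu_{\pm}$ degenerates --- this is presumably the reason the statement writes $\Sigma^{0}$ (the punctured contour) for $s_{22}$. Addressing (i) is done by analytic continuation: the identity $\mathrm{Wr}(\varPhi_{+j},\varPhi_{-k})(x,t;k)=s_{ij}(k)$ holds on $\Sigma$ from (\ref{ZBC-lsj}), and both sides extend analytically to the relevant half-plane, so they agree there by the identity principle. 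Point (ii) is handled by inspecting the explicit Neumann series for $\mu_{\pm}$.
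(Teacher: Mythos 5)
Your proposal is correct and follows essentially the same route the paper intends: the paper states the Wronskian representations (\ref{ZBC-lsj}) immediately before this proposition and leaves the proof implicit, the point being exactly that each scattering coefficient is a bilinear (hence analyticity- and continuity-preserving) combination of two Jost columns whose common domain, read off from Proposition~1, gives the stated region. Your additional remarks on $x$-independence and the identity principle are sound bookkeeping consistent with that argument.
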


Note that one cannot exclude the possible presence of zeros for $s_{11}(k)$ and $s_{22}(k)$ along $\Sigma$. To solve the Riemann-Hilbert problem in the inverse process, we restrict our consideration to the potential without spectral singularities \cite{Zhou1989}. As usual, the reflection coefficients $\rho(k)$ and $\tilde\rho(k)$ are  defined as
\begin{align}
\rho(k)=\frac{s_{21}(k)}{s_{11}(k)},\quad
\tilde\rho(k)=\frac{s_{12}(k)}{s_{22}(k)}, \quad  k\in\Sigma.
\end{align}

\subsubsection{Symmetries}

\begin{proposition}[Symmetry reduction~\cite{Zhou2007}]\, $X(x, t; k)$,\, $T(x, t; k)$, Jost solutions, scattering matrix, and reflection coefficients keep two symmetry reductions
\begin{itemize}
\item The first symmetry reduction
\begin{gather} \no
\begin{gathered}
X(x, t; k)=\sigma_2\,X(x, t; k^*)^*\sigma_2, \quad T(x, t; k)=\sigma_2\,T(x, t; k^*)^*\sigma_2, \v\\
\varPhi_{\pm}(x, t; k)=\sigma_2\,\varPhi_{\pm}(x, t; k^*)^*\,\sigma_2,\quad S(k)=\sigma_2\,S(k^*)^*\,\sigma_2, \quad \rho(k)=-\tilde\rho(k^*)^*.
\end{gathered}
\end{gather}
\item The second symmetry reduction
\begin{gather} \no
\begin{gathered}
X(x, t; k)=\sigma_1X(x, t; -k^*)^*\sigma_1,\quad T(x, t; k)=\sigma_1T(x, t; -k^*)^*\sigma_1, \v\\
\varPhi_{\pm}(x, t; k)=\sigma_1\,\varPhi_{\pm}(x, t; -k^*)^*\,\sigma_1, \quad S(k)=\sigma_1\,S(-k^*)^*\,\sigma_1, \quad \rho(k)=\tilde\rho(-k^*)^*.
\end{gathered}
\end{gather}
\end{itemize}
\end{proposition}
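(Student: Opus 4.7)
The plan is to derive everything from the symmetries of $X$ and $T$, push these through to the Jost solutions by a uniqueness argument, then transfer them to $S(k)$ via the scattering relation and finally to $\rho,\tilde\rho$ by componentwise computation. The two reductions are parallel in structure, so I would treat the first ($\sigma_2$, $k\mapsto k^*$) in detail and the second ($\sigma_1$, $k\mapsto -k^*$) by the same template.

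First I would record the algebraic facts that drive everything: $\sigma_j^2=I$, $\sigma_j\sigma_3\sigma_j=-\sigma_3$ for $j=1,2$, and the two key conjugation identities for the potential matrix (with $\sigma=-1$), namely $\sigma_2 Q^*\sigma_2 = Q$ and $\sigma_1 Q^*\sigma_1 = -Q$, both verified by a short $2\times 2$ matrix multiplication. Combined with $(i(k^*)^2)^*=-ik^2$ and $(i(-k^*)^2)^*=-ik^2$, these yield
\begin{gather*}
\sigma_2 X(x,t;k^*)^*\sigma_2 = \sigma_2(-ik^2\sigma_3+kQ^*)\sigma_2 = ik^2\sigma_3+kQ = X(x,t;k),\\
\sigma_1 X(x,t;-k^*)^*\sigma_1 = \sigma_1(-ik^2\sigma_3 - kQ^*)\sigma_1 = ik^2\sigma_3+kQ = X(x,t;k),
\end{gather*}
and an analogous computation (noting that $Q^2=\sigma\,|q|^2 I$ is real-scalar and that $Q_x$ transforms like $Q$ under these involutions) gives the same invariances for $T(x,t;k)$.

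Next I would upgrade these to the Jost solutions. For the first reduction, apply complex conjugation to the Lax system $\varPhi_{\pm,x}=X\varPhi_\pm$ with $k$ replaced by $k^*$, then sandwich with $\sigma_2$ and insert $\sigma_2^2=I$; using the invariance of $X$ just established, one checks that the matrix $\widetilde\varPhi_\pm(x,t;k):=\sigma_2\varPhi_\pm(x,t;k^*)^*\sigma_2$ solves the same $x$-equation. The same maneuver with the $t$-part shows it also solves the $t$-equation. For the boundary condition, note $\theta(x,t;k^*)^*=\theta(x,t;k)$ since the coefficients in $\theta$ are real, so as $x\to\pm\infty$,
\begin{equation*}
\widetilde\varPhi_\pm(x,t;k) \;=\; \sigma_2 e^{-i\theta(x,t;k)\sigma_3}\sigma_2 + o(1) \;=\; e^{i\theta(x,t;k)\sigma_3} + o(1),
\end{equation*}
where the last step uses $\sigma_2\sigma_3\sigma_2=-\sigma_3$. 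Uniqueness of the Jost solutions (Proposition 1) then forces $\varPhi_\pm(x,t;k)=\widetilde\varPhi_\pm(x,t;k)$. The identical argument with $\sigma_2\to\sigma_1$ and $k^*\to -k^*$ gives the second Jost symmetry, after verifying $\theta(x,t;-k^*)^*=\theta(x,t;k)$.

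Having the Jost symmetries, I would substitute both sides into the scattering relation $\varPhi_+=\varPhi_-S(k)$. For the first symmetry this yields $\sigma_2\varPhi_-(x,t;k^*)^*\sigma_2\,S(k)=\sigma_2[\varPhi_-(x,t;k^*)^*S(k^*)^*]\sigma_2$, and cancelling the invertible factor $\sigma_2\varPhi_-(x,t;k^*)^*\sigma_2=\varPhi_-(x,t;k)$ leaves $S(k)=\sigma_2 S(k^*)^*\sigma_2$; the second symmetry follows identically. Reading off entries via $\sigma_2 A^*\sigma_2 = \begin{pmatrix} a_{22}^* & -a_{21}^*\\ -a_{12}^* & a_{11}^*\end{pmatrix}$ and $\sigma_1 A^*\sigma_1=\begin{pmatrix} a_{22}^* & a_{21}^*\\ a_{12}^* & a_{11}^*\end{pmatrix}$ yields $s_{11}(k)=s_{22}(k^*)^*$, $s_{21}(k)=-s_{12}(k^*)^*$ and $s_{11}(k)=s_{22}(-k^*)^*$, $s_{21}(k)=s_{12}(-k^*)^*$ respectively; dividing gives $\rho(k)=-\tilde\rho(k^*)^*$ and $\rho(k)=\tilde\rho(-k^*)^*$. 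The only delicate point — and the one I would double-check carefully — is the bookkeeping in the two conjugation identities $\sigma_j Q^*\sigma_j = (-1)^{j+1} Q$ together with the correct sign picked up by $i(k^*)^2$ under conjugation; once these are settled everything else is mechanical.
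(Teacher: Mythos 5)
Your argument is correct and complete; the paper itself states this proposition without proof (deferring to the cited reference), and what you give is the standard derivation: verify the conjugation identities for $X$ and $T$, transfer them to the Jost solutions via uniqueness of the solution with the prescribed asymptotics (using $\theta(x,t;k^*)^*=\theta(x,t;k)$ and $\sigma_j\sigma_3\sigma_j=-\sigma_3$), then push through $\varPhi_+=\varPhi_-S$ and read off entries. All the displayed computations check out, including $\sigma_2 Q^*\sigma_2=Q$, $\sigma_1 Q^*\sigma_1=-Q$, and the entrywise forms of $\sigma_2A^*\sigma_2$ and $\sigma_1A^*\sigma_1$. One trivial slip: the compact formula $\sigma_j Q^*\sigma_j=(-1)^{j+1}Q$ in your closing remark has the exponent backwards (it should be $(-1)^{j}Q$ to agree with the correct identities you actually use); this does not affect the proof.
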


\subsubsection{Discrete spectrum with double poles}

The discrete spectrum of the scattering problem is the set of all values $k\in \mathbb{C}\backslash\Sigma$ such that the scattering problem admits eigenfunctions in $L^2(\mathbb{R})$. As was shown in~\cite{Biondini2014}, these exist exactly the values of $k$ in $D^+$ such that $s_{11}(k)=0$ and those values in $D^-$ such that $s_{22}(k)=0$. Differing from the previous results with single poles~\cite{Kaup1978, Zhou2007}, we here suppose that $s_{11}(k)$ has $N$ double zeros in $\left\{k\in\mathbb{C}:\mathrm{Re} \,k>0, \mathrm{Im}\, k>0\right\}$ denoted by $k_n$, $n=1, 2, \cdots, N$, that is, $s_{11}(k_n)=s'_{11}(k_n)=0$ and $s''_{11}(k_n)\not=0$. It follows from the symmetries of the scattering matrix that
\begin{gather}
\begin{gathered}
s_{11}(k_n)=s_{11}(-k_n)=s_{22}(k_n^*)=s_{22}(-k_n^*)=0,\\[0.05in]
s'_{11}(k_n)=s'_{11}(-k_n)=s'_{22}(k_n^*)=s'_{22}(-k_n^*)=0.
\end{gathered}
\end{gather}

The discrete spectrum is the set
\begin{align}\label{lisanpu}
 K=\left\{k_n,\, k_n^*,\, -k_n^*, \,-k_n\right\}_{n=1}^{N},
\end{align}
whose distributions are shown in Fig. \ref{k-ZBC}. Given $k_0\in K\cap D^+$, it follows from the Wronskian representations (\ref{ZBC-lsj}) and $s_{11}(k_0)=0$ that $\varPhi_{+1}(x, t; k_0)$ and $\varPhi_{-2}(x, t; k_0)$ are linearly dependent.  Given $k_0\in Z\cap D^-$, it follows from the Wronskian representations (\ref{ZBC-lsj}) and $s_{22}(k_0)=0$ that $\varPhi_{+2}(x, t; k_0)$ and $\varPhi_{-1}(x, t; k_0)$ are linearly dependent.
For convenience, we define $b[k_0]$ as the proportionality coefficient:
\begin{gather}
b[k_0]=\left\{
\begin{aligned}
\frac{\varPhi_{+1}(x, t; k_0)}{\varPhi_{-2}(x, t; k_0)}, \quad k_0\in K\cap D^+,\\[0.05in]
\frac{\varPhi_{+2}(x, t; k_0)}{\varPhi_{-1}(x, t; k_0)}, \quad k_0\in K\cap D^-.
\end{aligned}\right.
\end{gather}
Given $k_0\in K\cap D^+$, it follows from the Wronskian representations (\ref{ZBC-lsj}) and $s'_{11}(k_0)=0$ that $\varPhi_{+1}'(x, t; k_0)-b[k_0]\,\varPhi_{-2}'(x, t; k_0)$ and $\varPhi_{-2}(x, t; k_0)$ are linearly dependent. In the same manner, as $k_0\in K\cap D^-$, $\varPhi_{+2}'(x, t; k_0)-b[k_0]\,\varPhi_{-1}'(x, t; k_0)$ and $\varPhi_{-1}(x, t; k_0)$ are linearly dependent.  For convenience, we define $d[k_0]$ as the proportionality coefficient
\begin{gather}
d[k_0]=\left\{
\begin{aligned}
\frac{\varPhi_{+1}'(x, t; k_0)-b[k_0]\,\varPhi_{-2}'(x, t; k_0)}{\varPhi_{-2}(x, t; k_0)},\quad k_0\in K\cap D^+,\\[0.05in]
\frac{\varPhi_{+2}'(x, t; k_0)-b[k_0]\,\varPhi_{-1}'(x, t; k_0)}{\varPhi_{-1}(x, t; k_0)},\quad k_0\in K\cap D^-.
\end{aligned}\right.
\end{gather}

Moreover, let
\begin{gather}
A[k_0]=\left\{
\begin{aligned}
\frac{2\,b[k_0]}{s_{11}''(k_0)},\quad k_0\in K\cap D^+,\\[0.05in]
\frac{2\,b[k_0]}{s_{22}''(k_0)},\quad k_0\in K\cap D^-,
\end{aligned}\right. \qquad
B[k_0]=\left\{
\begin{aligned}
\frac{d[k_0]}{b[k_0]}-\frac{s_{11}'''(k_0)}{3\,s_{11}''(k_0)},\quad k_0\in K\cap D^+,\\[0.05in]
\frac{d[k_0]}{b[k_0]}-\frac{s_{22}'''(k_0)}{3\,s_{22}''(k_0)},\quad k_0\in K\cap D^-,
\end{aligned}\right.
\end{gather}
then one has the following compact form
\begin{align}
\begin{aligned}
\mathop\mathrm{P_{-2}}\limits_{k=k_0}\left[\frac{\varPhi_{+1}(x, t; k)}{s_{11}(k)}\right]&=A[k_0]\,\varPhi_{-2}(x, t; k_0), \quad k_0\in K\cap D^+,\\[0.05in]
\mathop\mathrm{P_{-2}}\limits_{k=k_0}\left[\frac{\varPhi_{+2}(x, t; k)}{s_{22}(k)}\right]&=A[k_0]\,\varPhi_{-1}(x, t; k_0), \quad k_0\in K\cap D^-,\\[0.05in]
\mathop\mathrm{Res}\limits_{k=k_0}\left[\frac{\varPhi_{+1}(x, t; k)}{s_{11}(k)}\right]&=A[k_0]\left[\varPhi_{-2}'(x, t; k_0)+B[k_0]\, \varPhi_{-2}(x, t; k_0)\right], \quad k_0\in K\cap D^+,\\[0.05in]
\mathop\mathrm{Res}\limits_{k=k_0}\left[\frac{\varPhi_{+2}(x, t; k)}{s_{22}(k)}\right]&=A[k_0]\left[\varPhi_{-1}'(x, t; k_0)+B[k_0]\, \varPhi_{-1}(x, t; k_0)\right], \quad k_0\in K\cap D^-,
\end{aligned}
\end{align}
where $\mathop\mathrm{P_{-2}}\limits_{k=k_0}\left[\bm\cdot\right]$ denotes the coefficient of $O\left(\frac{1}{\left(k-k_0\right)^2}\right)$ term in the Laurent expansion of $\bm\cdot$ at $k=k_0$.

\begin{proposition}
Given $k_0\in K$, two symmetry relations for $A[k_0]$ and $B[k_0]$ are given as follows.
\begin{itemize}
\item The first symmetry relation
\begin{gather} \no
A[k_0]=-A[k_0^*]^*, \quad B[k_0]=B[k_0^*]^*.
\end{gather}
\item The second symmetry relation
\begin{gather}\no
A[k_0]=A[-k_0^*]^*, \quad B[k_0]=-B[-k_0^*]^*.
\end{gather}
\end{itemize}
\end{proposition}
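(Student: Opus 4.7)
The plan is to push the two symmetry reductions of Proposition~4 through the defining formulas for $b[k_0]$, $d[k_0]$, $s_{11}''(k_0)$ and $s_{11}'''(k_0)$ at the conjugate spectral points, and then assemble the resulting relations in the ratios that make up $A[k_0]$ and $B[k_0]$.

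The first ingredient is a Schwarz-reflection observation: if a scalar or matrix valued function satisfies $f(k)=\overline{g(\bar k)}$ on conjugate analyticity domains, then $f^{(n)}(k)=\overline{g^{(n)}(\bar k)}$ for all $n$, while if $f(k)=\overline{g(-\bar k)}$ then $f^{(n)}(k)=(-1)^{n}\,\overline{g^{(n)}(-\bar k)}$. Reading $s_{11}(k)=\overline{s_{22}(k^{*})}$ and $s_{11}(k)=\overline{s_{22}(-k^{*})}$ off from Proposition~4 and applying the observation at $k_0\in K\cap D^{+}$ gives
\begin{equation*}
s_{22}^{(n)}(k_0^{*})=\overline{s_{11}^{(n)}(k_0)},\qquad s_{22}^{(n)}(-k_0^{*})=(-1)^{n}\,\overline{s_{11}^{(n)}(k_0)},
\end{equation*}
so the second- and third-derivative factors appearing in $A$ and $B$ transform in a controlled way between symmetric spectral points.

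The second step is to extract from the column form of the Jost symmetries the relations $\varPhi_{\pm,1}(k)=i\sigma_2\varPhi_{\pm,2}(k^{*})^{*}$ and $\varPhi_{\pm,2}(k)=-i\sigma_2\varPhi_{\pm,1}(k^{*})^{*}$ for the first reduction, and $\varPhi_{\pm,1}(k)=\sigma_1\varPhi_{\pm,2}(-k^{*})^{*}$ and $\varPhi_{\pm,2}(k)=\sigma_1\varPhi_{\pm,1}(-k^{*})^{*}$ for the second. Substituting these into $\varPhi_{+,1}(x,t;k_0)=b[k_0]\,\varPhi_{-,2}(x,t;k_0)$ and comparing with the analogous relation at $k_0^{*}$ (respectively $-k_0^{*}$), the common $i\sigma_2$ (respectively $\sigma_1$) prefactor cancels and one reads off $b[k_0^{*}]=-b[k_0]^{*}$ and $b[-k_0^{*}]=b[k_0]^{*}$. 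Differentiating the same column identities in $k$ — where the Schwarz-reflection lemma enters again, with the extra sign from the chain rule acting on $-k^{*}$ — and inserting into $\varPhi_{+,1}'(x,t;k_0)-b[k_0]\,\varPhi_{-,2}'(x,t;k_0)=d[k_0]\,\varPhi_{-,2}(x,t;k_0)$ yields, after using the $b$-symmetries just established, $d[k_0^{*}]=-d[k_0]^{*}$ and $d[-k_0^{*}]=-d[k_0]^{*}$.

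Putting everything together, inserting the $b$- and $s_{11}^{(n)}$-relations into $A[k_0]=2b[k_0]/s_{11}''(k_0)$ gives $A[k_0^{*}]=-A[k_0]^{*}$ and $A[-k_0^{*}]=A[k_0]^{*}$, and inserting the $b$, $d$ and $s_{11}^{(n)}$ relations into $B[k_0]=d[k_0]/b[k_0]-s_{11}'''(k_0)/(3s_{11}''(k_0))$ gives $B[k_0^{*}]=B[k_0]^{*}$ and, using the extra $(-1)^{n}$ with $n=2,3$ from the second symmetry, $B[-k_0^{*}]=-B[k_0]^{*}$. The main obstacle I expect is purely bookkeeping: keeping consistent track of the factors $\pm i$ produced by $\sigma_{1,2}$ acting on the two columns, the $(-1)^{n}$ coming from differentiating at $-k^{*}$, and the order of complex conjugation versus holomorphic differentiation. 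Once the sign conventions are pinned down in one of the four cases, the other three follow by the same pattern.
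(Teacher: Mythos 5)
Your argument is correct: the sign bookkeeping all checks out ($b[k_0^*]=-b[k_0]^*$, $b[-k_0^*]=b[k_0]^*$, $d[k_0^*]=d[-k_0^*]=-d[k_0]^*$, together with $s_{22}^{(n)}(k_0^*)=s_{11}^{(n)}(k_0)^*$ and $s_{22}^{(n)}(-k_0^*)=(-1)^n s_{11}^{(n)}(k_0)^*$ do yield exactly the stated relations for $A$ and $B$). The paper states this proposition without proof, and your derivation --- pushing the Jost/scattering-matrix symmetries of Proposition~4 through the definitions of $b$, $d$, $A$, $B$ --- is precisely the intended route.
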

By the two symmetry relations, one obtains the following constraints of discrete spectrum.
\begin{gather}
\begin{gathered}
A[k_n]=-A[k_n^*]^*=A[-k_n^*]^*=-A[-k_n],\\[0.05in]
B[k_n]=B[k_n^*]^*=-B[-k_n^*]^*=-B[-k_n].
\end{gathered}
\end{gather}

\subsubsection{Asymptotic behaviors}

To propose and solve the Riemann-Hilbert problem in the inverse problem, one has to determine the asymptotic behaviors of the modified Jost solutions and scattering matrix as $k\to\infty$. The standard Wentzel-Kramers-Brillouin (WKB) expansions are used to derive the asymptotic behavior of the modified Jost solutions. We consider the following ansatz for the expansions of the modified Jost solutions $\mu_{\pm}(x, t; k)$ as $k\to\infty$:
\begin{gather}
\mu_{\pm}(x, t; k)=\sum_{j=0}^n\frac{\mu_{\pm}^{[j]}(x, t)}{k^j}+O\left(\frac{1}{k^{n+1}}\right), \quad \mathrm{as}\quad k\to\infty,
\end{gather}
and substituting $\varPhi_{\pm}(x, t; k)=\mu_{\pm}(x, t; k)\,\mathrm{e}^{i\theta(x, t; k)\sigma_3}$ with these expansions into Eq. (\ref{lax-x}). By matching the $O\left(k^2\right)$ term, one obtains the off-diagonal parts $\left(\mu_{\pm}^{[0]}(x, t)\right)^{\mathrm{off}}=0$. It follows by matching the $O\left(k\right)$ term that $\left(\mu_{\pm}^{[1]}(x, t)\right)^{\mathrm{off}}=\frac{i}{2}\,\sigma_3Q(x, t)\,\mu_{\pm}^{[0]}(x, t)$. By matching the $O\left(1\right)$ term, one yields that $\mu_{\pm}^{[0]}(x, t)=C^{\mathrm{diag}}\,\mathrm{e}^{iv_{\pm}(x,t)\sigma_3}$, where
\begin{gather}
v_{\pm}(x, t)=\frac{1}{2}\int_{\pm\infty}^x\left| q(y, t)\right|^2\,\mathrm{d}y.
\end{gather}
Combining with the asymptotic behaviors of the modified Jost solutions $\mu_{\pm}(x, t; k)$ as $x\to\pm\infty$, one deduces the asymptotic behavior as $k\to\infty$.

\begin{proposition}
The asymptotic behaviors of the modified Jost solutions are
\begin{gather}
\mu_{\pm}(x, t; k)=\mathrm{e}^{iv_{\pm}(x, t)\sigma_3}+O\left(\frac{1}{k}\right), \quad \mathrm{as}\quad k\to\infty.
\end{gather}
\end{proposition}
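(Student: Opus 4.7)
The plan is to follow the strategy sketched immediately above the statement: substitute the WKB ansatz together with $\varPhi_{\pm}=\mu_{\pm}\mathrm{e}^{i\theta\sigma_3}$ into the $x$-part (\ref{lax-x}) of the Lax pair, collect powers of $k$, and solve the resulting cascade order by order; then fix the remaining integration constants via the spatial normalization $\mu_{\pm}\to I$ as $x\to\pm\infty$ already established for the Jost solutions.

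Concretely, I would first rewrite (\ref{lax-x}) as an equation for $\mu_\pm$ alone:
\begin{equation*}
\mu_{\pm,x}=ik^2[\sigma_3,\mu_\pm]+kQ\mu_\pm,
\end{equation*}
and insert $\mu_\pm=\sum_{j\geq 0}k^{-j}\mu_\pm^{[j]}+O(k^{-n-1})$. The $O(k^2)$ balance yields $[\sigma_3,\mu_\pm^{[0]}]=0$, forcing $\mu_\pm^{[0]}$ to be diagonal. The $O(k)$ balance gives $i[\sigma_3,\mu_\pm^{[1]}]+Q\mu_\pm^{[0]}=0$; using the identity $[\sigma_3,A^{\mathrm{off}}]=2\sigma_3 A^{\mathrm{off}}$, this recovers $(\mu_\pm^{[1]})^{\mathrm{off}}=\tfrac{i}{2}\sigma_3 Q\mu_\pm^{[0]}$ as asserted in the excerpt. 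At $O(1)$, since $[\sigma_3,\cdot]$ annihilates diagonal matrices, the unknown $\mu_\pm^{[2]}$ drops from the diagonal part, leaving
\begin{equation*}
(\mu_\pm^{[0]})_x=(Q\mu_\pm^{[1]})^{\mathrm{diag}}=Q(\mu_\pm^{[1]})^{\mathrm{off}}=\tfrac{i}{2}Q\sigma_3 Q\,\mu_\pm^{[0]}.
\end{equation*}
A direct computation with the explicit form of $Q$ gives $Q\sigma_3 Q=-\sigma|q|^2\sigma_3$, which for the DNLS normalization $\sigma=-1$ becomes $(\mu_\pm^{[0]})_x=\tfrac{i}{2}|q|^2\sigma_3\mu_\pm^{[0]}$.

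Writing the diagonal solution as $\mu_\pm^{[0]}(x,t)=C_\pm(t)\,\mathrm{e}^{iv_\pm(x,t)\sigma_3}$ reduces the ODE to the scalar equation $v_{\pm,x}=\tfrac{1}{2}|q|^2$. Integration from $\pm\infty$, legitimate because $q\in L^1(\mathbb{R})$, produces precisely $v_\pm(x,t)=\tfrac{1}{2}\int_{\pm\infty}^{x}|q(y,t)|^2\,dy$; the asymptotic condition $\mu_\pm\to I$ as $x\to\pm\infty$ then forces the prefactor $C_\pm(t)\equiv I$, yielding $\mu_\pm^{[0]}=\mathrm{e}^{iv_\pm\sigma_3}$. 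The next term in the cascade bounds the remainder as $O(1/k)$.

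The algebraic cascade above is routine; the main subtlety I expect is the rigorous justification that this is a genuine asymptotic expansion uniformly valid in $k\in\Sigma$, not merely a formal one. The standard route is to substitute the ansatz directly into the Volterra integral equation for $\mu_\pm$ and integrate by parts in the oscillatory kernel $\mathrm{e}^{\pm 2ik^2(x-y)}$, each integration by parts exchanging a factor $1/k^2$ for a derivative of $Q\mu_\pm$ whose norm is controlled by the $L^1$ hypothesis on $q$. This is the only non-trivial step, and it parallels the classical Zakharov-Shabat argument adapted to the modified eigenvalue problem where the spectral parameter enters quadratically rather than linearly.
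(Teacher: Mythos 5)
Your proposal is correct and follows essentially the same route as the paper: the WKB ansatz substituted into the $x$-part of the Lax pair, matching the $O(k^2)$, $O(k)$, and $O(1)$ terms to get $(\mu_\pm^{[0]})^{\mathrm{off}}=0$, $(\mu_\pm^{[1]})^{\mathrm{off}}=\tfrac{i}{2}\sigma_3Q\mu_\pm^{[0]}$, and the diagonal ODE whose solution is $\mathrm{e}^{iv_\pm\sigma_3}$ after normalizing with $\mu_\pm\to I$ as $x\to\pm\infty$. Your closing remark on justifying uniformity of the expansion via the Volterra integral equation goes slightly beyond what the paper records, but the core argument is identical.
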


From the definition or Wronskian presentations of scattering matrix and the asymptotic behaviors of the modified Jost solutions, the asymptotic behaviors of the scattering matrix can be yielded below.
\begin{proposition}
The asymptotic behavior of the scattering matrix is
\begin{gather}
S(k)=\mathrm{e}^{-iv_0\sigma_3}+O\left(\frac{1}{k}\right), \quad \mathrm{as}\quad k\to\infty,
\end{gather}
where $v_0$ reads as
\begin{gather}
v_0=\frac{1}{2}\int_{-\infty}^{+\infty}\left| q(y, t)\right|^2\,\mathrm{d}y.
\end{gather}
\end{proposition}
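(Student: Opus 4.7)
The plan is to start from the Wronskian representations (\ref{ZBC-lsj}) of the scattering coefficients. Because these Wronskians are independent of $x$ and $t$, they can be evaluated at any convenient $(x,t)$, and their $k\to\infty$ behavior may be extracted directly by plugging in the large-$k$ expansion of the modified Jost solutions supplied by the previous proposition.

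First, I would substitute $\varPhi_\pm(x,t;k)=\mu_\pm(x,t;k)\,{\rm e}^{i\theta(x,t;k)\sigma_3}$ into each of the four Wronskians. For $s_{11}$ and $s_{22}$ the two columns in the determinant carry opposite oscillatory factors ${\rm e}^{i\theta}$ and ${\rm e}^{-i\theta}$, so these cancel inside the $2\times 2$ determinant, reducing $s_{11}={\rm Wr}(\mu_{+1},\mu_{-2})$ and $s_{22}={\rm Wr}(\mu_{-1},\mu_{+2})$. For $s_{12}$ and $s_{21}$ the two columns carry the same exponential factor, so those Wronskians pick up an overall ${\rm e}^{\mp 2i\theta}$, which is bounded on $\Sigma$.

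Next, I would insert the asymptotic expansion $\mu_\pm(x,t;k)={\rm e}^{iv_\pm(x,t)\sigma_3}+O(1/k)$, whose leading term is diagonal, giving $\mu_{\pm 1}^{[0]}=(e^{iv_\pm},0)^T$ and $\mu_{\pm 2}^{[0]}=(0,e^{-iv_\pm})^T$. A direct $2\times 2$ determinant then yields
\begin{align*}
s_{11}(k)={\rm e}^{i(v_+-v_-)}+O(1/k),\qquad s_{22}(k)={\rm e}^{-i(v_+-v_-)}+O(1/k),
\end{align*}
while for $s_{12}$ and $s_{21}$ the relevant leading-order columns of $\mu_+$ and $\mu_-$ are \emph{parallel}, so the Wronskians vanish at leading order and these off-diagonal coefficients are automatically $O(1/k)$.

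Finally, from the definitions of $v_\pm(x,t)$ one has
\begin{align*}
v_+(x,t)-v_-(x,t)=\tfrac12\int_{+\infty}^x|q|^2\,{\rm d}y-\tfrac12\int_{-\infty}^x|q|^2\,{\rm d}y=-v_0,
\end{align*}
so the diagonal entries of $S(k)$ tend to ${\rm e}^{\mp iv_0}$, which is exactly $S(k)={\rm e}^{-iv_0\sigma_3}+O(1/k)$. The only mildly delicate point is handling the oscillatory prefactors ${\rm e}^{\pm 2i\theta}$ appearing for $s_{12},s_{21}$, but since the leading-order $\mu_\pm$ is diagonal these Wronskians vanish to leading order regardless of the prefactor, so no real obstacle arises; the rest of the argument is bookkeeping.
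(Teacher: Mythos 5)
Your proof is correct and follows exactly the route the paper indicates (the paper only sketches it in one sentence: combine the Wronskian representations with the large-$k$ asymptotics of the modified Jost solutions). Your computation of the cancelling/non-cancelling oscillatory factors, the vanishing of the off-diagonal Wronskians at leading order, and the identity $v_+-v_-=-v_0$ fills in the details consistently with the stated result.
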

Note that $v_0$ does not depend on variable $t$. In fact, substituting $\varPhi_{\pm}(x, t; k)=\mu_{\pm}(x, t; k)\,\mathrm{e}^{i\theta(x, t; k)\sigma_3}$ with these expansions into Eq. (\ref{lax-t}) and matching the $O\left(k^4\right)$,  $O\left(k^3\right)$, $O\left(k^2\right)$, $O\left(k\right)$ and $O\left(1\right)$ in order, then $v_{0t}=0$ follows as $x\to\mp\infty$.

\subsection{Inverse problem with ZBC and double poles}

\subsubsection{Matrix Riemann-Hilbert problem}

According to the relation (\ref{ZBC-ssjz}) of two fundamental matrix solutions $\varPhi_{\pm}(x, t; k)$, we have the following Riemann-Hilbert problem:

\begin{proposition}
Let the sectionally meromorphic matrices
\begin{gather}
M(x, t; k)=\left\{
\begin{aligned}
\left(\frac{\mu_{+1}(x, t; k)}{s_{11}(k)},\, \mu_{-2}(x, t; k)\right),\quad k\in D^+, \\[0.05in]
\left(\mu_{-1}(x, t; k),\, \frac{\mu_{+2}(x, t; k)}{s_{22}(k)}\right), \quad k\in D^-,
\end{aligned}\right.
\end{gather}
and
\begin{gather}
M^{\pm}(x, t; k)=\lim_{\begin{matrix}
k'\to k\\k'\in D^{\pm}
\end{matrix}}M(x, t; k'), \quad k\in\Sigma.
\end{gather}
Then the multiplicative matrix Riemann-Hilbert problem is given as follows:
\begin{itemize}
\item Analyticity: $M(x, t; k)$ is analytic in $\left(D^+\cup D^-\right)\backslash K$ and has double poles in $K$.
\item Jump condition:
\begin{gather}
M^-(x, t; k)=M^+(x, t; k)\left(I-J(x, t; k)\right), \quad k\in\Sigma,
\end{gather}
where $J(x, t; k)$ is defined by
\begin{gather*}
J(x, t; k)=\mathrm{e}^{i\theta(x, t; k)\sigma_3}
\begin{bmatrix}
0&-\tilde\rho(k)\\[0.05in]
\rho(k)&\rho(k)\,\tilde\rho(k)
\end{bmatrix}\mathrm{e}^{-i\theta(x, t; k)\sigma_3}.
\end{gather*}
\item Asymptotic behavior:
\begin{gather}
M(x, t; k)=\mathrm{e}^{iv_-(x, t)\sigma_3}+O\left(\frac{1}{k}\right), \quad k\to\infty.
\end{gather}
\end{itemize}
\end{proposition}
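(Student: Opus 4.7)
The proof has three parts corresponding to the three bullets, and each reduces to pulling together results already established in the direct-scattering section.

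First, for the analyticity claim, I would simply invoke Proposition 1 together with the analyticity statements for $s_{11}$ and $s_{22}$ given in Proposition 3. In $D^+$, the column $\mu_{+1}$ is analytic and $s_{11}$ is analytic with (by the double-pole hypothesis) exactly $2N$ double zeros at the points of $K\cap D^+$; dividing yields a meromorphic column with double poles precisely at those points. The second column $\mu_{-2}$ is analytic on $D^+$ by Proposition 1. The argument in $D^-$ is identical after swapping indices. Since none of the discrete eigenvalues lie on $\Sigma$ (spectral singularities were excluded), this establishes the stated meromorphic structure with double poles only on $K$.

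Second, for the jump condition I would begin from the scattering relation $\varPhi_+(x,t;k)=\varPhi_-(x,t;k)S(k)$ on $\Sigma$ and translate it to modified Jost functions via $\mu_\pm=\varPhi_\pm e^{-i\theta\sigma_3}$. Writing this out columnwise gives
\begin{align*}
\mu_{+1}&=s_{11}\mu_{-1}+s_{21}e^{-2i\theta}\mu_{-2},\\
\mu_{+2}&=s_{12}e^{2i\theta}\mu_{-1}+s_{22}\mu_{-2}.
\end{align*}
Dividing the first by $s_{11}$ and the second by $s_{22}$, then solving the first for $\mu_{-1}$ and substituting into the second, I obtain
\[
\bigl(\mu_{-1},\,\mu_{+2}/s_{22}\bigr)=\bigl(\mu_{+1}/s_{11},\,\mu_{-2}\bigr)\begin{pmatrix}1 & \tilde\rho\,e^{2i\theta}\\ -\rho\,e^{-2i\theta} & 1-\rho\tilde\rho\end{pmatrix},
\]
which by direct inspection is exactly $M^-=M^+(I-J)$ for the $J(x,t;k)$ stated in the proposition. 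No delicate estimates are needed here; it is a bookkeeping calculation.

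Third, for the asymptotic behavior at $k\to\infty$, I would combine Proposition 5 ($\mu_\pm = e^{iv_\pm\sigma_3}+O(k^{-1})$) with Proposition 6 ($S(k)=e^{-iv_0\sigma_3}+O(k^{-1})$) and the identity $v_-(x,t)=v_+(x,t)+v_0$, which is immediate from the definitions of $v_\pm$ and $v_0$. In $D^+$ this yields $\mu_{+1}/s_{11}\to e^{i(v_++v_0)}(1,0)^{\!\top}=e^{iv_-}(1,0)^{\!\top}$ while $\mu_{-2}\to e^{-iv_-}(0,1)^{\!\top}$, giving $M\to e^{iv_-\sigma_3}$; the same conclusion in $D^-$ follows identically. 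Assembling the two sides gives a single asymptotic normalization $M=e^{iv_-\sigma_3}+O(1/k)$ uniformly as $k\to\infty$.

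The only real subtlety is conceptual rather than computational: the double zeros of $s_{11},s_{22}$ force $M$ to have \emph{double} (rather than simple) poles on $K$, so the RHP must later be supplemented with residue and $\mathop{\mathrm{P}}_{-2}$ conditions of the form already recorded before the proposition. Once one accepts that the residue/principal-part data are separately imposed, the three items above follow routinely from the earlier propositions, and I do not anticipate any analytic obstacle in the proof itself.
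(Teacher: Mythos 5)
Your proposal is correct and follows essentially the same route as the paper, which justifies the proposition in one short paragraph by citing the analyticity of the modified Jost solutions and scattering coefficients, rearranging the relation $\varPhi_+=\varPhi_- S$ for the jump, and invoking the established asymptotics of $\mu_\pm$ and $S(k)$. Your columnwise computation of $e^{i\theta\sigma_3}Se^{-i\theta\sigma_3}$ and the identity $v_-=v_++v_0$ are exactly the details the paper leaves implicit, and both check out.
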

The analyticity follows trivially from the analyticity of the modified Jost solutions and scattering data. The jump condition can be derived as one rearranges the terms in Eq. (\ref{ZBC-ssjz}). The asymptotic behaviors of the modified Jost solutions and scattering matrix can lead to that of $M(x, t; k)$. To solve the Riemann-Hilbert problem conveniently, we define
\begin{gather}
\eta_n=\left\{
\begin{aligned}
&k_n, & n&=1, 2, \cdots, N\\[0.05in]
-{}&k_{n-N}, &n&=N+1, N+2, \cdots, 2N
\end{aligned}\right.
\end{gather}
By subtract out the asymptotic values as $k\to\infty$ and the singularity contributions, one can  regularize the Riemann-Hilbert problem as a standard form. Then combining with Cauchy projectors and Plemelj's formulae, one can establish the solutions of the Riemann-Hilbert problem.

\begin{proposition}
The solution of the Riemann-Hilbert problem is given by
\begin{gather}
\begin{aligned}
&M(x, t; k)=\mathrm{e}^{iv_-(x, t)\sigma_3}+\frac{1}{2\pi i}\int_\Sigma\frac{(M^+J)(x, t; \zeta)}{\zeta-k}\,\mathrm{d}\zeta  \\
&+\sum_{n=1}^{2N}\left(C_n(k)\left[\mu_{-2}'(\eta_n)+\left(D_n+\frac{1}{k-\eta_n}\right)\mu_{-2}(\eta_n)\right],
\,\widehat C_n(k)\left[\mu_{-1}'(\eta_n^*)+\left(\widehat D_n+\frac{1}{k-\eta_n^*}\right)\mu_{-1}(\eta_n^*)\right]\right),
\end{aligned}
\end{gather}
where $k\in \mathbb{C}\backslash \Sigma$,
\begin{gather} \no
\begin{gathered}
C_n(k)=\frac{A[\eta_n]}{k-\eta_n}\,\mathrm{e}^{-2i\theta(x, t; \eta_n)}, \quad D_n=B[\eta_n]-2\,i\,\theta'(x, t; \eta_n), \\[0.05in]
\widehat C_n(k)=\frac{A[\eta_n^*]}{k-\eta_n^*}\,\mathrm{e}^{2i\theta(x, t; \eta_n^*)}, \quad \widehat D_n=B[\eta_n^*]+2\,i\,\theta'(x, t; \eta_n^*),
\end{gathered}
\end{gather}
and $\mu_{-j}$ and $\mu_{-j}'\,\, (j=1,2)$ satisfy
\begin{gather} \no
\begin{gathered}
\mu_{-1}(\eta_n^*)=\mathrm{e}^{iv_-\sigma_3}\begin{bmatrix}1\\0\end{bmatrix}
+\sum_{k=1}^{2N}C_k(\eta_n^*)\left[\mu_{-2}'(\eta_k)+\left(D_k+\frac{1}{\eta_n^*-\eta_k}\right)\mu_{-2}(\eta_k)\right]+\frac{1}{2\pi i}\int_\Sigma\frac{\left(M^+J\right)_1(\zeta)}{\zeta-\eta_n^*}\,\mathrm{d}\zeta,        \\
\mu_{-1}'(\eta_n^*)=-\sum_{k=1}^{2N}\frac{C_k(\eta_n^*)}{\eta_n^*-\eta_k}\left[\mu_{-2}'(\eta_k)+\left(D_k+\frac{2}{\eta_n^*-\eta_k}\right)
\mu_{-2}(\eta_k)\right]+\frac{1}{2\pi i}\int_\Sigma\frac{\left(M^+J\right)_1(\zeta)}{\left(\zeta-\eta_n^*\right)^2}\,\mathrm{d}\zeta, \qquad\qquad   \\
\mu_{-2}(\eta_k)=\mathrm{e}^{iv_-\sigma_3}\begin{bmatrix}0\\1 \end{bmatrix}
+\sum_{j=1}^{2N}\widehat C_j(\eta_k)\left[\mu_{-1}'(\eta_j^*)+\left(\widehat D_j+\frac{1}{\eta_k-\eta_j^*}\right)\mu_{-1}(\eta_j^*)\right]+\frac{1}{2\pi i}\int_\Sigma\frac{\left(M^+J\right)_2(\zeta)}{\zeta-\eta_k}\,\mathrm{d}\zeta,                    \\
\mu_{-2}'(\eta_k)=-\sum_{j=1}^{2N}\frac{\widehat C_j(\eta_k)}{\eta_k-\eta_j^*}\left[\mu_{-1}'(\eta_j^*)+\left(\widehat D_j+\frac{2}{\eta_k-\eta_j^*}\right)\mu_{-1}(\eta_j^*)\right]+\frac{1}{2\pi i}\int_\Sigma\frac{\left(M^+J\right)_2(\zeta)}{\left(\zeta-\eta_k\right)^2}\,\mathrm{d}\zeta, \qquad\qquad
\end{gathered}
\end{gather}
$\theta'(x, t; k)$ denotes the derivative of $\theta(x, t; k)$ with respect to variable $k$ and $\int_\Sigma$ the integral along the oriented contour shown in Fig. \ref{k-ZBC}.
\end{proposition}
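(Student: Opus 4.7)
The plan is to adapt the standard projector method for matrix Riemann--Hilbert problems to the double-pole situation. First I would rewrite the jump condition as $M^-(x,t;k) - M^+(x,t;k) = -M^+(x,t;k)\,J(x,t;k)$ and then regularize $M$ by subtracting its $k\to\infty$ value $e^{iv_-(x,t)\sigma_3}$ together with the full principal parts of its Laurent expansions at every $\eta_n\in K\cap D^+$ and every $\eta_n^*\in K\cap D^-$. Using the residue and coefficient-of-$(k-\eta_n)^{-2}$ relations for $\varPhi_{+1}/s_{11}$ and $\varPhi_{+2}/s_{22}$ derived in the discrete-spectrum subsection, and converting from $\varPhi_\pm$ to $\mu_\pm$ via the factor $e^{\mp i\theta(x,t;k)\sigma_3}$ (which introduces the phase $e^{-2i\theta(x,t;\eta_n)}$ and, after expansion of $\theta(x,t;k)$ around $\eta_n$, the shift $-2i\theta'(x,t;\eta_n)$), the principal parts reassemble into exactly the combinations $C_n(k)$, $D_n$, $\widehat{C}_n(k)$, $\widehat{D}_n$ appearing in the statement.

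Once these singular contributions are subtracted, the resulting quantity is analytic on $\mathbb{C}\setminus\Sigma$ and decays to zero as $k\to\infty$. Applying the Cauchy projectors $P^\pm$ along the contour $\Sigma=\mathbb{R}\cup i\mathbb{R}$ with the orientation indicated in Fig.~\ref{k-ZBC} and invoking the Plemelj--Sokhotski formulae converts the jump into the Cauchy integral $\frac{1}{2\pi i}\int_\Sigma \frac{(M^+J)(x,t;\zeta)}{\zeta-k}\,d\zeta$. Reinstating the asymptotic value and the principal parts then yields the claimed closed-form representation of $M(x,t;k)$ on $\mathbb{C}\setminus\Sigma$.

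To close the system, I would evaluate the representation of the second column at $k=\eta_k$ and that of the first column at $k=\eta_n^*$; this produces the stated algebraic-integral equations for $\mu_{-2}(\eta_k)$ and $\mu_{-1}(\eta_n^*)$. Differentiating the representation once in $k$ before the same evaluations yields the equations for $\mu_{-2}'(\eta_k)$ and $\mu_{-1}'(\eta_n^*)$; the extra $-1/(\zeta-\eta_n^*)^2$ in the Cauchy kernel and the appearance of $-C_k(k)/(k-\eta_k)$ when differentiating $C_k(k)$ are exactly what produce the coefficients $\widehat{D}_j+\frac{2}{\eta_k-\eta_j^*}$ and $D_k+\frac{2}{\eta_n^*-\eta_k}$ in the stated linear system.

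The main obstacle will be the bookkeeping of the double-pole contributions. Unlike the simple-pole case, each $\eta_n$ contributes both a $(k-\eta_n)^{-2}$ and a $(k-\eta_n)^{-1}$ term whose coefficients involve \emph{both} $\mu_{-2}(\eta_n)$ and $\mu_{-2}'(\eta_n)$, and the passage from the $\varPhi$-level residue identities to the $\mu$-level ones requires a careful Taylor expansion of $e^{-2i\theta(x,t;k)}$ around $k=\eta_n$. Verifying that the constants $B[\eta_n]-2i\theta'(x,t;\eta_n)$ and $B[\eta_n^*]+2i\theta'(x,t;\eta_n^*)$ emerge correctly, and that the symmetries $A[-k_n]=-A[k_n]$, $B[-k_n]=-B[k_n]$ make the sums over $\{\eta_n\}_{n=1}^{2N}$ well defined, is where the bulk of the technical effort lies; the rest is a direct assembly of Cauchy-projector boundary values.
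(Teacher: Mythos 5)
Your proposal is correct and follows essentially the same route the paper indicates: regularize $M$ by subtracting the $k\to\infty$ value and the full principal parts at the double poles (using the $\mathrm{P}_{-2}$/residue relations transported from $\varPhi_\pm$ to $\mu_\pm$, which is exactly where $C_n$, $D_n$, $\widehat C_n$, $\widehat D_n$ arise), apply the Cauchy projectors and Plemelj formulae, and close the system by evaluating the columns and their $k$-derivatives at $\eta_k$ and $\eta_n^*$. No substantive differences from the paper's argument.
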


\subsubsection{Reconstruction formula of the potential}

From the solution of the Riemann-Hilbert problem, one can obtain
\begin{gather}
M(x, t; k)=\mathrm{e}^{iv_-(x, t)\sigma_3}+\frac{M^{(1)}(x, t)}{k}+O\left(\frac{1}{k^2}\right), \quad \mathrm{as} \quad k\to\infty,
\end{gather}
where
\begin{gather}
\begin{aligned}
M^{(1)}(x, t)=&-\frac{1}{2\pi i}\int_\Sigma M^+(x, t; \zeta)\,J(x, t; \zeta)\,\mathrm{d}\zeta \qquad\qquad \\
& +\sum_{n=1}^{2N}\left[A[\eta_n]\,\mathrm{e}^{-2i\theta(\eta_n)}\left(\mu_{-2}'(\eta_n)+D_n\mu_{-2}(\eta_n)\right), A[\eta_n^*]\,\mathrm{e}^{2i\theta(\eta_n^*)}\left(\mu_{-1}'(\eta_n^*)+\widehat D_n\mu_{-1}(\eta_n^*)\right)\right].
\end{aligned}
\end{gather}
Substituting $\varPhi(x, t; k)=M(x, t; k)\,\mathrm{e}^{i\theta(x, t; k)\sigma_3}$ into Eq. (\ref{lax-x}) and matching $O\left(k\right)$ term, one obtains the reconstruction formula of the solution (potential) of the DNLS equation with ZBC and double poles as
\begin{gather}
q(x, t)=-2\,i\,\mathrm{e}^{iv_-(x, t)\sigma_3}\alpha^T\gamma,
\end{gather}
where the column vectors $\alpha=(\alpha^{(1)},\,\alpha^{(2)})^T$ and $\gamma=(\gamma^{(1)},\,\gamma^{(2)})^T$ are given by
\begin{gather*}
\begin{gathered}
 \alpha^{(1)}=\left(A[\eta_n^*]\,\mathrm{e}^{2i\theta(\eta_n^*)}\widehat D_n\right)_{2N\times 1},\quad \alpha^{(2)}=\left(A[\eta_n^*]\,\mathrm{e}^{2i\theta(\eta_n^*)}\right)_{2N\times 1},\v\\
 \gamma^{(1)}=\big(\mu_{-11}(\eta_n^*)\big)_{2N\times 1},\quad \gamma^{(2)}=\left(\mu_{-11}'(\eta_n^*)\right)_{2N\times 1}.
\end{gathered}
\end{gather*}

\subsubsection{Trace formula}

By using the asymptotic behavior of the scattering matrix as $k\to\infty$ and the Plemelj's formulae, $s_{11}(k)$ and $s_{22}(k)$ can be represented by the discrete spectrum and reflection coefficients. That is called the trace formula, which  is written as
\begin{gather}
\begin{gathered}
s_{11}(k)=\exp\left(-\frac{1}{2\pi i}\int_\Sigma\frac{\log\left[1-\rho(\zeta)\,\tilde\rho(\zeta)\right]}{\zeta-k}\,\mathrm{d}\zeta\right)\prod_{n=1}^{2N}
\left(\frac{k-\eta_n}{k-\widehat\eta_n}\right)^2\mathrm{e}^{-iv_0},\\[0.05in]
\no s_{22}(k)=\exp\left(\frac{1}{2\pi i}\int_\Sigma\frac{\log\left[1-\rho(\zeta)\,\tilde\rho(\zeta)\right]}{\zeta-k}\,\mathrm{d}\zeta\right)\prod_{n=1}^{2N}\left(\frac{k-\widehat\eta_n}{k-\eta_n}\right)^2
\mathrm{e}^{iv_0}.
\end{gathered}
\end{gather}

\subsubsection{Reflectionless potential: double-pole solitons}

We consider a special kind of solutions, reflectionless potential. From the Jost integral equation, one obtains $\varPhi_{\pm}(x, t; 0)=\mu_{\pm}(x, t; 0)=I$. Thus, $s_{11}(0)=1$. Combining the trace formula, one obtains that there exists an integer $j\in\mathbb{Z}$ such that
\begin{gather}
v_0=8\sum_{n=1}^{N}\mathrm{arg}(k_n)+2j\,\pi.
\end{gather}
From the reconstruction formula, the reflectionless potential is deduced by determinants:
\begin{gather}\label{wufanshe0-1}
q(x, t)=2i\,\frac{\mathrm{det}\left(G\right)}
{\mathrm{det}\left(I-H\right)}\,\mathrm{e}^{2iv_-(x, t)}, \quad G=\begin{bmatrix}
I-H&\beta\\[0.02in] \alpha^T&0\end{bmatrix},
\end{gather}
where the $4N\times 4N$ matrix $H=\begin{bmatrix}
H^{(1, 1)}& H^{(1, 2)}\\[0.02in]
H^{(2, 1)}& H^{(2, 2)}
\end{bmatrix}$ with $H^{(i, m)}=\left(H^{(i,  m)}_{n,  j}\right)_{2N\times 2N}\, (i, m=1, 2)$ is given by
\begin{gather*}
\begin{aligned}
&H^{(1, 1)}_{n, j}=\sum_{k=1}^{2N}C_k(\eta_n^*)\,\widehat C_j(\eta_k)\left[-\frac{1}{\eta_k-\eta_j^*}\left(\widehat D_j+\frac{2}{\eta_k-\eta_j^*}\right)+\left(D_k+\frac{1}{\eta_n^*-\eta_k}\right)\left(\widehat D_j+\frac{1}{\eta_k-\eta_j^*}\right)\right],  \\
&H^{(1, 2)}_{n, j}=\sum_{k=1}^{2N}C_k(\eta_n^*)\,\widehat C_j(\eta_k)\left[-\frac{1}{\eta_k-\eta_j^*}+\left(D_k+\frac{1}{\eta_n^*-\eta_k}\right)\right],  \\
&H^{(2, 1)}_{n, j}=\sum_{k=1}^{2N}C_k(\eta_n^*)\,\widehat C_j(\eta_k)\left[\frac{1}{\eta_k-\eta_j^*}\left(\widehat D_j+\frac{2}{\eta_k-\eta_j^*}\right)-\left(D_k+\frac{2}{\eta_n^*-\eta_k}\right)\left(\widehat D_j+\frac{1}{\eta_k-\eta_j^*}\right)\right],  \\
&H^{(2, 2)}_{n, j}=\sum_{k=1}^{2N}C_k(\eta_n^*)\,\widehat C_j(\eta_k)\left[\frac{1}{\eta_k-\eta_j^*}-\left(D_k+\frac{2}{\eta_n^*-\eta_k}\right)\right],
\end{aligned}
\end{gather*}
and $4N$ column vector $\beta=(\beta^{(1)},\,\beta^{(2)})^T$ with $\beta^{(1)}=\left(1\right)_{2N\times 1}, \, \beta^{(2)}=\left(0\right)_{2N\times 1}$.

However, this formula for the reflectionless potential is implicit since $v_-(x, t)$ is included. One needs to derive an explicit form for the reflectionless potential. From the trace formula and Jost integrable equation, one can deduce that
\begin{gather}
M(x, t; k)=I+k\sum_{n=1}^{2N}\left(\frac{\mathop\mathrm{P_{-2}}\limits_{k=\eta_n}\left[M(k)/k\right]}{\left(k-\eta_n\right)^2}+\frac{\mathop\mathrm{Res}\limits_{k=\eta_n}\left[M(k)/k\right]}{k-\eta_n}+\frac{\mathop\mathrm{P_{-2}}\limits_{k=\eta_n^*}\left[M(k)/k\right]}{\left(k-\eta_n^*\right)^2}+\frac{\mathop\mathrm{Res}\limits_{k=\eta_n^*}\left[M(k)/k\right]}{k-\eta_n^*}\right),
\end{gather}
which can  yield $\gamma$ explicitly.  Then substituting $\gamma$ into the reconstruction formula of the potential, one obtains
\begin{gather}\label{wufanshe0-2}
q(x, t)=2i\,\frac{\mathrm{det}\left(\widehat G\right)}
{\mathrm{det}(I-\widehat H)}\,\mathrm{e}^{iv_-(x, t)}, \quad \widehat G=\begin{bmatrix}
I-\widehat H&\beta\\[0.02in] \alpha^T&0\end{bmatrix},
\end{gather}
where the $4N\times 4N$ matrix $\widehat H=\begin{bmatrix}
\widehat H^{(1, 1)}& \widehat H^{(1, 2)}\\[0.02in]
\widehat H^{(2, 1)}& \widehat H^{(2, 2)}
\end{bmatrix}$ with $\widehat H^{(i, m)}=\left(\widehat H^{(i,  m)}_{n,  j}\right)_{2N\times 2N}\, (i, m=1, 2)$ is given by
\begin{gather*}
\begin{aligned}
&\widehat H^{(1, 1)}_{n, j}\!\!=\eta_n^*\sum_{k=1}^{2N}\frac{C_k(\eta_n^*)\,\widehat C_j(\eta_k)}{\eta_k}\!\left[-\frac{1}{\eta_k-\eta_j^*}\left(\widehat D_j+\frac{2}{\eta_k-\eta_j^*}\right)+\frac{\eta_k}{\eta_j^*}\!\left(D_k+\frac{1}{\eta_n^*-\eta_k}-\frac{1}{\eta_k}\right)\!\left(\widehat D_j+\frac{1}{\eta_k-\eta_j^*}-\frac{1}{\eta_j^*}\!\right)\!\right],  \\
&\widehat H^{(1, 2)}_{n, j}=\eta_n^*\sum_{k=1}^{2N}\frac{C_k(\eta_n^*)\,\widehat C_j(\eta_k)}{\eta_k}\left[-\frac{1}{\eta_k-\eta_j^*}+\frac{\eta_k}{\eta_j^*}\left(D_k+\frac{1}{\eta_n^*-\eta_k}-\frac{1}{\eta_k}\right)\right],  \\
&\widehat H^{(2, 1)}_{n, j}=\sum_{k=1}^{2N}\frac{C_k(\eta_n^*)\,\widehat C_j(\eta_k)}{\eta_n^*-\eta_k}\left[\frac{1}{\eta_k-\eta_j^*}\left(\widehat D_j+\frac{2}{\eta_k-\eta_j^*}\right)-\frac{\eta_k}{\eta_j^*}\left(D_k+\frac{2}{\eta_n^*-\eta_k}\right)\left(\widehat D_j+\frac{1}{\eta_k-\eta_j^*}-\frac{1}{\eta_j^*}\right)\right],  \\
&\widehat H^{(2, 2)}_{n, j}=\sum_{k=1}^{2N}\frac{C_k(\eta_n^*)\,\widehat C_j(\eta_k)}{\eta_n^*-\eta_k}\left[\frac{1}{\eta_k-\eta_j^*}-\frac{\eta_k}{\eta_j^*}\left(D_k+\frac{2}{\eta_n^*-\eta_k}\right)\right].
\end{aligned}
\end{gather*}

\begin{figure}[!t]
\centering
\includegraphics[scale=0.6]{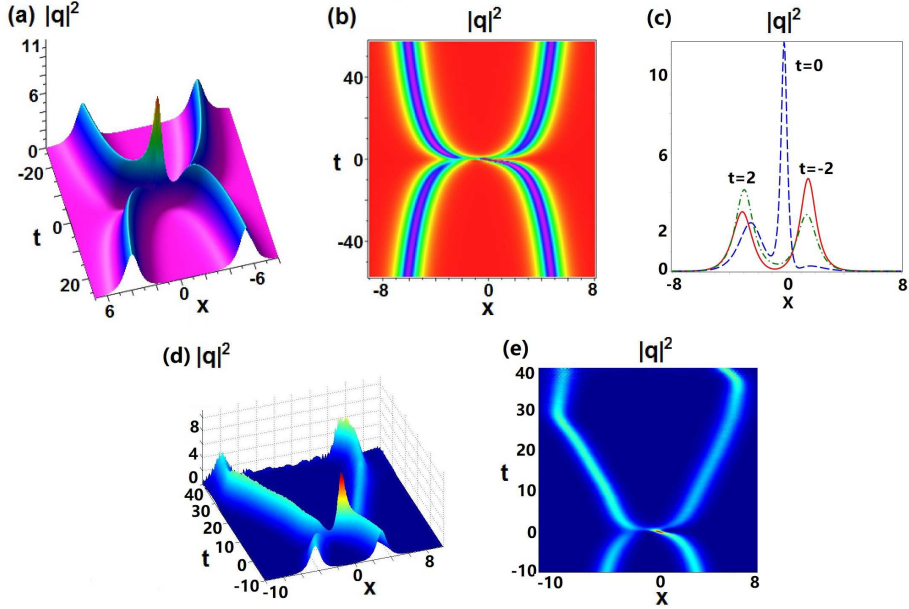}
\vspace{-0.1in}\caption{Double-pole soliton solution of DNLS equation (\ref{DNLS}) with ZBC. (a) 3D profile; (b) intensity profile; (c) profiles for $t=-2$ (solid line), $t=0$ (dashed line), and $t=2$ (dash-dot line); (d)-(e) time evolution of double-pole soliton. Parameters are $N=1, k_1=(1+i)/2, A[k_1]=B[k_1]=1$.}
\label{F-0}
\end{figure}

Combining Eqs.~(\ref{wufanshe0-1}) and (\ref{wufanshe0-2}), we have the following Theorem:

\begin{theorem} The explicit double-pole soliton solutions of DNLS Eq.~(\ref{DNLS}) with ZBC are found by
\begin{gather}
q(x, t)=2i\left(\frac{\mathrm{det}\left(\widehat G\right)}{\mathrm{det}\left(I-\widehat H\right)}\right)^2\frac{\mathrm{det}\left(I-H\right)}{\mathrm{det}\left(G\right)}.
\end{gather}
\end{theorem}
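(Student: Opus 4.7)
The strategy is to eliminate the unknown phase factor $e^{iv_{-}(x,t)}$ that appears in both implicit reconstructions (\ref{wufanshe0-1}) and (\ref{wufanshe0-2}) by dividing and squaring the two formulas against each other. Since both formulas represent the same potential $q(x,t)$, equating them will give a single algebraic relation from which $q$ can be solved explicitly in terms of the four determinants $\det(G)$, $\det(I-H)$, $\det(\widehat{G})$ and $\det(I-\widehat{H})$ alone.

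Concretely, set
\begin{gather*}
F_{1}(x,t):=\frac{\det(G)}{\det(I-H)},\qquad F_{2}(x,t):=\frac{\det(\widehat{G})}{\det(I-\widehat{H})},
\end{gather*}
so that (\ref{wufanshe0-1}) reads $q=2i\,F_{1}\,e^{2iv_{-}}$ while (\ref{wufanshe0-2}) reads $q=2i\,F_{2}\,e^{iv_{-}}$. From the second relation, assuming $F_{2}\neq 0$ (a generic condition that can be checked \emph{a posteriori} on the support of each soliton sector), one reads off
\begin{gather*}
e^{iv_{-}(x,t)}=\frac{q(x,t)}{2i\,F_{2}(x,t)},\qquad e^{2iv_{-}(x,t)}=-\frac{q(x,t)^{2}}{4\,F_{2}(x,t)^{2}}.
\end{gather*}
Substituting the square into $q=2i\,F_{1}\,e^{2iv_{-}}$ produces the purely algebraic equation $q=-\tfrac{i}{2}\,F_{1}F_{2}^{-2}\,q^{2}$, and cancelling one power of $q$ (valid on the open set where $q\not\equiv 0$, and by continuity at $q=0$) yields
\begin{gather*}
q(x,t)=\frac{2i\,F_{2}(x,t)^{2}}{F_{1}(x,t)}=2i\left(\frac{\det(\widehat{G})}{\det(I-\widehat{H})}\right)^{\!2}\frac{\det(I-H)}{\det(G)},
\end{gather*}
which is precisely the stated formula.

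The essentially nontrivial content of the theorem is already packaged in the two prior reconstructions: (\ref{wufanshe0-1}) comes from the closed-form solution of the residue conditions of the regularized Riemann--Hilbert problem in the reflectionless case, whereas (\ref{wufanshe0-2}) is obtained by exploiting the trace formula together with the Jost integral representation $M(x,t;k)=I+k\sum_{n}\bigl(\cdots\bigr)$ to eliminate one factor of $e^{iv_{-}}$ already at the level of the Laurent expansion. Consequently, the only work left in the proof of the theorem is the elementary elimination above. The main obstacle I anticipate is verifying the non-vanishing of the denominators $\det(G)$, $\det(I-\widehat{H})$ and of $F_{2}$ on the spacetime region of interest; this can be handled by observing that $v_{-}(x,t)$ is real, so that $|e^{iv_{-}}|=1$ forces $|F_{1}|=|F_{2}|^{2}$ wherever $q\neq 0$, giving a compatibility identity between the two determinantal representations that rules out spurious zeros outside the standard discrete set of coordinate singularities generic to $N$-soliton formulae.
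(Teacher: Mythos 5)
Your elimination of $e^{iv_-(x,t)}$ by playing (\ref{wufanshe0-1}) and (\ref{wufanshe0-2}) against each other is exactly what the paper does (it simply says ``combining'' the two formulas), and your algebra correctly yields $q=2i\,F_2^2/F_1$, i.e.\ the stated determinant formula. One small slip in your closing aside: since $v_-$ is real and $|e^{iv_-}|=1$, taking moduli of the two reconstructions gives $|F_1|=|F_2|$ rather than $|F_1|=|F_2|^2$, but this does not affect the main argument.
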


For example, we have the single double-pole solution of Eq.~(\ref{DNLS}) with ZBC for parameters $N=1, k_1=\frac{1}{2}\left(1+i\right), A[k_1]=1, B[k_1]=1$ as $q(x,t)=P_1/P_2$ with
\bee
\begin{array}{rl}
P_1=&\!\!\! \frac{i}{4}\Big\{[4(i-1)t^3+[1-2i-2(1+i)x]t^2+[(13i-11)/8+(i-1)x^2+(2-i)x]t+(10-i)/32 \v\\
    &\!\!\!-(1+i)x^3/2+(3+2i)x^2/4-(7+i)x/16]e^{4x}+[(1+i)t^3+((1-i)x-3)t^2/4 \v\\
    &\!\!\!+[(11+5i)/32+(1+i)x^2/4+ix/4]t-7i/128+(1-i)x^3/8-x^2/16-(1+i)x/64]e^{2x} \v\\
   &\!\!\!+[i/2-1/4-(1+i)t+(1-i)x/2]e^{6x}+(1-i)t/64-[1+2(1+i)x]/256\Big\}^2e^{x+it}, \v\\
P_2=&\!\!\! \Big\{[t^4-t^3/2+(2x^2-x+2)t^2/4-(2x^2-x+7)t/16+(x^2-x/2+5/8)(x^2-x/2+1/8)/16]e^{4x} \v\\
  &\!\!\! +(1-8t)e^{2x}/512+(8t-1)e^{6x}/32+e^{8x}/16+1/4096\Big\}\Big\{[4(i-1)t^3+(5+2i-2(1+i)x)t^2\v\\
  &\!\!\! +[(i-1)x^2+ix-(3+11i)/8]t+(2+7i)/32-(1+i)x^3/2+(3+2i)x^2/4-(7+9i)x/16]e^{4x} \v\\
  &\!\!\! +[(1+i)t^3+(1-4i+2(1-i)x)t^2/4+(5(i-1)+8(1+i)x^2-8(2+i)x)t/32+i/128 \v\\
  &\!\!\!   +(1-i)x^3/8-x^2/16+(7-i)x/64]e^{2x}+[(1+i)t+(1-2i)/4+(i-1)x/2]e^{6x}\v\\
  &\!\!\! +(i-1)t/64+[1+2(1+i)x]/256\Big\}.
\end{array}
\ene
Fig.~\ref{F-0} exhibits the dynamical structure of the double-pole soliton. Figs.~\ref{F-0}(a)-(c) exhibit the exact double-pole soliton, which is equivalent to the elastic collisions of two bright-bright solitons. Moreover, we use the exact double-pole soliton at $t=-10$ as the initial condition without a small noise to numerically test the wave propagation such that we find that the wave stably propagates in a short time (e.g., $t\in [-10, 10]$), but after $t>10$, slowly separate, and then from $t=30$, the two bright waves begin to close, that is, the double-pole soliton is unstable even if there is no noise (see Figs.~\ref{F-0}(d)-(e)).

\section{IST with NZBCs and double poles}

Recently, the ISTs of integrable nonlinear systems with NZBCs have attracted more and more attention~\cite{Prinari2006,Demontis2013, Demontis2014,Biondini2014,Prinari2015a, Pichler2017,zhang18}. In this section, we will search a double-pole solution $q(x, t)$ for the DNLS equation (\ref{DNLS}) with $\sigma=-1$ and the NZBCs
\begin{gather}\label{NZBC}
q(x, t)\to q_{\pm},\quad \mathrm{as}\quad x\to \pm\infty,
\end{gather}
by the IST, where $\left|q_{\pm}\right|=q_0\ne 0$.  The ISTs for DNLS equation (\ref{DNLS}) with NZBCs (\ref{NZBC}) were also studied~\cite{Kawata1978, Chen2004, Chen2006, Lashkin2007}, but they only considered the case of simple poles by solving the corresponding Gel'fand-Levitan-Marchenko integral equations. In this section, we try to present the IST with NZBCs and double poles based on another approach, that is, the Riemann-Hilbert problem.

\subsection{Direct scattering with NZBCs and double poles}

\subsubsection{Jost solutions, analyticity, and continuity}

Let $x\to \pm\infty$, we consider the asymptotic scattering problem of the Lax pair (\ref{lax-x}, \ref{lax-t}):
\begin{align} \label{laxn}
\begin{aligned}
\varPhi_x&=X_{\pm}\varPhi, & X_{\pm}&=k\left(ik\sigma_3+Q_{\pm}\right), \\[0.04in]
\varPhi_t&=T_{\pm}\varPhi, &T_{\pm}&=-\left(2k^2-q_0^2\right)X_{\pm},
\end{aligned}
\end{align}
where
\begin{gather*}
Q_{\pm}=
\begin{bmatrix}
0&q_{\pm}\\
-q_{\pm}^*&0
\end{bmatrix}.
\end{gather*}

One can obtain the fundamental matrix solution of Eq.~(\ref{laxn}) as
\begin{align}
\varPhi^{bg}(x, t; k)=
\left\{
\begin{alignedat}{2}
&E_{\pm}(k)\,\mathrm{e}^{i\theta(x, t; k)\sigma_3}, &&k\ne\pm iq_0,\\[0.04in]
&I+\left(x+3\,q_0^2\,t\right)X_{\pm}(k),&\quad&k=\pm iq_0,
\end{alignedat}\right.
\end{align}
where
\begin{align} \label{lp}
E_{\pm}(k)=
\begin{bmatrix}
1&\frac{iq_{\pm}}{k+\lambda} \\[0.05in]
\frac{iq_{\pm}^*}{k+\lambda}&1
\end{bmatrix},\quad
\theta(x, t; k)=k \lambda\left[x-\left(2k^2-q_0^2\right)t\right], \quad \lambda=\sqrt{k^2+q_0^2}.
\end{align}

Since $\lambda(k)$ stands for a two-sheeted Riemann surface, for convenience, taking a uniformization variable:
\begin{align}
z=k+\lambda,
\end{align}
which was first introduced in \cite{Faddeev1987}, we will illustrate the scattering problem on a standard $z$-plane instead of the two-sheeted Riemann surface by the inverse mapping:
\begin{align} \no
k=\frac{1}{2}\left(z-\frac{q_0^2}{z}\right),\quad \lambda=\frac{1}{2}\left(z+\frac{q_0^2}{z}\right).
\end{align}

Define $\Sigma$, $D^+$, and $D^-$ on $z$-plane as
\begin{gather} \no
\Sigma:=\mathbb{R}\cup i\mathbb{R}\backslash\left\{0\right\}, \quad D^+:=\left\{z: \mathrm{Re}\,z\,\mathrm{Im}\,z>0\right\}, \quad D^-:=\left\{z: \mathrm{Re}\,z\,\mathrm{Im}\,z<0\right\}.
\end{gather}
From the mapping relation between $k$-plane and $z$-plane under the uniformization variable, one finds that
\begin{gather}\no
\mathrm{Im}\,(k\lambda)=0 \quad \mathrm{as} \quad z\in\Sigma ; \quad \mathrm{Im}\,(k\lambda)>0\quad\mathrm{as}\quad z\in D^+;\quad  \mathrm{Im}\,(k\lambda)<0\quad\mathrm{as}\quad z\in D^-.
\end{gather}

\begin{figure}[!t]
\centering
\includegraphics[scale=0.35]{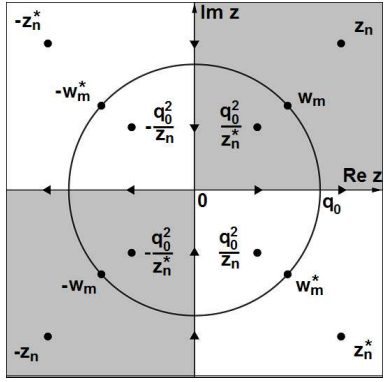}
\caption{Complex $z$-plane, showing the region $D^+$ (grey region), the region $D^-$ (white region), the discrete spectrum and the orientation of the contours for the Riemann-Hilbert problem.}
\label{z-NZBC}
\end{figure}

According to the IST technique, one needs to define the Jost solutions $\varPhi_{\pm}(x, t; z)$ such that
\begin{align}\label{NZBC-Jxjianjin}
\varPhi_{\pm}(x, t; z)=E_{\pm}(z)\,\mathrm{e}^{i\theta(x, t; z)\sigma_3}+o\left(1\right),\quad z\in\Sigma,\quad \mathrm{as}\,\, x\to\pm\infty,
\end{align}
and the modified Jost solutions $\mu_{\pm}(x, t; z)$ via dividing by the asymptotic exponential oscillations
\begin{align}
\mu_{\pm}(x, t; z)=\varPhi_{\pm}(x, t; z)\,\mathrm{e}^{-i\theta(x, t; z)\sigma_3},
\end{align}
such that $\lim_{x\to\pm\infty}\mu_{\pm}(x, t; z)=E_{\pm}(z).$

It follows from Eq.~(\ref{lax-x}) that the modified Jost solutions $\mu_{\pm}(x, t; z)$ satisfy the Jost integral equations
\begin{align}\label{Jost-int}
\mu_{\pm}(x, t; z)=E_{\pm}(z)+\left\{
\begin{aligned}
&k(z)\int_{\pm\infty}^xE_{\pm}(z)\,\mathrm{e}^{ik(z)\lambda(z)(x-y)\widehat\sigma_3}\left[{E_{\pm}^{-1}(z)}\Delta Q_{\pm}(y, t)\,\mu_{\pm}(y, t; z)\right]\,\mathrm{d}y, && z\ne\pm iq_0,\\[0.05in]
&k(z)\int_{\pm\infty}^x\left[I+\left(x-y\right)X_{\pm}(z)\right]\Delta Q_{\pm}(y, t)\,\mu_{\pm}(y, t; z)\,\mathrm{d}y, &&  z=\pm iq_0,
\end{aligned}\right.
\end{align}
which is used to deduce the following analyticity of the (modified) Jost solution, where $e^{\alpha\widehat \sigma_3}[\bm \cdot]:=e^{\alpha\sigma_3}[\bm\cdot]e^{-\alpha\sigma_3}$, $\Delta Q_{\pm}(x, t):=Q(x, t)-Q_{\pm}$.
\begin{proposition}
Suppose $\left(1+\left|x\right|\right)\left(q(x,t)-q_{\pm}\right)\in L^1\left(\mathbb{R^{\pm}}\right)$, then $\varPhi_\pm(x, t; z)$ have the following  properties:
\begin{itemize}
\item Eq.~(\ref{lax-x}) has the unique solution $\varPhi_{\pm}(x, t; z)$ satisfying Eq.~(\ref{NZBC-Jxjianjin}) on $\Sigma$,

\item $\varPhi_{+1, -2}(x, t; z)$  can be analytically extended  to $D^{+}$ and continuously extended to $D^{+}\cup\Sigma$,

\item $\varPhi_{-1, +2}(x, t; z)$ can be analytically extend  to $D^{-}$ and continuously extended to $D^-\cup\Sigma$.
\end{itemize}
\end{proposition}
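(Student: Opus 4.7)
The plan is to work column-by-column on the modified Jost functions $\mu_\pm(x,t;z)$ through the Volterra integral equations (\ref{Jost-int}), since the analyticity and continuity of $\varPhi_\pm$ follow immediately from those of $\mu_\pm$ via $\varPhi_\pm=\mu_\pm e^{i\theta\sigma_3}$. For $z\neq\pm iq_0$, conjugation by $e^{ik\lambda(x-y)\widehat\sigma_3}$ inserts the factors $e^{\pm 2ik\lambda(x-y)}$ into the off-diagonal entries of the kernel. These exponentials remain bounded on the interval of integration precisely when $(x-y)\mathrm{Im}(k\lambda)\leq 0$, which matches the asymmetric setup: $\mu_{+,1}$ and $\mu_{-,2}$ require $\mathrm{Im}(k\lambda)\geq 0$, i.e.\ $z\in D^+\cup\Sigma$, while $\mu_{-,1}$ and $\mu_{+,2}$ require $\mathrm{Im}(k\lambda)\leq 0$, i.e.\ $z\in D^-\cup\Sigma$, in exact agreement with the claimed analyticity domains.

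Given this sign bookkeeping, the standard strategy is a Neumann iteration for each column. For $z\in D^+\cup\Sigma$ away from the branch points, the Volterra kernel acting on $\mu_{+,1}$ is bounded pointwise by $C(z)|\Delta Q_+(y,t)|$ with $C(z)$ locally bounded, so the hypothesis $q-q_+\in L^1(\mathbb{R}^+)$ yields uniform convergence of the iteration on compact subsets of $(D^+\cup\Sigma)\setminus\{\pm iq_0\}$. Each iterate is analytic in $z$ on the interior and continuous up to $\Sigma$, and these properties transfer to the limit by uniform convergence; uniqueness is built into the Volterra fixed-point theorem. The analogous arguments handle the three remaining columns, producing the unique Jost solutions satisfying (\ref{NZBC-Jxjianjin}) on $\Sigma$.

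The main obstacle is what happens at the branch points $z=\pm iq_0$, where $\lambda(z)=0$ and $\det E_\pm(z)=1-|q_\pm|^2/q_0^2$ vanishes, so $E_\pm$ becomes rank-one and $E_\pm^{-1}$ develops a pole; the first form of (\ref{Jost-int}) degenerates there. One must pass to the second form, in which the exponential kernel is replaced by $I+(x-y)X_\pm(z)$. This new kernel grows \emph{linearly} in $x-y$, so the Neumann series for it converges only if $(1+|x|)\,\Delta Q_\pm(x,t)\in L^1(\mathbb{R}^\pm)$, which is precisely the stated hypothesis and explains why the weight $1+|x|$ is imposed. Continuity of $\mu_\pm$ across $\pm iq_0$ is obtained by expanding $e^{ik\lambda(x-y)\widehat\sigma_3}=I+ik\lambda(x-y)\widehat\sigma_3+O((k\lambda)^2)$ together with the matching expansion $E_\pm(z)E_\pm^{-1}(z)=I$, and verifying that the two representations in (\ref{Jost-int}) agree in the limit $z\to\pm iq_0$. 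I expect this matching at the branch points, rather than the interior Neumann estimates, to be the delicate point; everything else is a direct transcription of the ZBC arguments, adapted to the uniformization variable $z$.
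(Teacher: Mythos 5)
Your proposal is correct and follows exactly the route the paper itself relies on: the paper states the Volterra integral equations (\ref{Jost-int}) immediately before the proposition and asserts they are ``used to deduce'' the analyticity, which is precisely your Neumann-iteration argument with the sign analysis of $\mathrm{Im}(k\lambda)$ determining the columnwise domains and the weighted $L^1$ hypothesis controlling the linearly growing kernel at the branch points $z=\pm iq_0$. Your identification of the branch-point matching as the only delicate step is consistent with why the paper splits (\ref{Jost-int}) into two cases.
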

The analyticity and continuity for $\mu_{\pm}(x, t; k)$ follow trivially from those of $\varPhi_{\pm}(x, t; k)$. In the same manner as the case of ZBC in Sec. 2, one can also confirm the Jost solution $\varPhi_{\pm}(x, t; z)$ the simultaneous solution for both parts of the Lax pair (\ref{lax-x},  \ref{lax-t}).

\subsubsection{Scattering matrix, analyticity,  and continuity}

Liouville's formulae implies $\varPhi_{\pm}$ are the fundamental matrix solutions in $z\in \Sigma\backslash\left\{\pm iq_0\right\}$, so one can define the scattering $S(z)=(s_{ij}(x))_{2\times 2}$, which does not depend on variable $x$ and $t$:
\begin{align}
\varPhi_+(x, t; z)=\varPhi_-(x, t; z)\,S(z),\quad z\in \Sigma\backslash\left\{\pm iq_0\right\}.
\end{align}
Then one has the Wronskian representations for the scattering coefficients:
\begin{align}\label{S-lie}
\begin{aligned}
s_{11}(z)=\frac{{\rm Wr}(\varPhi_{+1}(x, t; z), \varPhi_{-2}(x, t; z))}{s(z)},\quad s_{22}(z)=\frac{{\rm Wr}(\varPhi_{-1}(x, t; z), \varPhi_{+2}(x, t; z))}{s(z)},\\[0.05in]
s_{12}(z)=\frac{{\rm Wr}(\varPhi_{+2}(x, t; z), \varPhi_{-2}(x, t; z))}{s(z)},\quad s_{21}(z)=\frac{{\rm Wr}(\varPhi_{-1}(x, t; z), \varPhi_{+1}(x, t; z))}{s(z)},
\end{aligned}
\end{align}
where ${\rm Wr}(\bm\cdot, \bm\cdot)$ denotes the Wronskian determinant and $s(z):=1+\frac{q_0^2}{z^2}$. From these Wronskian representations, one can extend the analytical regions of $s_{11}(z)$ and $s_{22}(z)$.
\begin{proposition}
Suppose $q(x,t)-q_{\pm}\in L^1\left(\mathbb{R^{\pm}}\right)$. Then $s_{11}(z)$ can be analytically extended  to $D^+$ and continuously extended to $D^+\cup\Sigma\backslash\{\pm iq_0\}$, while $s_{22}(z)$ can be analytically extended  to $D^-$ and continuously extended to $D^-\cup\Sigma\backslash\{\pm iq_0\}$. Moreover, both $s_{12}(z)$ and $s_{21}(z)$ are continuous in $z\in \Sigma\backslash\left\{\pm iq_0\right\}$.
\end{proposition}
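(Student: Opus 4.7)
The plan is to read the claimed analyticity and continuity of each $s_{ij}(z)$ directly off the Wronskian representations (\ref{S-lie}), combined with the analyticity of the Jost columns established in the preceding proposition. The only new feature relative to the ZBC case is the scalar factor $s(z)=1+q_0^2/z^2$ in the denominators; since $s(z)$ is meromorphic on $\mathbb{C}\setminus\{0\}$ and vanishes only at $z=\pm iq_0\in\Sigma$, division by $s(z)$ preserves analyticity off these two points and preserves continuity on $\Sigma\setminus\{\pm iq_0\}$. (As a sanity check, $s(z)$ agrees with $\det E_\pm(z)$ computed from (\ref{lp}), which is exactly what one would find by taking $x\to\pm\infty$ limits in the Wronskians using $\mu_\pm\to E_\pm$; this explains why $s(z)$ appears at all.)

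For $s_{11}(z)$, the numerator ${\rm Wr}(\varPhi_{+1}(x,t;z),\,\varPhi_{-2}(x,t;z))$ is a Wronskian of two columns that, by the preceding proposition, both extend analytically to $D^+$ and continuously to $D^+\cup\Sigma$; hence the Wronskian itself is analytic on $D^+$ and continuous on $D^+\cup\Sigma$, and dividing by $s(z)$ gives the claimed analyticity on $D^+$ and continuity on $D^+\cup\Sigma\setminus\{\pm iq_0\}$. The argument for $s_{22}(z)$ is entirely analogous, swapping the roles of $D^+$ and $D^-$ and using the $D^-$-analyticity of $\varPhi_{-1}$ and $\varPhi_{+2}$. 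For $s_{12}(z)$ and $s_{21}(z)$, the corresponding Wronskians mix one column whose analytic extension lives in $D^+$ with one whose extension lives in $D^-$, so as a joint object the Wronskian is only defined where both columns coexist, namely on $\Sigma$ (by the continuous extensions from either side). Combined with removing the zeros of $s(z)$, this yields continuity of $s_{12}$ and $s_{21}$ on $\Sigma\setminus\{\pm iq_0\}$, with no stronger analyticity available.

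The main technical point to watch is the pair of branch points $z=\pm iq_0$, where $\lambda(z)=0$, the matrix $E_\pm(z)$ degenerates, and $s(z)$ vanishes. I would verify that these are the \emph{only} points requiring removal from $\Sigma$: the scalar $s(z)$ has no other zeros, and the Jost solutions themselves remain well-behaved elsewhere on $\Sigma$. The latter is already supplied by the previous proposition, which in turn relies on the weighted $L^1$ hypothesis on $q-q_\pm$ to control the Jost integral equation (\ref{Jost-int}) uniformly up to $\Sigma\setminus\{\pm iq_0\}$. Thus the only genuine difficulty is bookkeeping the two excluded points, and no additional estimate is required beyond assembling the ingredients listed above.
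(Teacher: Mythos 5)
Your proposal is correct and follows the same route the paper takes (the paper merely states that the extensions follow ``from these Wronskian representations,'' which is exactly the argument you spell out: Wronskians of Jost columns inherit the columns' analyticity/continuity, and division by $s(z)=\det E_\pm(z)$ only forces the removal of its zeros $z=\pm iq_0$ from $\Sigma$). The only nuance worth noting is that the paper states this proposition under the weaker hypothesis $q-q_\pm\in L^1$ while the Jost-solution proposition it leans on is stated with the weighted hypothesis $(1+|x|)(q-q_\pm)\in L^1$; your sanity-check paragraph implicitly inherits the same mismatch, but this does not affect the substance of the argument.
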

\begin{proposition}
Suppose $\left(1+\left|x\right|\right)\left(q-q_{\pm}\right)\in L^1\left(\mathbb{R^{\pm}}\right)$. Then $\lambda(z)\,s_{11}(z)$ can be extend analytically to $D^+$ and continuously extended to $D^+\cup\Sigma$ while $\lambda(z)\,s_{22}(z)$ can be analytically extended  to $D^-$ and continuously extended to $D^-\cup\Sigma$. Moreover, both $\lambda(z)\,s_{12}(z)$ and $\lambda(z)\,s_{21}(z)$ are continuous in $\Sigma$.
\end{proposition}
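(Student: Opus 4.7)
The plan is to reduce the claims about $\lambda(z)s_{ij}(z)$ to properties of Jost-column Wronskians (already known on the previous regions away from the branch points), and then to handle the two exceptional points $z=\pm iq_0$ via the degenerate integral equation in (\ref{Jost-int}), whose convergence requires precisely the weighted hypothesis. The first step is the algebraic simplification $\lambda(z)/s(z)=z/2$, obtained from $\lambda(z)=(z^2+q_0^2)/(2z)$ and $s(z)=(z^2+q_0^2)/z^2$. Plugging into (\ref{S-lie}) yields
\begin{equation*}
\lambda(z)\,s_{11}(z) = \tfrac{z}{2}\,{\rm Wr}(\varPhi_{+1},\varPhi_{-2}),\qquad \lambda(z)\,s_{22}(z) = \tfrac{z}{2}\,{\rm Wr}(\varPhi_{-1},\varPhi_{+2}),
\end{equation*}
and analogously for $s_{12},s_{21}$. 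By the previous proposition, each right-hand Wronskian is already analytic on the stated $D^\pm$ and continuous on $D^\pm\cup\Sigma$ with the points $\pm iq_0$ excised, so $\lambda s_{ij}$ inherits those properties there; it remains to show that each relevant Wronskian extends continuously through $z=\pm iq_0$.

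Next I would work at $z_0\in\{\pm iq_0\}$. Since $\det E_\pm(z_0)=s(z_0)=0$, the first form of the Jost integral equation in (\ref{Jost-int}) is unavailable and one must use the second form, whose kernel is $k(z_0)\,[I+(x-y)X_\pm(z_0)]\Delta Q_\pm(y,t)$. The factor $(x-y)$ is what forces the weighted hypothesis $(1+|x|)(q-q_\pm)\in L^1(\mathbb{R}^\pm)$. Running the Neumann iteration in the sup-norm on $(\mp\infty,x]$ with that weighted majorant produces a unique bounded solution $\varPhi_\pm(x,t;z_0)$, and hence well-defined columns whose Wronskians furnish candidate boundary values for $\lambda s_{ij}$ at $\pm iq_0$.

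The main obstacle is to verify that the map $z\mapsto\varPhi_\pm(x,t;z)$ is genuinely continuous across $z_0$, so that the boundary values just constructed agree with the limits of the usual Jost solutions. I would define the kernel $G_\pm(x,y;z):=E_\pm(z)\,e^{ik(z)\lambda(z)(x-y)\widehat\sigma_3}E_\pm^{-1}(z)$ for $z\ne z_0$ and show that the apparent pole of $E_\pm^{-1}$ at $z_0$ (coming from $1/\det E_\pm=1/s(z)$) cancels against the vanishing of $e^{ik\lambda(x-y)\widehat\sigma_3}-I$ as $\lambda\to 0$, leaving
\begin{equation*}
\lim_{z\to z_0}G_\pm(x,y;z)=I+(x-y)X_\pm(z_0).
\end{equation*}
The delicate point is uniformity in $(x,y)$ on the relevant half-line together with an integrable majorant: a Taylor estimate gives $\|G_\pm(x,y;z)\|\le C(1+|x-y|)$ uniformly for $z$ in a neighborhood of $z_0$, and the weighted hypothesis makes this majorant integrable against $\Delta Q_\pm$. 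Dominated convergence applied term by term to the Neumann series then yields $\varPhi_\pm(x,t;z)\to\varPhi_\pm(x,t;z_0)$, from which $\tfrac{z}{2}{\rm Wr}(\varPhi_{+1},\varPhi_{-2})$ continuously extends to $D^+\cup\Sigma$ (analyticity on $D^+$ being inherited from the previous proposition). Repeating the argument verbatim for the three remaining Wronskians concludes the proof.
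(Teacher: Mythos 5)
The paper states this proposition without any proof (it is asserted immediately after the Wronskian representations \eqref{S-lie}, with no argument given), so there is no in-paper route to compare against; your proposal supplies the missing argument, and it is correct and follows the standard line used for NZBC scattering problems (e.g.\ Biondini--Kovacic). The two load-bearing points are both right: (i) the identity $\lambda(z)/s(z)=z/2$ turns $\lambda s_{ij}$ into $\tfrac{z}{2}$ times a Wronskian of Jost columns, so away from $\pm iq_0$ everything is inherited from the earlier analyticity/continuity propositions and the only genuine issue is the branch points; (ii) at $z_0=\pm iq_0$ the matrix $X_\pm(z_0)$ is nilpotent ($X_\pm^2=-k^2\lambda^2 I=0$ there), so $E_\pm e^{ik\lambda(x-y)\sigma_3}E_\pm^{-1}=e^{(x-y)X_\pm}=\cos(k\lambda(x-y))I+\tfrac{\sin(k\lambda(x-y))}{k\lambda}X_\pm$ extends continuously to $I+(x-y)X_\pm(z_0)$ with the uniform majorant $C(1+|x-y|)$, and it is exactly the weighted hypothesis $(1+|x|)(q-q_\pm)\in L^1(\mathbb{R}^\pm)$ that makes this majorant summable against $\Delta Q_\pm$ in the Neumann series. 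One small refinement: at the branch points the Jost solutions are not bounded in $x$ --- the inhomogeneous term is already $I+(x+3q_0^2t)X_\pm(z_0)$ --- so the iteration should be closed in a sup-norm weighted by $1+|x|$ rather than the plain sup-norm; this changes nothing downstream, since the Wronskians at fixed $(x,t)$ are still finite and the dominated-convergence step goes through unchanged. With that caveat your proof is complete and is what the paper implicitly relies on.
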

To solve the Riemann-Hilbert problem in the inverse process, we restrict our consideration to the potential without spectral singularities \cite{Zhou1989}, i.e., $s_{11}(z)s_{22}(z)\ne 0$ in $\Sigma$, such that $1/s_{11}(z)$ and $1/s_{22}(z)$ can extend continuously to $\Sigma$.  The reflection coefficients $\rho(z)$ and $\tilde\rho(z)$ are defined as
\begin{align}\label{fanshe}
\rho(z)=\frac{s_{21}(z)}{s_{11}(z)},\quad
\tilde\rho(z)=\frac{s_{12}(z)}{s_{22}(z)}, \quad z\in\Sigma\backslash\left\{\pm iq_0\right\}.
\end{align}

\subsubsection{Symmetry structures}

The symmetries of $X(x, t; z)$, $T(x, t; z)$, Jost solutions, scattering matrix and reflection coefficients in the case of NZBCs are more complicated than ones in the ZBC.

\begin{proposition}[Symmetry reductions~\cite{Chen2004}] For the case of NZBCs, $X(x, t; z)$ and $T(x, t; z)$ in the Lax pair (\ref{lax-x}, \ref{lax-t}), Jost solutions, scattering matrix and reflection coefficients admit the following three reduction conditions on $z$-plane:
\begin{itemize}
\item The first symmetry reduction
\begin{gather} \no
\begin{gathered}
X(x, t; z)=\sigma_2\,X(x, t; z^*)^*\,\sigma_2, \quad T(x, t; z)=\sigma_2\,T(x, t; z^*)^*\,\sigma_2,\v \\
\varPhi_{\pm}(x, t; z)=\sigma_2\,\varPhi_{\pm}(x, t; z^*)^*\,\sigma_2, \quad S(z)=\sigma_2\,S(z^*)^*\,\sigma_2, \quad \rho(z)=-\tilde\rho(z^*)^*.
\end{gathered}
\end{gather}
\item The second symmetry reduction
\begin{gather} \no
\begin{gathered}
X(x, t; z)=\sigma_1\,X(x, t; -z^*)^*\,\sigma_1,\quad T(x, t; z)=\sigma_1\,T(x, t; -z^*)^*\,\sigma_1,\v\\
\varPhi_{\pm}(x, t; z)=\sigma_1\,\varPhi_{\pm}(x, t; -z^*)^*\sigma_1,\quad S(z)=\sigma_1\,S(-z^*)^*\sigma_1, \quad \rho(z)=\tilde\rho(-z^*)^*.
\end{gathered}
\end{gather}
\item The third symmetry reduction
\begin{gather} \no
\begin{gathered}
X(x, t; z)=X\left(x, t; -\frac{q_0^2}{z}\right),\quad T(x, t; z)=T\left(x, t; -\frac{q_0^2}{z}\right),\v\\
\varPhi_{\pm}(x, t; z)=\frac{i}{z}\,\varPhi_{\pm}\left(x, t; -\frac{q_0^2}{z}\right)\sigma_3Q_{\pm}, \quad S(z)=\left(\sigma_3Q_-\right)^{-1}S\left(-\frac{q_0^2}{z}\right)\sigma_3Q_+,\quad \rho(z)=\frac{q_-^*}{q_-}\,\tilde\rho\left(-\frac{q_0^2}{z}\right).
\end{gathered}
\end{gather}
\end{itemize}
\end{proposition}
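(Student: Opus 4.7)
The plan is to eliminate the implicit factor $e^{iv_-(x,t)}$ from the two determinantal representations (\ref{wufanshe0-1}) and (\ref{wufanshe0-2}) simultaneously, treating $q(x,t)$ and $e^{iv_-(x,t)}$ as the two unknowns of a simple algebraic system whose data are the four determinantal quantities $\det(G)$, $\det(I-H)$, $\det(\widehat G)$, $\det(I-\widehat H)$. Because both (\ref{wufanshe0-1}) and (\ref{wufanshe0-2}) are identities for the same reflectionless potential, we are free to combine them however we like.

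First I would divide (\ref{wufanshe0-1}) by (\ref{wufanshe0-2}). The factor $2i$ and the potential $q(x,t)$ cancel from both sides, and the exponentials reduce as $e^{2iv_-}/e^{iv_-}=e^{iv_-}$, yielding immediately the closed-form identification
\[
e^{iv_-(x,t)}=\frac{\det(I-H)\,\det(\widehat G)}{\det(G)\,\det(I-\widehat H)}.
\]
No spectral integrals or Cauchy projectors appear here: we are in the reflectionless sector and the Riemann--Hilbert integrals have already been dropped. Substituting this expression for $e^{iv_-(x,t)}$ back into (\ref{wufanshe0-2}) multiplies the already-present factor $\det(\widehat G)/\det(I-\widehat H)$ by itself, producing the square $\bigl(\det(\widehat G)/\det(I-\widehat H)\bigr)^{2}$, while the surviving ratio $\det(I-H)/\det(G)$ enters linearly. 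The resulting identity is precisely the formula claimed in the theorem. As a consistency check, squaring (\ref{wufanshe0-2}) and then dividing by (\ref{wufanshe0-1}) eliminates $e^{2iv_-}$ and delivers the same expression directly.

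There is no deep analytic obstacle in this argument; the hard work has already been absorbed into establishing the two determinantal reconstructions, one coming from the Laurent/residue closure of the Riemann--Hilbert problem and the other from the additional use of the trace formula together with the Jost integral equation evaluated at $k=0$. The only point requiring care is bookkeeping: one must confirm that the exponents on $e^{iv_-}$ in the two formulae are genuinely $2iv_-$ in (\ref{wufanshe0-1}) versus $iv_-$ in (\ref{wufanshe0-2}), and that the elimination does not degenerate on a thin set where $\det(G)$ or $\det(I-\widehat H)$ might vanish. Granted these, the theorem follows from the one-line algebraic manipulation above.
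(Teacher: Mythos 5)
Your proposal does not address the statement you were asked to prove. The statement is the symmetry-reduction proposition for the NZBC case: three involutions of the spectral parameter, $z\mapsto z^*$, $z\mapsto -z^*$, and $z\mapsto -q_0^2/z$, under which $X$, $T$, the Jost solutions $\varPhi_\pm$, the scattering matrix $S$, and the reflection coefficients $\rho,\tilde\rho$ transform as stated. What you have written instead is an algebraic elimination of the implicit factor $\mathrm{e}^{iv_-(x,t)}$ from the two determinantal representations (\ref{wufanshe0-1}) and (\ref{wufanshe0-2}) of the reflectionless potential --- that is, an argument for Theorem 1 in the ZBC section. Nothing in your text touches the Pauli-matrix conjugations, the behavior of $k(z)=\tfrac12(z-q_0^2/z)$ and $\lambda(z)=\tfrac12(z+q_0^2/z)$ under the three involutions, the uniqueness of the Jost solutions, or the Wronskian representations of the $s_{ij}$, all of which are the actual content of the proposition.

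To prove the stated proposition one must: (i) verify from the explicit form $X=ik^2\sigma_3+kQ$ and $T=-(2k^2+Q^2)X-ikQ_x\sigma_3$ that the coefficient matrices satisfy the claimed conjugation identities, using that $k(z^*)=k(z)^*$, $k(-z^*)=-k(z)^*$, and $k(-q_0^2/z)=k(z)$ while $\lambda(-q_0^2/z)=-\lambda(z)$; (ii) observe that, e.g., $\sigma_2\varPhi_\pm(x,t;z^*)^*\sigma_2$ then solves the same Lax pair and, after checking how $E_\pm(z)$ and $\theta(x,t;z)$ transform, has the same asymptotics (\ref{NZBC-Jxjianjin}) as $x\to\pm\infty$, so uniqueness of the Jost solutions forces equality (for the third symmetry the asymptotic comparison produces the extra factor $\tfrac{i}{z}\sigma_3Q_\pm$ rather than the identity); (iii) propagate each identity through $\varPhi_+=\varPhi_-S$ to get the symmetry of $S(z)$, and then read off the symmetry of $\rho=s_{21}/s_{11}$ and $\tilde\rho=s_{12}/s_{22}$ entrywise. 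None of these steps appear in your write-up, so as a proof of the proposition it is vacuous; as a proof of Theorem 1 it is essentially the right one-line elimination, but that is not the task at hand.
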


\subsubsection{Discrete spectrum with double poles}

The previous works on DNLS equation with NZBCs focused on the single poles of the scattering coefficients~\cite{Kawata1978, Chen2004, Chen2006, Lashkin2007}. We here consider the case of $s_{11}(z)$ with {\it double zeros}. We suppose that $s_{11}(z)$ has $N_1$ {\it double zeros} in $\left\{z\in\mathbb{C}\,|\,\mathrm{Re} \,z>0, \mathrm{Im}\, z>0, \left|z\right|>q_0\right\}$ denoted by $z_n$, and $N_2$ {\it double zeros} in $\left\{z=q_0\,\mathrm{e}^{i\phi}: 0<\phi<\frac{\pi}{2}\right\}$ denoted by $w_m$, that is, $s_{11}(z_0)=s'_{11}(z_0)=0$ and $s''_{11}(z_0)\not=0$ if $z_0$ is a double zero of $s_{11}(z)$. From the symmetries of the scattering matrix, the discrete spectrum is the set
\begin{align}
Z=\left\{\pm z_n,\, \pm z_n^*, \, \pm\frac{q_0^2}{z_n},\, \pm\frac{q_0^2}{z_n^*}\right\}_{n=1}^{N_1}
\bigcup\Big\{\pm w_m, \,\pm w_m^*\Big\}_{m=1}^{N_2},
\end{align}
whose distributions are displayed in Fig. \ref{z-NZBC}.

For convenience, let
\begin{align}
b[z_0]=\left\{
\begin{aligned}
\frac{\varPhi_{+1}(x, t; z_0)}{\varPhi_{-2}(x, t; z_0)}, \quad z_0\in Z\cap D^+,\\[0.05in]
\frac{\varPhi_{+2}(x, t; z_0)}{\varPhi_{-1}(x, t; z_0)}, \quad z_0\in Z\cap D^-,
\end{aligned}\right.\qquad
d[z_0]=\left\{
\begin{aligned}
\frac{\varPhi_{+1}'(x, t; z_0)-b[z_0]\,\varPhi_{-2}'(x, t; z_0)}{\varPhi_{-2}(x, t; z_0)},\quad z_0\in Z\cap D^+,\\[0.05in]
\frac{\varPhi_{+2}'(x, t; z_0)-b[z_0]\,\varPhi_{-1}'(x, t; z_0)}{\varPhi_{-1}(x, t; z_0)},\quad z_0\in Z\cap D^-.
\end{aligned}\right.
\end{align}
\begin{align}
A[z_0]=\left\{
\begin{aligned}
\frac{2\,b[z_0]}{s_{11}''(z_0)},\quad z_0\in Z\cap D^+,\\[0.05in]
\frac{2\,b[z_0]}{s_{22}''(z_0)},\quad z_0\in Z\cap D^-,
\end{aligned}\right. \qquad
B[z_0]=\left\{
\begin{aligned}
\frac{d[z_0]}{b[z_0]}-\frac{s_{11}'''(z_0)}{3\,s_{11}''(z_0)},\quad z_0\in Z\cap D^+,\\[0.05in]
\frac{d[z_0]}{b[z_0]}-\frac{s_{22}'''(z_0)}{3\,s_{22}''(z_0)},\quad z_0\in Z\cap D^-.
\end{aligned}\right.
\end{align}
where $\frac{\bm\cdot}{\bm\cdot}$ denotes the proportionality coefficient. Then one can pose the following compact forms:
\begin{align}\label{erjieliu}
\begin{aligned}
\mathop\mathrm{P_{-2}}\limits_{z=z_0}\left[\frac{\varPhi_{+1}(x, t; z)}{s_{11}(z)}\right]&=A[z_0]\,\varPhi_{-2}(x, t; z_0), \quad z_0\in Z\cap D^+,\\[0.05in]
\mathop\mathrm{P_{-2}}\limits_{z=z_0}\left[\frac{\varPhi_{+2}(x, t; z)}{s_{22}(z)}\right]&=A[z_0]\,\varPhi_{-1}(x, t; z_0), \quad z_0\in Z\cap D^-,\\[0.05in]
\mathop\mathrm{Res}\limits_{z=z_0}\left[\frac{\varPhi_{+1}(x, t; z)}{s_{11}(z)}\right]&=A[z_0]\left[\varPhi_{-2}'(x, t; z_0)+B[z_0]\, \varPhi_{-2}(x, t; z_0)\right], \quad z_0\in Z\cap D^+,\\[0.05in]
\mathop\mathrm{Res}\limits_{z=z_0}\left[\frac{\varPhi_{+2}(x, t; z)}{s_{22}(z)}\right]&=A[z_0]\left[\varPhi_{-1}'(x, t; z_0)+B[z_0]\, \varPhi_{-1}(x, t; z_0)\right], \quad z_0\in Z\cap D^-.
\end{aligned}
\end{align}
where $\mathop\mathrm{P_{-2}}\limits_{z=z_0}\left[\bm\cdot\right]$ stands for the coefficient of $O\left(\frac{1}{\left(z-z_0\right)^2}\right)$ term in the Laurent expansion of $\bm\cdot$ at $z=z_0$.

\begin{proposition}
Given $z_0\in Z$, three symmetry relations for $A[z_0]$ and $B[z_0]$ are given as
\begin{itemize}
\item The first symmetry relation
\begin{align} \no
A[z_0]=-A[z_0^*]^*, \quad B[z_0]=B[z_0^*]^*.
\end{align}
\item The second symmetry relation
\begin{align}\no
A[z_0]=A[-z_0^*]^*, \quad B[z_0]=-B[-z_0^*]^*.
\end{align}
\item The third symmetry relation
\begin{align}\no
A[z_0]=\frac{z_0^4\,q_-^*}{q_0^4\,q_-}\,A\left[-\frac{q_0^2}{z_0}\right],\quad B[z_0]=\frac{q_0^2}{z_0^2}\,B\left[-\frac{q_0^2}{z_0}\right]+\frac{2}{z_0}.
\end{align}
\end{itemize}
\end{proposition}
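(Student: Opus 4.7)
The plan is to prove each of the three symmetries by the same three-step template. Since each of the maps $z_0\mapsto z_0^*$, $z_0\mapsto -z_0^*$, $z_0\mapsto -q_0^2/z_0$ interchanges $D^+$ and $D^-$, and the definitions of $b,d,A,B$ are symmetric between the two cases, it suffices to start with $z_0\in Z\cap D^+$ and read off the $D^-$ statement by inverting the relation. Step (i) reads off the columnwise action of the Jost-solution symmetry to rewrite $\varPhi_{\pm,j}(z_0)$ and $\varPhi'_{\pm,j}(z_0)$ in terms of Jost quantities at the mapped point, and then takes ratios to obtain transformation laws for $b[z_0]$ and $d[z_0]$. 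Step (ii) applies Schwarz reflection (or the chain rule, in the third case) to the scalar identity $s_{11}(z)=c\cdot s_{22}(\phi(z))$ coming from the $(1,1)$-entry of the scattering-matrix symmetry, producing transformation laws for $s_{11}''(z_0)$ and $s_{11}'''(z_0)/s_{11}''(z_0)$ that use the fact that $s_{22}$ vanishes to second order at $\phi(z_0)$. Step (iii) substitutes these into $A=2b/s''$ and $B=d/b-s'''/(3s'')$.

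The first two reductions are essentially mechanical. For the first reduction, the column form reads $\varPhi_{\pm,1}(z)=i\sigma_2\varPhi_{\pm,2}(z^*)^*$ and $\varPhi_{\pm,2}(z)=-i\sigma_2\varPhi_{\pm,1}(z^*)^*$, which immediately gives $b[z_0]=-b[z_0^*]^*$ and, after the same cross-cancellation used to define $d$, also $d[z_0]=-d[z_0^*]^*$. Combined with $s_{22}^{(n)}(z_0^*)=s_{11}^{(n)}(z_0)^*$ for $n=2,3$ (Schwarz reflection, since $s_{22}(z)=s_{11}(z^*)^*$), this yields $A[z_0]=-A[z_0^*]^*$ and $B[z_0]=B[z_0^*]^*$. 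The second reduction proceeds identically with $\sigma_1,-z^*$ in place of $\sigma_2,z^*$; the one new feature is that differentiating $f(-z^*)^*$ contributes an extra minus sign, which flips the parity in $n$ and produces $s_{22}''(-z_0^*)=s_{11}''(z_0)^*$ but $s_{22}'''(-z_0^*)=-s_{11}'''(z_0)^*$. These sign flips compose into $A[z_0]=A[-z_0^*]^*$ and $B[z_0]=-B[-z_0^*]^*$.

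The third reduction is the main obstacle because $\phi(z):=-q_0^2/z$ is genuinely non-involutive in the derivative sense, with $\phi''$ and $\phi'''$ both non-zero, so the chain rule produces lower-order contaminations that are precisely what generate the extra additive term in the $B$ identity. From the column version $\varPhi_{\pm,1}(z)=(i/z)q_\pm^*\varPhi_{\pm,2}(\phi(z))$ and its counterpart for the second column, one obtains $b[z_0]=(q_+^*/q_-)\,b[\phi(z_0)]$ at once. Differentiating and evaluating at $z_0$ produces two terms in $\varPhi'_{\pm,j}(z_0)$; the piece not involving $\varPhi'_{\pm,j}(\phi(z_0))$ drops out of the numerator of $d[z_0]$ by virtue of $\varPhi_{+,2}(\phi(z_0))=b[\phi(z_0)]\,\varPhi_{-,1}(\phi(z_0))$, leaving the clean identity $d[z_0]/b[z_0]=(q_0^2/z_0^2)\,d[\phi(z_0)]/b[\phi(z_0)]$. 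For the scattering matrix I will apply the chain rule to $s_{11}(z)=(q_-q_+^*/q_0^2)\,s_{22}(\phi(z))$ up to third order at $z_0$; because $s_{22}$ vanishes to second order at $\phi(z_0)$, only the top-order term survives in $s_{11}''(z_0)=(q_-q_+^*q_0^2/z_0^4)\,s_{22}''(\phi(z_0))$, while the third-order expansion picks up exactly one surviving $\phi''(z_0)$-term and gives $s_{11}'''(z_0)/[3s_{11}''(z_0)]=(q_0^2/z_0^2)\,s_{22}'''(\phi(z_0))/[3s_{22}''(\phi(z_0))]-2/z_0$. Substituting into $A=2b/s''$ and using $q_-q_-^*=q_0^2$ to rewrite $q_+^*/(q_-q_+^*)$ as $q_-^*/(q_-q_0^2)$ gives $A[z_0]=(z_0^4 q_-^*/(q_0^4 q_-))\,A[\phi(z_0)]$; the $-2/z_0$ defect in the $s''',s''$ ratio then combines with the $(q_0^2/z_0^2)$ factor in $d/b$ to produce the additive $+2/z_0$ in $B[z_0]=(q_0^2/z_0^2)\,B[\phi(z_0)]+2/z_0$.
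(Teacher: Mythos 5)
Your proposal is correct: all three reductions check out when the column-wise forms of the Jost-solution symmetries and the $(1,1)$-entries of the scattering-matrix symmetries are worked through, and your chain-rule computation for the non-involutive map $\phi(z)=-q_0^2/z$ (using that $s_{22}$ and $s_{22}'$ both vanish at $\phi(z_0)$, together with $\phi''(z_0)/\phi'(z_0)=-2/z_0$ and $q_-q_-^*=q_0^2$) correctly produces the factor $z_0^4q_-^*/(q_0^4q_-)$ in the $A$ relation and the additive defect $+2/z_0$ in the $B$ relation. The paper states this proposition without proof, so there is no argument to compare against, but your route is the natural one the authors implicitly rely on (the analogues of their ZBC-case symmetry relations extended by the third, non-involutive symmetry); the only blemish is the parenthetical algebraic rewriting of the constant in the $A$ step, which is imprecisely phrased but leads to the correct coefficient.
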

\begin{proposition}
For $n=1, 2, \cdots, N_1$ and $m=1, 2, \cdots, N_2$, one has
\begin{gather}\begin{gathered}
\begin{aligned}
&A[z_n]=-A[z_n^*]^*=-A[-z_n]=A[-z_n^*]^*  \qquad\qquad\qquad\\[0.05in]
&\qquad\,\, =\frac{z_n^4q_-^*}{q_0^4q_-}\,A\left[-\frac{q_0^2}{z_n}\right]=\frac{z_n^4q_-^*}{q_0^4q_-}\,A\left[\frac{q_0^2}{z_n^*}\right]^*
=-\frac{z_n^4q_-^*}{q_0^4q_-}\,A\left[-\frac{q_0^2}{z_n^*}\right]^*=-\frac{z_n^4q_-^*}{q_0^4q_-}\,A\left[\frac{q_0^2}{z_n}\right],
\end{aligned} \\[0.05in]
 A[w_m]=-A[w_m^*]^*=A[-w_m^*]^*=-A[-w_m]=\frac{w_m^4q_-^*}{q_0^4q_-}A[w_m]^*,\qquad\qquad\qquad\qquad\qquad
  \\[0.05in]
\begin{aligned}
B[z_n]=B[z_n^*]^*=&-B[-z_n^*]=-B[-z_n]=\frac{q_0^2}{z_n^2}\,B\left[-\frac{q_0^2}{z_n}\right]+\frac{2}{z_n}
        =\frac{q_0^2}{z_n^2}\,B\left[-\frac{q_0^2}{z_n^*}\right]^*+\frac{2}{z_n}\\[0.05in]
={}&-\frac{q_0^2}{z_n^2}\,B\left[\frac{q_0^2}{z_n^*}\right]^*+\frac{2}{z_n}=-\frac{q_0^2}{z_n^2}\,B\left[\frac{q_0^2}{z_n}\right]+\frac{2}{z_n},
\end{aligned}\\[0.05in]
B[w_m]=B[w_m^*]^*=-B[-w_m^*]^*=-B[-w_m]=-\frac{q_0^2}{w_m^2}B[w_m]^*+\frac{2}{w_m}.\qquad\qquad\qquad\qquad
\end{gathered}\end{gather}
\end{proposition}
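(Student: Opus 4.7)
The plan is to derive every identity by successively applying the three symmetry relations for $A[z_0]$ and $B[z_0]$ established in the immediately preceding proposition, together with the arc identity $w_m w_m^* = q_0^2$ whenever the base point lies on the circle $|z| = q_0$. No new analytic input is required; the content of the statement is a careful bookkeeping of how every member of the discrete spectrum $Z$ is obtained from $z_n$ (or $w_m$) by compositions of the three involutions $z\mapsto z^*$, $z\mapsto -z^*$, and $z\mapsto -q_0^2/z$, and how the associated equality for $A$ or $B$ transforms under each such step.

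For the $A[z_n]$ chain, the first symmetry at $z_0=z_n$ gives $A[z_n]=-A[z_n^*]^*$. The second symmetry at $z_0=-z_n$ produces $A[-z_n]=A[z_n^*]^*$, which combined with the previous identity yields $A[-z_n]=-A[z_n]$ and, by conjugation, $A[-z_n^*]^*=A[z_n]$. The third symmetry at $z_0=z_n$ gives directly $A[z_n]=(z_n^4 q_-^*/q_0^4 q_-)\,A[-q_0^2/z_n]$. Applying the first or second symmetry to the reflected point $-q_0^2/z_n$ rewrites $A[-q_0^2/z_n]$ successively as $A[q_0^2/z_n^*]^*$, $-A[-q_0^2/z_n^*]^*$, and $-A[q_0^2/z_n]$, which account for the remaining four equalities. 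The prefactor $z_n^4 q_-^*/(q_0^4 q_-)$ is preserved under these substitutions because conjugation acts trivially on $|z_n|^4$ and the ratio $q_-^*/q_-$ is flipped by the explicit complex conjugates already present in the first and second symmetries.

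The arc case $\{w_m\}$ proceeds identically except for the simplification $-q_0^2/w_m=-w_m^*$, which is forced by $|w_m|=q_0$. This collapses the third symmetry into the self-referential relation $A[w_m]=(w_m^4 q_-^*/q_0^4 q_-)\,A[-w_m^*]$, and substituting $A[-w_m^*]=A[w_m]^*$ from the second symmetry yields the stated identity $A[w_m]=(w_m^4 q_-^*/q_0^4 q_-)\,A[w_m]^*$. The chains for $B[z_n]$ and $B[w_m]$ are derived by precisely the same procedure; the only new feature is that the third symmetry for $B$ is affine, carrying the inhomogeneous term $+2/z_0$, so one must verify that substituting $B[-q_0^2/z_n]$ by $B[-q_0^2/z_n^*]^*$, $-B[q_0^2/z_n^*]^*$, or $-B[q_0^2/z_n]$ via the first or second symmetry leaves the shift $2/z_n$ unchanged. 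That is immediate because those two symmetries act on the argument of $B$ but not on the shift itself, which is attached to $z_n$ rather than to the transformed point.

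The main obstacle is purely notational: the first and second symmetries carry opposite sign conventions for $A$ versus $B$ (conjugation without sign for one and conjugation with sign for the other), and mis-tracking a single sign would propagate through the whole chain. I would organize the verification as a short table listing, for each target point $z_*$ appearing on the right-hand side of the proposition, which ordered composition of the three basic symmetries maps $z_n$ or $w_m$ to $z_*$, together with the induced conjugation/sign pattern and (for the $B$-case) the accumulated inhomogeneous shift. Reading the table line by line then produces each displayed identity without further calculation.
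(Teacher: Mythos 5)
Your proposal is correct and follows the same route the paper intends: the proposition is a direct corollary of the three symmetry relations for $A[z_0]$ and $B[z_0]$ in the preceding proposition, obtained by composing the involutions $z\mapsto z^*$, $z\mapsto -z^*$, $z\mapsto -q_0^2/z$ and using $w_m^*=q_0^2/w_m$ on the circle $|z|=q_0$, exactly as you describe. The only discrepancy your table would surface is in the paper's own display, where the term $-B[-z_n^*]$ should carry an outer complex conjugate ($-B[-z_n^*]^*$, as the second symmetry and the analogous $B[w_m]$ line both indicate); your derivation yields the correct conjugated form.
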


\subsubsection{Asymptotic behaviors}

To propose and solve the Riemann-Hilbert problem in the following inverse problem, one has to find the asymptotic behaviors of the modified Jost solutions and scattering matrix as $z\to\infty$ and $z\to 0$, which differ from the case of ZBC. The standard Wentzel-Kramers-Brillouin (WKB) expansions are used to derive the asymptotic behaviors of the modified Jost solutions. We consider the following ansatz for the expansions of the modified Jost solutions $\mu_{\pm}(x, t; z)$ as $z\to\infty$ and $z\to 0$
\begin{gather}
\begin{gathered}
\mu_{\pm}(x, t; z)=\sum_{j=0}^n\frac{\mu_{\pm}^{[j]}(x, t)}{z^j}+O\left(\frac{1}{z^{n+1}}\right), \quad \mathrm{as}\quad z\to\infty, \\[0.02in]
\mu_{\pm}(x, t; z)=\sum_{j=-1}^n\mu_{\pm}^{[j]}(x, t)\,z^{j}+O\left(z^{n+1}\right), \quad \mathrm{as}\quad z\to 0,
\end{gathered}
\end{gather}
and substituting $\varPhi_{\pm}(x, t; k)=\mu_{\pm}(x, t; z)\,\mathrm{e}^{i\theta(x, t; k)\sigma_3}$ with these expansions into Eq. (\ref{lax-x}). By matching the $O\left(z^2\right)$, $O\left(z\right)$ and $O\left(1\right)$ terms as $z\to\infty$, and $O\left(z^{-3}\right)$, $O\left(z^{-2}\right)$ and $O\left(z^{-1}\right)$ terms as $z\to 0$, and
combining with the asymptotic behaviors of the modified Jost soutions $\mu_{\pm}(x, t; k)$ as $x\to\pm\infty$, one deduces the asymptotic behaviors as $z\to\infty$ and $z\to 0$.
\begin{proposition}
The asymptotic behaviors for the modified Jost solutions are given as
\begin{gather}
\begin{gathered}
\mu_{\pm}(x, t; z)=\mathrm{e}^{iv_{\pm}(x, t)\sigma_3}+O\left(\frac{1}{z}\right),\quad z\to\infty,\qquad \\
\mu_{\pm}(x, t; z)=\frac{i}{z}\,\mathrm{e}^{iv_{\pm}(x, t)\sigma_3}\sigma_3\,Q_{\pm}+O\left(1\right),\quad z\to0,
\end{gathered}
\end{gather}
where
\begin{gather}
v_{\pm}(x, t)=\frac{1}{2}\int_{\pm\infty}^x\left(\left| q(y, t)\right|^2-q_0^2\right)\mathrm{d}y.
\end{gather}
\end{proposition}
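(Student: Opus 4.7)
The plan is to carry out the WKB matched-asymptotics procedure outlined just before the proposition, keeping track of both expansion points $z=\infty$ and $z=0$. Substituting $\varPhi_\pm=\mu_\pm\,\mathrm{e}^{i\theta\sigma_3}$ into the $x$-part (\ref{lax-x}) of the Lax pair and using $\theta_x=k\lambda$ converts it into the matrix ODE
\begin{gather*}
\mu_{\pm,x}=ik^2\sigma_3\mu_\pm-ik\lambda\,\mu_\pm\sigma_3+kQ\,\mu_\pm.
\end{gather*}
Under the uniformization $k=(z-q_0^2/z)/2$, $\lambda=(z+q_0^2/z)/2$, the products $k^2=z^2/4-q_0^2/2+q_0^4/(4z^2)$, $k\lambda=z^2/4-q_0^4/(4z^2)$ and $kQ=(z/2-q_0^2/(2z))\,Q$ admit convergent Laurent expansions at both $z=\infty$ and $z=0$, so the same matching mechanism applies in each limit.

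For $z\to\infty$ I would insert $\mu_\pm=\sum_{j\ge 0}\mu_\pm^{[j]}(x,t)\,z^{-j}$ and equate powers. At $O(z^2)$ the equation forces $[\sigma_3,\mu_\pm^{[0]}]=0$, so $\mu_\pm^{[0]}$ is diagonal. At $O(z)$ it gives $[\sigma_3,\mu_\pm^{[1]}]=2iQ\mu_\pm^{[0]}$, determining $\mu_\pm^{[1],\mathrm{off}}=i\sigma_3Q\mu_\pm^{[0]}$. At $O(1)$, combining the contribution $(1/2)Q\mu_\pm^{[1],\mathrm{off}}$ with the subleading $-q_0^2/2$ piece of $k^2$ and using the identity $Q\sigma_3Q=|q|^2\sigma_3$, the diagonal part collapses to the scalar ODE
\begin{gather*}
\mu_{\pm,x}^{[0]}=\frac{i}{2}\bigl(|q|^2-q_0^2\bigr)\sigma_3\,\mu_\pm^{[0]}.
\end{gather*}
Integrating from $\pm\infty$ and invoking $\mu_\pm\to E_\pm(z)\to I$ as $x\to\pm\infty$ at large $z$ fixes the constant and yields $\mu_\pm^{[0]}=\mathrm{e}^{iv_\pm(x,t)\sigma_3}$, which is the first asymptotic claim.

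For $z\to 0$ I would invoke the third (uniformization) symmetry $\varPhi_\pm(x,t;z)=(i/z)\,\varPhi_\pm(x,t;-q_0^2/z)\sigma_3Q_\pm$ already established in the preceding proposition. A short computation using $\theta(x,t;-q_0^2/z)=-\theta(x,t;z)$ together with the identity $\mathrm{e}^{-i\theta\sigma_3}(\sigma_3Q_\pm)\mathrm{e}^{-i\theta\sigma_3}=\sigma_3Q_\pm$ (valid because $\sigma_3Q_\pm$ is purely off-diagonal) collapses the symmetry into the elementary relation
\begin{gather*}
\mu_\pm(x,t;z)=\frac{i}{z}\,\mu_\pm(x,t;-q_0^2/z)\,\sigma_3Q_\pm.
\end{gather*}
Substituting the just-proved large-argument expansion of $\mu_\pm(x,t;-q_0^2/z)$ on the right-hand side reproduces the second claim with leading coefficient $(i/z)\mathrm{e}^{iv_\pm(x,t)\sigma_3}\sigma_3Q_\pm$; this is consistent with $E_\pm(z)=I+(i/z)\sigma_3Q_\pm$ at $x\to\pm\infty$, where $v_\pm$ vanishes. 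As a cross-check, one can also rerun the matching directly with the ansatz $\mu_\pm=\mu_\pm^{[-1]}/z+\mu_\pm^{[0]}+O(z)$ at orders $O(z^{-3}),O(z^{-2}),O(z^{-1})$, where the $O(z^{-3})$ equation now reads $\{\sigma_3,\mu_\pm^{[-1]}\}=0$ forcing $\mu_\pm^{[-1]}$ off-diagonal (mirroring the diagonality of $\mu_\pm^{[0]}$ at infinity).

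The principal technical obstacle is the bookkeeping at the order producing $v_\pm$: several subleading $q_0^2$-pieces of $k^2$, $k\lambda$, and $kQ$ cross-contribute at $O(1)$ in the large-$z$ matching, and one must check that they combine so that the driving term of the diagonal ODE is exactly $(|q|^2-q_0^2)/2$ rather than $|q|^2/2$. This cancellation is precisely what renders the integral $\int_{\pm\infty}^{x}(|q|^2-q_0^2)\,\mathrm{d}y$ convergent under the NZBCs and is the substantive point where the present analysis departs from the ZBC calculation in Section~II. The symmetry route to the $z=0$ expansion bypasses a second pass of the same bookkeeping and is the most economical part of the proof.
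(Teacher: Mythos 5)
Your proposal is correct, and its two halves relate differently to the paper's argument. For $z\to\infty$ you reproduce exactly the paper's WKB matching: the orders $O(z^2)$, $O(z)$, $O(1)$ give diagonality of $\mu_\pm^{[0]}$, the off-diagonal formula $\mu_\pm^{[1],\mathrm{off}}=i\sigma_3Q\mu_\pm^{[0]}$, and (via $Q\sigma_3Q=|q|^2\sigma_3$ together with the $-q_0^2/2$ subleading piece of $k^2$) the diagonal ODE $\mu^{[0]}_{\pm,x}=\tfrac{i}{2}(|q|^2-q_0^2)\sigma_3\mu^{[0]}_\pm$, integrated against the boundary data $\mu_\pm\to E_\pm=I+O(1/z)$; your emphasis on the cancellation that produces $|q|^2-q_0^2$ rather than $|q|^2$ is exactly the substantive point. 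For $z\to 0$, however, you depart from the paper: the paper reruns the WKB matching at the orders $O(z^{-3})$, $O(z^{-2})$, $O(z^{-1})$ with the ansatz starting at $\mu_\pm^{[-1]}/z$, whereas you derive the reduced symmetry $\mu_\pm(x,t;z)=\tfrac{i}{z}\,\mu_\pm(x,t;-q_0^2/z)\,\sigma_3Q_\pm$ from the third symmetry reduction (using $k(-q_0^2/z)=k(z)$, $\lambda(-q_0^2/z)=-\lambda(z)$, hence $\theta(-q_0^2/z)=-\theta(z)$, and the off-diagonality of $\sigma_3Q_\pm$) and simply transport the large-argument expansion to $z=0$. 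Your route buys economy and makes the consistency with $E_\pm(z)=I+\tfrac{i}{z}\sigma_3Q_\pm$ transparent, at the cost of relying on the earlier symmetry proposition; the paper's direct re-matching is self-contained and does not presuppose that the third symmetry has been established for the Jost solutions themselves. Since you also indicate the direct $O(z^{-3})$ matching as a cross-check (correctly identifying the anticommutator condition $\{\sigma_3,\mu_\pm^{[-1]}\}=0$), both derivations are available in your write-up.
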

The asymptotic behavior of the scattering matrix can be yielded by Wronskian presentations of scattering matrix and the asymptotic behaviors of the modified Jost solutions.
\begin{proposition}
The asymptotic behaviors of the scattering matrix are
\begin{gather}
\begin{gathered}
S(z)=\mathrm{e}^{-iv_0\sigma_3}+O\left(\frac{1}{z}\right),  \quad z\to\infty, \quad\qquad\qquad \\
S(z)=\mathrm{diag}\left(\frac{q_-}{q_+}, \, \frac{q_+}{q_-}\right)\mathrm{e}^{iv_0\sigma_3}+O\left(z\right),\quad z\to 0,
\end{gathered}
\end{gather}
where $v_0$ is a constant given by
\begin{gather}
v_0=\frac{1}{2}\int_{-\infty}^{+\infty}\left(\left| q(y, t)\right|^2-q_0^2\right)\,\mathrm{d}y.
\end{gather}
\end{proposition}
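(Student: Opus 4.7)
The plan is to derive both asymptotic expansions directly from the Wronskian representations (\ref{S-lie}) by substituting in the asymptotic behavior of the modified Jost solutions from the preceding proposition. Since $\varPhi_{\pm 1}=e^{i\theta}\mu_{\pm 1}$ and $\varPhi_{\pm 2}=e^{-i\theta}\mu_{\pm 2}$, the oscillatory exponentials cancel in the diagonal Wronskians $\mathrm{Wr}(\varPhi_{+1},\varPhi_{-2})$ and $\mathrm{Wr}(\varPhi_{-1},\varPhi_{+2})$, so only the polynomial-in-$z$ behavior of the $\mu_\pm$-columns governs the leading asymptotics of $s_{11}$ and $s_{22}$. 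In the off-diagonal Wronskians, factors $e^{\pm 2i\theta}$ survive, but the leading columns of $\mu_\pm$ will turn out to be parallel, so those Wronskians vanish at leading order in both limits.

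For $z\to\infty$, I would insert $\mu_\pm=e^{iv_\pm\sigma_3}+O(1/z)$ and compute $\mathrm{Wr}(\mu_{+1},\mu_{-2})=e^{i(v_+-v_-)}$, $\mathrm{Wr}(\mu_{-1},\mu_{+2})=e^{-i(v_+-v_-)}$; since $s(z)=1+q_0^2/z^2\to 1$, the formulas in (\ref{S-lie}) give $s_{11}\to e^{i(v_+-v_-)}$ and $s_{22}\to e^{-i(v_+-v_-)}$. The splitting
\[
v_--v_+=\tfrac{1}{2}\!\int_{-\infty}^{x}(|q|^2-q_0^2)\,dy+\tfrac{1}{2}\!\int_{x}^{+\infty}(|q|^2-q_0^2)\,dy=v_0
\]
then produces $s_{11}\to e^{-iv_0}$, $s_{22}\to e^{iv_0}$, yielding $S(z)=e^{-iv_0\sigma_3}+O(1/z)$. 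The off-diagonal entries are $O(1/z)$ because their leading Wronskians $\mathrm{Wr}(\mu_{+2},\mu_{-2})$ and $\mathrm{Wr}(\mu_{-1},\mu_{+1})$ involve two columns with the same leading diagonal structure, hence vanish to leading order.

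For $z\to 0$, the expansion $\mu_\pm\sim(i/z)\,e^{iv_\pm\sigma_3}\sigma_3 Q_\pm$ gives every column of $\mu_\pm$ a $1/z$ prefactor, so each numerator Wronskian in (\ref{S-lie}) is $O(1/z^2)$. This is exactly cancelled by $s(z)\sim q_0^2/z^2$, producing finite limits. A direct $2\times 2$ determinant calculation gives
\[
\mathrm{Wr}(\mu_{+1},\mu_{-2})\sim \frac{q_-\,q_+^{\,*}}{z^2}\,e^{i(v_--v_+)}=\frac{q_-\,q_+^{\,*}}{z^2}\,e^{iv_0},
\]
and after dividing by $s(z)$ and using the NZBC identity $|q_+|^2=q_0^2$ (so $q_+^{\,*}/q_0^2=1/q_+$) one obtains $s_{11}(z)\to(q_-/q_+)\,e^{iv_0}$; the analogous computation produces $s_{22}(z)\to(q_+/q_-)\,e^{-iv_0}$. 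The off-diagonal Wronskians vanish identically at the leading $1/z^2$ order because the leading vectors of $\mu_{\pm 1}$ are both proportional to $(0,\,\ast)^T$ and those of $\mu_{\pm 2}$ to $(\ast,\,0)^T$; their $s_{12},s_{21}$ Wronskians are therefore $O(z)$ after dividing by $s(z)$.

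The main obstacle is bookkeeping: keeping signs and phases straight so that $v_+-v_-=-v_0$ is applied consistently, confirming the nontrivial cancellation between the $1/z^2$ pole of the Wronskians and that of $1/s(z)$, and carefully checking that subleading corrections in the WKB ansatz do not leak into the off-diagonal limits through the $e^{\pm 2i\theta}$ factors. The time-independence of $v_0$ follows by exactly the argument already indicated in the ZBC case: substitute the expansion into the $t$-part (\ref{lax-t}), match successive orders in $z$, and send $x\to\pm\infty$ to conclude $v_{0t}=0$.
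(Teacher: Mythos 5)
Your proposal is correct and follows essentially the same route the paper indicates: substitute the $z\to\infty$ and $z\to 0$ asymptotics of the modified Jost solutions into the Wronskian representations (\ref{S-lie}), use $v_--v_+=v_0$ and $|q_\pm|=q_0$, and observe the cancellation of the $1/z^2$ pole of the numerator Wronskians against that of $s(z)$. Your sign and phase bookkeeping checks out ($s_{11}\to e^{-iv_0}$, $s_{11}\to (q_-/q_+)e^{iv_0}$ at $z=0$, off-diagonals vanishing at leading order), so nothing further is needed.
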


\subsection{Inverse problem with NZBCs and double poles}

\subsubsection{Riemann-Hilbert problem}

We present the following  Riemann-Hilbert problem for the case of NZBCs, whose form is similar to one of ZBC.

\begin{proposition}
Let the sectionally meromorphic matrices
\begin{gather}
M(x, t; z)=\left\{
\begin{aligned}
\left(\frac{\mu_{+1}(x, t; z)}{s_{11}(z)},\, \mu_{-2}(x, t; z)\right),\quad z\in D^+, \\[0.05in]
\left(\mu_{-1}(x, t; z),\, \frac{\mu_{+2}(x, t; z)}{s_{22}(z)}\right), \quad z\in D^-,
\end{aligned}\right.
\end{gather}
and
\begin{gather}
M^{\pm}(x, t; z)=\lim_{\begin{matrix}
z'\to z\\z'\in D^{\pm}
\end{matrix}}M(x, t; z'), \quad z\in\Sigma.
\end{gather}
Then a multiplicative matrix Riemann-Hilbert problem is proposed:
\begin{itemize}
\item Analyticity: $M(x, t; z)$ is analytic in $\left(D^+\cup D^-\right)\backslash Z$ and has double poles in $Z$.
\item Jump condition:
\begin{gather}
M^-(x, t; z)=M^+(x, t; z)\left(I-J(x, t; z)\right), \quad z\in\Sigma,
\end{gather}
where
\begin{gather*}
J(x, t; z)=\mathrm{e}^{i\theta(x, t; z)\hat\sigma_3}
\begin{bmatrix}
0&-\tilde\rho(z)\\[0.05in]
\rho(z)&\rho(z)\,\tilde\rho(z)
\end{bmatrix}.
\end{gather*}
\item Asymptotic behaviors:
\begin{gather}
\begin{gathered}
M(x, t; z)=\mathrm{e}^{iv_-(x, t)\sigma_3}+O\left(\frac{1}{z}\right), \quad z\to\infty,\qquad \\
M(x, t; z)=\frac{i}{z}\,\mathrm{e}^{iv_-(x, t)\sigma_3}\sigma_3\,Q_-+O\left(1\right), \quad z\to 0.
\end{gathered}
\end{gather}
\end{itemize}
\end{proposition}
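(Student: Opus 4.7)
The proof splits into the three bullet points of the proposition, and in each case I would follow essentially the strategy already sketched in Sec.~II for the ZBC analogue, adapted to the uniformization variable $z$.

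For the \emph{analyticity}, the starting point is that $\mu_\pm(x,t;z)=\varPhi_\pm(x,t;z)\,e^{-i\theta(x,t;z)\sigma_3}$ inherits the analyticity properties already established: $\mu_{+1},\mu_{-2}$ extend analytically from $\Sigma$ to $D^+$, and $\mu_{-1},\mu_{+2}$ extend analytically to $D^-$. Combined with the analyticity of $s_{11}$ in $D^+$ and of $s_{22}$ in $D^-$, the quotients $\mu_{+1}/s_{11}$ and $\mu_{+2}/s_{22}$ are meromorphic, with singularities only at the zeros of the denominators. By hypothesis these zeros are exactly the double zeros on $Z\cap D^\pm$, and at each such point the proportionality relation $\varPhi_{+1}(x,t;z_0)=b[z_0]\,\varPhi_{-2}(x,t;z_0)$ (and its $D^-$ analogue) forces the numerator to be nonzero, so the singularities are genuine double poles. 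Hence $M(x,t;z)$ is meromorphic on $(D^+\cup D^-)\setminus Z$ with double poles on $Z$.

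For the \emph{jump condition}, I would start from $\varPhi_+=\varPhi_- S$ valid on $\Sigma\setminus\{\pm iq_0\}$, write it column by column, and divide the two column equations by $s_{11}$ and $s_{22}$ respectively to express
\begin{align*}
\varPhi_{+1}/s_{11} &= \varPhi_{-1}+\rho\,\varPhi_{-2},\\
\varPhi_{+2}/s_{22} &= \tilde\rho\,\varPhi_{-1}+\varPhi_{-2}.
\end{align*}
Solving this linear $2\times2$ system so that $(\varPhi_{-1},\varPhi_{+2}/s_{22})$ is expressed in terms of $(\varPhi_{+1}/s_{11},\varPhi_{-2})$ produces a matrix factor $I-\tilde J$ involving only $\rho$ and $\tilde\rho$. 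Stripping the exponentials via the definition of $\mu_\pm$ conjugates this matrix by $e^{i\theta\sigma_3}$, which yields exactly $M^-=M^+(I-J)$ with the stated $J(x,t;z)$.

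For the \emph{asymptotic behaviors}, the plan is to plug the WKB expansions of $\mu_\pm$ and the asymptotics of $S$ into the two-piece definition of $M$. The identity $v_+(x,t)-v_-(x,t)=-v_0$, which follows immediately from the definition of $v_\pm$, is the key bookkeeping tool: as $z\to\infty$ in $D^+$, $\mu_{+1}/s_{11}\to e^{iv_+(x,t)}\cdot e^{iv_0}(1,0)^T=e^{iv_-(x,t)}(1,0)^T$, and $\mu_{-2}\to e^{-iv_-(x,t)}(0,1)^T$, so $M\to e^{iv_-\sigma_3}$; the $D^-$ computation is symmetric. The $z\to 0$ case is the main obstacle, since one must verify that the leading $i/z$ behaviors of $\mu_\pm$ combine with the diagonal limit $\operatorname{diag}(q_-/q_+,q_+/q_-)e^{iv_0\sigma_3}$ of $S(z)$ to give a single, consistent answer $(i/z)e^{iv_-\sigma_3}\sigma_3 Q_-$ from both the $D^+$ and $D^-$ representations of $M$. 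Verifying this requires the identity $|q_\pm|^2=q_0^2$ and the third symmetry reduction to reconcile $Q_+$ versus $Q_-$; this is precisely the computation that is vacuous in the ZBC case and encodes the additional structure of NZBCs. Once the $z\to 0$ identity is checked on both sides of $\Sigma$, the asymptotic bullet is established and the proposition follows.
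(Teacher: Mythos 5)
Your proposal is correct and follows essentially the same route the paper takes (the paper only sketches this argument for the ZBC analogue and then asserts the NZBC case is similar): analyticity from the established analyticity of $\mu_{\pm}$ and $s_{11},s_{22}$ together with the double-zero hypothesis, the jump from rearranging the column relations of $\varPhi_+=\varPhi_-S$ and conjugating by $\mathrm{e}^{i\theta\sigma_3}$, and the asymptotics from the WKB limits of $\mu_\pm$ and $S$ combined with $v_+ = v_- - v_0$ and $|q_\pm|=q_0$. Your $z\to 0$ verification is in fact more explicit than anything the paper records.
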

To solve  the above-mentioned Riemann-Hilbert problem conveniently, we define $\widehat\eta_n=-\frac{q_0^2}{\eta_n}$ with
\begin{align}
\eta_n=\left\{
\begin{aligned}
&z_n, & n&=1, 2, \cdots, N_1,\\[0.05in]
-{}&z_{n-N_1}, &n&=N_1+1, N_1+2, \cdots, 2N_1,\\[0.05in]
&\frac{q_0^2}{z^*_{n-2N_1}}, &n&=2N_1+1, 2N_1+2, \cdots, 3N_1,\\[0.05in]
-{}&\frac{q_0^2}{z^*_{n-3N_1}},&n&=3N_1+1,3N_1+2, \cdots, 4N_1,\\[0.05in]
&w_{n-4N_1},& n&=4N_1+1, 4N_1+2, \cdots, 4N_1+N_2,\\[0.05in]
-{}&w_{n-4N_1-N_2},&n&=4N_1+N_2+1, 4N_1+N_2+2, \cdots, 4N_1+2N_2.
\end{aligned}\right.
\end{align}
\begin{proposition}
The solution of the Riemann-Hilbert problem with double poles is given by
\begin{gather}
\begin{gathered}
M(x, t; z)=\mathrm{e}^{iv_-(x, t)\sigma_3}\left(I+\frac{i}{z}\,\sigma_3\,Q_-\right)+\frac{1}{2\pi i}\int_\Sigma\frac{(M^+J)(x, t; \zeta)}{\zeta-z}\,\mathrm{d}\zeta \qquad\qquad\qquad\qquad\qquad\qquad\qquad\qquad \\[0.05in]
+\sum_{n=1}^{4N_1+2N_2}\!\!\left(\!C_n(z)\!\left[\mu_{-2}'(\eta_n)+\left(\!D_n+\frac{1}{z-\eta_n}\right)\mu_{-2}(\eta_n)\right],
\,\widehat C_n(z)\left[\mu_{-1}'(\widehat\eta_n)+\left(\widehat D_n+\frac{1}{z-\widehat\eta_n}\right)\mu_{-1}(\widehat\eta_n)\right]\right).
\end{gathered}
\end{gather}
where $\int_\Sigma$ denotes the integral along the oriented contour shown in Fig. \ref{z-NZBC},
\begin{gather*}
C_n(z)=\frac{A[\eta_n]}{z-\eta_n}\,\mathrm{e}^{-2i\theta(\eta_n)}, \,\, D_n=B[\eta_n]-2\,i\,\theta'(\eta_n), \,\, \widehat C_n(z)=\frac{A[\widehat\eta_n]}{z-\widehat\eta_n}\,\mathrm{e}^{2i\theta(\widehat\eta_n)}, \,\, \widehat D_n=B[\widehat\eta_n]+2\,i\,\theta'(\widehat\eta_n),
\end{gather*}
$\mu_{-2}(x, t; \eta_n)$ and $\mu_{-2}'(x, t; \eta_n)$ are determined by $\mu_{-1}(\widehat\eta_n)$ and $\mu_{-1}'(\widehat\eta_n)$ as
\begin{align*}
\mu_{-2}(\eta_n)=\frac{iq_-}{\eta_n}\,\mu_{-1}\left(\widehat\eta_n\right), \quad
\mu_{-2}'(\eta_n)=-\frac{iq_-}{\eta_n^2}\,\mu_{-1}(\widehat\eta_n)+\frac{iq_-q_0^2}{\eta_n^3}\,\mu_{-1}'(\widehat\eta_n),
\end{align*}
and $\mu_{-1}(\widehat\eta_n)$ and $\mu_{-1}'(\widehat\eta_n)$ satisfy the following linear system of $8N_1+4N_2$ equations, $k=1, 2, \cdots, 4N_1+2N_2$,
\begin{gather}
\begin{gathered}
\begin{gathered}
\sum_{n=1}^{4N_1+2N_2}\widehat C_n(\eta_k)\,\mu_{-1}'(\widehat\eta_n)+\left[\widehat C_n(\eta_k)\left(\widehat D_n+\frac{1}{\eta_k-\widehat\eta_n}\right)-\frac{iq_-}{\eta_k}\,\delta_{k, n}\right]\mu_{-1}(\widehat\eta_n) \qquad\qquad \\[0.05in]
=-\mathrm{e}^{iv_-(x, t)\sigma_3}\begin{bmatrix}
\frac{iq_-}{\eta_k}\\[0.05in]1
\end{bmatrix}
-\frac{1}{2\pi i}\int_\Sigma\frac{\left(M^+J\right)_2(\zeta)}{\zeta-\eta_k}\,\mathrm{d}\zeta,
\end{gathered}  \\[0.1in]
\begin{gathered}
\sum_{n=1}^{4N_1+2N_2}\left(\frac{\widehat C_n(\eta_k)}{\eta_k-\widehat\eta_n}+\frac{iq_-q_0^2}{\eta_k^3}\,\delta_{k, n}\right)\mu_{-1}'(\widehat\eta_n)+\left[\frac{\widehat C_n(\eta_k)}{\eta_k-\widehat\eta_n}\left(\widehat D_n+\frac{2}{\eta_k-\widehat\eta_n}\right)-\frac{iq_-}{\eta_k^2}\,\delta_{k, n}\right]\mu_{-1}(\widehat\eta_n)\\[0.05in]
=-\mathrm{e}^{iv_-(x, t)\sigma_3}\begin{bmatrix}
\frac{iq_-}{\eta_k^2}\\[0.05in]0
\end{bmatrix}
+\frac{1}{2\pi i}\int_\Sigma\frac{\left(M^+J\right)_2(\zeta)}{\left(\zeta-\eta_k\right)^2}\,\mathrm{d}\zeta.
\end{gathered}
\end{gathered}
\end{gather}
\end{proposition}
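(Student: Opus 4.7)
The plan is to regularize the Riemann-Hilbert problem into a standard (additive, no-pole, vanishing-at-infinity) form, invert the jump via Cauchy projectors, and then close the resulting expression into a linear system by evaluating at the discrete spectrum.

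First I would subtract off both the analytic singular contributions and the prescribed asymptotic behavior. From the residue/principal-part identities (\ref{erjieliu}), the full singular part of the first column of $M$ at $\eta_n\in Z\cap D^+$ is
\[
\frac{A[\eta_n]e^{-2i\theta(\eta_n)}\mu_{-2}(\eta_n)}{(z-\eta_n)^2}+\frac{A[\eta_n]e^{-2i\theta(\eta_n)}\bigl[\mu_{-2}'(\eta_n)+D_n\mu_{-2}(\eta_n)\bigr]}{z-\eta_n}=C_n(z)\!\left[\mu_{-2}'(\eta_n)+\Bigl(D_n+\tfrac{1}{z-\eta_n}\Bigr)\mu_{-2}(\eta_n)\right],
\]
and similarly for the second column at $\widehat\eta_n\in Z\cap D^-$. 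Together with the asymptotic term $e^{iv_-\sigma_3}(I+\tfrac{i}{z}\sigma_3 Q_-)$ required by the $z\to\infty$ and $z\to 0$ behaviors, subtracting all of these from $M$ produces a sectionally analytic matrix on $\mathbb{C}\setminus\Sigma$ which is free of poles and decays at both $z=\infty$ and $z=0$.

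Next I would apply the Plemelj--Sokhotski formula. The jump condition $M^- = M^+(I-J)$ rewrites, after subtraction, as an additive jump $N^- - N^+ = -M^+ J$ for the regularized matrix $N$. Since $N$ is analytic in $D^+\cup D^-$ and vanishes at both $z=\infty$ and $z=0$, the Cauchy projector on $\Sigma$ (oriented as in Fig.~\ref{z-NZBC}) gives $N(z)=\frac{1}{2\pi i}\int_\Sigma\frac{(M^+J)(\zeta)}{\zeta-z}d\zeta$. Adding back the subtracted asymptotics and principal parts yields precisely the stated integral-plus-sum representation for $M(x,t;z)$.

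To close the formula I must express $\mu_{-2}(\eta_n),\mu_{-2}'(\eta_n),\mu_{-1}(\widehat\eta_n),\mu_{-1}'(\widehat\eta_n)$ through the representation itself. The third symmetry reduction $\varPhi_{\pm}(z)=\tfrac{i}{z}\varPhi_{\pm}(-q_0^2/z)\sigma_3 Q_{\pm}$ collapses the four unknowns at each spectral site onto two: differentiating and using $\widehat\eta_n=-q_0^2/\eta_n$ gives
\[
\mu_{-2}(\eta_n)=\tfrac{iq_-}{\eta_n}\mu_{-1}(\widehat\eta_n),\qquad \mu_{-2}'(\eta_n)=-\tfrac{iq_-}{\eta_n^2}\mu_{-1}(\widehat\eta_n)+\tfrac{iq_-q_0^2}{\eta_n^3}\mu_{-1}'(\widehat\eta_n),
\]
so only $\mu_{-1}(\widehat\eta_n),\mu_{-1}'(\widehat\eta_n)$ need be determined. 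Evaluating the second column of the representation at $z=\eta_k$ yields the first block of equations; differentiating it in $z$ and then setting $z=\eta_k$ yields the second block. Substituting the symmetry relations above on both sides converts these into the stated $8N_1+4N_2$ closed linear equations for $\{\mu_{-1}(\widehat\eta_n),\mu_{-1}'(\widehat\eta_n)\}$, with the $\delta_{k,n}$ terms arising precisely from collecting the diagonal contributions of $\mu_{-2}(\eta_k)$ and $\mu_{-2}'(\eta_k)$ on the left-hand side.

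The main obstacle will be the derivative-at-the-pole step: one must carefully differentiate the representation with respect to $z$, treating the Cauchy integral (which produces the kernel $(\zeta-\eta_k)^{-2}$), the $1/(z-\widehat\eta_n)$ term in the sum, and the $(z-\eta_n)^{-2}$ coefficient coherently, and then combine with the symmetry-derived expression for $\mu_{-2}'(\eta_n)$ to land on exactly the asymmetric coefficients $\widehat D_n+1/(\eta_k-\widehat\eta_n)$ and $\widehat D_n+2/(\eta_k-\widehat\eta_n)$ that appear in the statement. The remaining analyticity, jump, and asymptotic-consistency checks are routine given Propositions on analyticity and on asymptotic behavior already established.
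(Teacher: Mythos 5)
Your proposal is correct and follows essentially the same route the paper indicates: subtract the singular parts and the $z\to\infty$, $z\to 0$ asymptotics to regularize, apply the Cauchy projectors/Plemelj formula to recover the integral term, use the third symmetry reduction to collapse $\mu_{-2}(\eta_n),\mu_{-2}'(\eta_n)$ onto $\mu_{-1}(\widehat\eta_n),\mu_{-1}'(\widehat\eta_n)$, and close the system by evaluating the second column and its $z$-derivative at $z=\eta_k$. The only cosmetic imprecision is that the regularized matrix is merely bounded (not decaying) at $z=0$ after the subtraction, which is all the Plemelj argument requires.
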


\subsubsection{Reconstruction formula for the potential}

From the solution of the Riemann-Hilbert problem, one has the following asymptotic behavior of $M(x, t; z)$:
\begin{gather}
M(x, t; z)=\mathrm{e}^{iv_-(x, t)\sigma_3}+\frac{1}{z}\,M^{(1)}(x, t)+O\left(\frac{1}{z^2}\right), \quad z\to\infty,
\end{gather}
where
\begin{align}
\begin{aligned}
M^{(1)}(x, t)=&i\,\mathrm{e}^{iv_-(x, t)\sigma_3}\sigma_3\,Q_--\frac{1}{2\pi i}\int_\Sigma \left(M^+J\right)(x,t;\zeta)\,\mathrm{d}\zeta   \\[0.05in]
&+\sum_{n=1}^{4N_1+2N_2}\left[A[\eta_n]\,\mathrm{e}^{-2i\theta(\eta_n)}\left(\mu_{-2}'(\eta_n)+D_n\mu_{-2}(\eta_n)\right), A[\widehat\eta_n]\,\mathrm{e}^{2i\theta(\widehat\eta_n)}\left(\mu_{-1}'(\widehat\eta_n)+\widehat D_n\mu_{-1}(\widehat\eta_n)\right)\right].
\end{aligned}
\end{align}
Substituting $M(x, t; z)\mathrm{e}^{-i\theta(x, t; z)\sigma_3}$ into Eq.~(\ref{lax-x}) and matching $O\left(z\right)$ term, one obtains the reconstruction formula for the potential with double poles in the following proposition.
\begin{proposition}
The reconstruction formula for the double-pole solution of the DNLS equation (\ref{DNLS}) with NZBCs is
\begin{gather}
q(x, t)=\mathrm{e}^{iv_-(x, t)}\left(q_-\,\mathrm{e}^{iv_-(x, t)}-i\,\alpha^T\gamma\right),
\end{gather}
where the column vectors $\alpha=(\alpha^{(1)},\,\alpha^{(2)})^T$ and $\gamma=(\gamma^{(1)},\,\gamma^{(2)})^T$ are given by
\begin{gather*}
\begin{gathered}
  \alpha^{(1)}=\left(A[\widehat\eta_n]\,\mathrm{e}^{2i\theta(\widehat\eta_n)}\widehat D_n\right)_{\left(4N_1+2N_2\right)\times 1},\quad \alpha^{(2)}=\left(A[\widehat\eta_n]\,\mathrm{e}^{2i\theta(\widehat\eta_n)}\right)_{\left(4N_1+2N_2\right)\times 1},\\[0.05in]
  \gamma^{(1)}=\left(\mu_{-11}(\eta_n^*)\right)_{\left(4N_1+2N_2\right)\times 1},\quad \gamma^{(2)}=\left(\mu_{-11}'(\widehat\eta_n)\right)_{\left(4N_1+2N_2\right)\times 1}.
\end{gathered}
\end{gather*}
\end{proposition}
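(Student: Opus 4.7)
The plan is to read the reconstruction formula directly off the large-$z$ asymptotic expansion of $M(x,t;z)$ by plugging $\varPhi(x,t;z)=M(x,t;z)\,\mathrm{e}^{i\theta(x,t;z)\sigma_3}$ into the $x$-part \eqref{lax-x} of the Lax pair and matching powers of $z$ near infinity. The substitution converts \eqref{lax-x} into $M_x+iM\theta_x\sigma_3=XM$, which is a WKB-style identity whose coefficients can be compared term by term once the expansions of $X(x,t;z)$, $\theta_x(x,t;z)$ and $M(x,t;z)$ at $z=\infty$ are written down explicitly.

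First I would expand in the uniformization variable: $k=(z-q_0^2/z)/2$ and $\lambda=(z+q_0^2/z)/2$ give $ik^2\sigma_3=(iz^2/4)\sigma_3-(iq_0^2/2)\sigma_3+O(z^{-2})$, $kQ=(z/2)Q+O(z^{-1})$ and $\theta_x=k\lambda=z^2/4-q_0^4/(4z^2)$. Combined with $M=\mathrm{e}^{iv_-\sigma_3}+M^{(1)}/z+O(z^{-2})$, matching the $O(z^2)$ term forces $M^{(0)}=\mathrm{e}^{iv_-\sigma_3}$ to commute with $\sigma_3$ (automatic, since it is diagonal), and matching the $O(z)$ term yields
\[
\tfrac{i}{4}\bigl[M^{(1)},\sigma_3\bigr]+\tfrac{1}{2}Q\,\mathrm{e}^{iv_-\sigma_3}=0.
\]
Reading the $(1,2)$ entry gives $M^{(1)}_{12}=i\,q(x,t)\,\mathrm{e}^{-iv_-(x,t)}$, hence $q(x,t)=-i\,\mathrm{e}^{iv_-(x,t)}\,M^{(1)}_{12}$.

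Next I would extract $M^{(1)}_{12}$ from the closed-form RHP solution. The stated expansion already identifies
\[
M^{(1)}=i\,\mathrm{e}^{iv_-\sigma_3}\sigma_3 Q_- - \tfrac{1}{2\pi i}\!\int_\Sigma\!(M^+J)\,\mathrm{d}\zeta+\sum_{n=1}^{4N_1+2N_2}\!\bigl[\,\cdots,\;A[\widehat\eta_n]\mathrm{e}^{2i\theta(\widehat\eta_n)}\bigl(\mu'_{-1}(\widehat\eta_n)+\widehat D_n\mu_{-1}(\widehat\eta_n)\bigr)\bigr].
\]
The $(1,2)$ entry of the first term equals $iq_-\mathrm{e}^{iv_-}$, while the $(1,2)$ entry of the sum is exactly $\alpha^T\gamma$ once one uses the definitions $\alpha^{(1)}_n=A[\widehat\eta_n]\mathrm{e}^{2i\theta(\widehat\eta_n)}\widehat D_n$, $\alpha^{(2)}_n=A[\widehat\eta_n]\mathrm{e}^{2i\theta(\widehat\eta_n)}$, together with the first components $\mu_{-11}(\widehat\eta_n)$ and $\mu'_{-11}(\widehat\eta_n)$. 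Multiplying by $-i\mathrm{e}^{iv_-}$ gives $q(x,t)=\mathrm{e}^{iv_-}\bigl(q_-\mathrm{e}^{iv_-}-i\alpha^T\gamma\bigr)$, as claimed.

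The main obstacle is bookkeeping rather than analysis. One delicate point is the matching at $O(z)$: both $X$ and $\theta_x\sigma_3$ produce $z^2$ contributions that must cancel in the leading balance, so the next-to-leading $O(z)$ balance has to be executed carefully, using that $E_\pm(z)$ diagonalizes $X_\pm$ (so that the free ansatz $\mathrm{e}^{iv_\pm\sigma_3}$ solves the leading background problem). The second delicate point is verifying that the second column of the sum in $M^{(1)}$ reduces to $\alpha^T\gamma$ after invoking the third (involution) symmetry $\mu_{-2}(\eta_n)=(iq_-/\eta_n)\mu_{-1}(\widehat\eta_n)$, so that all discrete data can be parameterized by $\mu_{-11}(\widehat\eta_n)$ and $\mu'_{-11}(\widehat\eta_n)$ alone. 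A consistency check is to repeat the argument on the $(2,1)$ entry, which must reproduce $-q^*$ under the first symmetry reduction of the scattering data.
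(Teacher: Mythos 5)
Your proposal is correct and follows essentially the same route as the paper: substitute $\varPhi=M\,\mathrm{e}^{i\theta\sigma_3}$ into the $x$-part of the Lax pair, match the $O(z)$ coefficient as $z\to\infty$ to get $q=-i\,\mathrm{e}^{iv_-}M^{(1)}_{12}$, and read $M^{(1)}_{12}$ off the second column of the solved Riemann--Hilbert problem (correctly interpreting the paper's $\gamma^{(1)}$ as evaluated at $\widehat\eta_n$). The only slip is a sign in your displayed $O(z)$ balance, which should read $\tfrac{i}{4}\bigl[M^{(1)},\sigma_3\bigr]-\tfrac{1}{2}Q\,\mathrm{e}^{iv_-\sigma_3}=0$; the conclusion $M^{(1)}_{12}=i\,q\,\mathrm{e}^{-iv_-}$ that you then state is exactly what the correct sign yields, so the final formula is unaffected.
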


\subsubsection{Trace formulae and theta condition}

The trace formulae are
\begin{gather*}
\begin{gathered}
s_{11}(z)=\exp\left(-\frac{1}{2\pi i}\int_\Sigma\frac{\log\left[1-\rho(\zeta)\,\tilde\rho(\zeta)\right]}{\zeta-z}\,\mathrm{d}\zeta\right)\prod_{n=1}^{4N_1+2N_2}
\left(\frac{z-\eta_n}{z-\widehat\eta_n}\right)^2\mathrm{e}^{-iv_0},\\[0.05in]
s_{22}(z)=\exp\left(\frac{1}{2\pi i}\int_\Sigma\frac{\log\left[1-\rho(\zeta)\,\tilde\rho(\zeta)\right]}{\zeta-z}\,\mathrm{d}\zeta\right)\prod_{n=1}^{4N_1+2N_2}
\left(\frac{z-\widehat\eta_n}{z-\eta_n}\right)^2\mathrm{e}^{iv_0}.
\end{gathered}
\end{gather*}
Let $z\to 0$, the theta condition is obtained. That is to say, there exists $j\in\mathbb{Z}$ such that
\begin{gather}
\mathrm{arg}\left(\frac{q_-}{q_+}\right)+2v=16\sum_{n=1}^{N_1}\mathrm{arg}(z_n)+8\sum_{m=1}^{N_2}\mathrm{arg}(w_m)+2j\pi+\frac{1}{2\pi}\int_\Sigma\frac{\log\left(1-\rho(\zeta)\tilde\rho(\zeta)\right)}{\zeta}\,\mathrm{d}\zeta.
\end{gather}

\subsubsection{Reflectionless potential: double-pole soliton solutions}

We consider the reflectionless potential. From the Jost integrable equation, one derives $\varPhi_{\pm}(x, t; q_0)=E_{\pm}(q_0)$. Combining with the definition of scattering matrix, one has $S(q_0)=I$ and $q_+=q_-$. From the theta condition, there exist $j\in\mathbb{Z}$ such that
\begin{gather}
v_0=8\sum_{n=1}^{N_1}\mathrm{arg}(z_n)+4\sum_{m=1}^{N_2}\mathrm{arg}(w_m)+j\pi.
\end{gather}

From the reconstruction formula, one can deduce the reflectionless potential in terms of determinants:
\begin{gather}\label{wufanshe1-1}
q(x, t)=\left(1+\frac{\mathrm{det}\left(G\right)}
{\mathrm{det}\left(H\right)}\right)q_-\,\mathrm{e}^{2iv_-(x, t)}, \quad G=\begin{bmatrix}
H&\beta\\[0.02in] \alpha^T&0\end{bmatrix},
\end{gather}
where $\left(8N_1+4N_2\right)\times\left(8N_1+4N_2\right)$ matrix $H$ is defined as
\begin{gather*}
H=\begin{bmatrix}
H^{(1, 1)}& H^{(1, 2)}\\[0.02in]
H^{(2, 1)}& H^{(2, 2)}
\end{bmatrix},\quad H^{(m, j)}=\left(h^{(m, j)}_{i, n}\right)_{\left(4N_1+2N_2\right)\times\left(4N_1+2N_2\right)},\quad m, j=1, 2  \\[0.05in]
h^{(1, 1)}_{i, n}=\widehat C_n(\eta_i)\left(\widehat D_n+\frac{1}{\eta_i-\widehat\eta_n}\right)-\frac{iq_-}{\eta_i}\,\delta_{i, n}, \quad h^{(1, 2)}_{i, n}=\widehat C_n(\eta_i), \\
h^{(2, 1)}_{i, n}=\frac{\widehat C_n(\eta_i)}{\eta_i-\widehat\eta_n}\left(\widehat D_n+\frac{2}{\eta_i-\widehat\eta_n}\right)-\frac{iq_-}{\eta_i^2}\,\delta_{i, n}, \quad h^{(2, 2)}_{i, n}=\frac{\widehat C_n(\eta_i)}{\eta_i-\widehat\eta_n}+\frac{iq_-q_0^2}{\eta_i^3}\,\delta_{i, n},
\end{gather*}
and $8N_1+4N_2$ column vector $\beta$ is
\begin{gather*}
\beta=\begin{bmatrix}
\beta^{(1)}\\\beta^{(2)}
\end{bmatrix},\quad
\beta^{(1)}=\left(\frac{1}{\eta_i}\right)_{\left(4N_1+2N_2\right)\times 1}, \quad \beta^{(2)}=\left(\frac{1}{\eta_i^2}\right)_{\left(4N_1+2N_2\right)\times 1}.
\end{gather*}
This formula is an implicit solution since it contains the term $v_-(x,t)$. To yield the explicit solution, from Jost integrable equation and trace formula, one has
\begin{gather}
\begin{aligned}
M(&x, t; z)=E_-(z)\\
&+k(z)\sum_{n=1}^{4N_1+2N_2}\left(\frac{\mathop\mathrm{P_{-2}}\limits_{z=\eta_n}\left[M(z)/k(z)\right]}{\left(z-\eta_n\right)^2}+\frac{\mathop\mathrm{Res}\limits_{z=\eta_n}\left[M(z)/k(z)\right]}{z-\eta_n}+\frac{\mathop\mathrm{P_{-2}}\limits_{z=\widehat\eta_n}\left[M(z)/k(z)\right]}{\left(z-\widehat\eta_n\right)^2}+\frac{\mathop\mathrm{Res}\limits_{z=\widehat\eta_n}\left[M(z)/k(z)\right]}{z-\widehat\eta_n}\right),
\end{aligned}
\end{gather}
with which one can solve the $\gamma$ explicitly. Sustituting $\gamma$ into the reconstruction formula for the reflectionless potential, one can pose another formula as
\begin{gather}\label{wufanshe1-2}
q(x, t)=\left(\mathrm{e}^{iv_-(x, t)}+\frac{\mathrm{det}\left(\widehat G\right)}
{\mathrm{det}\left(\widehat H\right)}\right)q_-\,\mathrm{e}^{iv_-(x, t)}, \quad \widehat G=\begin{bmatrix}
\widehat H&\beta\\[0.02in] \alpha^T&0\end{bmatrix},
\end{gather}
where $\left(8N_1+4N_2\right)\times\left(8N_1+4N_2\right)$ matrix $\widehat H$ is defined as
\begin{gather*}
\widehat H=\begin{bmatrix}
\widehat H^{(1, 1)}& \widehat H^{(1, 2)}\\[0.02in]
\widehat H^{(2, 1)}& \widehat H^{(2, 2)}
\end{bmatrix},\quad \widehat H^{(m, j)}=\left(\widehat h^{(m, j)}_{i, n}\right)_{\left(4N_1+2N_2\right)\times\left(4N_1+2N_2\right)},\quad m, j=1, 2  \\[0.05in]
\widehat h^{(1, 1)}_{i,n}=\frac{k(\eta_i)}{k(\widehat \eta_n)}\widehat C_n(\eta_i)\left(\widehat D_n+\frac{1}{\eta_i-\widehat\eta_n}-\frac{k'(\widehat\eta_n)}{k(\widehat\eta_n)}\right)-\frac{iq_-}{\eta_i}\,\delta_{i, n}, \quad \widehat h^{(1, 2)}_{i, n}=\frac{k(\eta_i)}{k(\widehat \eta_n)}\widehat C_n(\eta_i), \\
\widehat h^{(2,1)}_{i, n}=\frac{\widehat C_n(\eta_i)}{k(\widehat\eta_n)}\left[\frac{k(\eta_i)}{\eta_i-\widehat\eta_n}\left(\widehat D_n+\frac{2}{\eta_i-\widehat\eta_n}-\frac{k'(\eta_i)}{k(\widehat\eta_n)}\right)-k'(\eta_i)\left(\widehat D_n+\frac{1}{\eta_i-\widehat\eta_n}-\frac{k'(\widehat\eta_n)}{k(\widehat\eta_n)}\right)\right]-\frac{iq_-}{\eta_i^2}\,\delta_{i, n}, \\
\widehat h^{(2, 2)}_{i, n}=\frac{\widehat C_n(\eta_i)}{k(\widehat\eta_n)}\left(\frac{k(\eta_i)}{\eta_i-\widehat\eta_n}-k'(\eta_i)\right)+\frac{iq_-q_0^2}{\eta_i^3}\,\delta_{i, n},
\end{gather*}

From Eqs.~(\ref{wufanshe1-1}) and (\ref{wufanshe1-2}), we have

\begin{theorem} The explicit double-pole soliton solutions of the DNLS equation (\ref{DNLS}) with NZBCs are found as
\begin{gather}
q(x, t)=\left(\frac{\mathrm{det}\left(\widehat G\right)}{\mathrm{det}\left(\widehat H\right)}\right)^2\frac{\mathrm{det}\left(H\right)}{\mathrm{det}\left(G\right)}\left(1+\frac{\mathrm{det}\left(H\right)}{\mathrm{det}\left(G\right)}\right)q_-.
\end{gather}
\end{theorem}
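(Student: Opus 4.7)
The plan is to eliminate the nonlocal phase $e^{iv_-(x,t)}$ by combining the two implicit reflectionless representations (\ref{wufanshe1-1}) and (\ref{wufanshe1-2}) already derived in the section. Equation (\ref{wufanshe1-1}) was obtained directly from the reconstruction formula and gives $q$ as $(1+\det G/\det H)\,q_-\,e^{2iv_-}$, while (\ref{wufanshe1-2}) was obtained after using the trace formula together with the Jost integral equation to solve the vector $\gamma$ explicitly, yielding $q$ as $(e^{iv_-}+\det\widehat G/\det\widehat H)\,q_-\,e^{iv_-}$. Both expressions must hold simultaneously, so their compatibility determines $e^{iv_-}$ in terms of the two determinant ratios, after which back-substitution removes $v_-$ altogether.

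Concretely, I would set $A:=\det(G)/\det(H)$, $B:=\det(\widehat G)/\det(\widehat H)$ and $E:=e^{iv_-(x,t)}$, so that (\ref{wufanshe1-1}) reads $q=(1+A)q_-E^2$ and (\ref{wufanshe1-2}) reads $q=q_-E^2+q_-EB$. Subtracting the first from the second gives $q_-EB=q-q/(1+A)=qA/(1+A)$, hence
\begin{equation*}
E=\frac{qA}{q_-B(1+A)}.
\end{equation*}
Substituting this expression for $E$ back into $q=(1+A)q_-E^2$ and simplifying yields
\begin{equation*}
q=\frac{q^{2}A^{2}}{q_-B^{2}(1+A)},
\end{equation*}
which rearranges to $q=q_- B^{2}(1+A)/A^{2}$. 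Rewriting $(1+A)/A^{2}=(\det H/\det G)(1+\det H/\det G)$ and $B^{2}=(\det\widehat G/\det\widehat H)^{2}$ gives exactly the claimed identity.

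The only thing one must still check is that the manipulations are legitimate, i.e.\ that $\det H$, $\det\widehat H$ and $\det G$ are nonzero (so that $A$ and $B$ are well-defined), and that $E\neq 0$. Nonvanishing of $\det H$ and $\det\widehat H$ is guaranteed for the reflectionless data considered, since otherwise the linear systems for $\mu_{-1}(\widehat\eta_n)$ and $\mu_{-1}'(\widehat\eta_n)$ in the preceding proposition would be inconsistent, contradicting existence of the solution of the Riemann–Hilbert problem. Nonvanishing of $\det G$ and of $e^{iv_-}$ follow from the fact that $q_-\neq 0$ (as $|q_\pm|=q_0\neq 0$) together with (\ref{wufanshe1-1}). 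I expect the main obstacle to be purely algebraic: keeping track of the two determinant ratios and verifying that the $(1+\det H/\det G)$ factor in the theorem (rather than $(1+\det G/\det H)$) appears with the correct orientation. Once this is done, the theorem follows directly, with no further appeal to the Riemann–Hilbert machinery beyond what has already produced (\ref{wufanshe1-1}) and (\ref{wufanshe1-2}).
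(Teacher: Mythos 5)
Your proposal is correct and is essentially the paper's own argument: the theorem is obtained precisely by combining the two reflectionless representations (\ref{wufanshe1-1}) and (\ref{wufanshe1-2}) to eliminate the nonlocal factor $\mathrm{e}^{iv_-(x,t)}$, and your algebra (equivalently, equating the two expressions gives $E=B/A$ directly) reproduces the stated formula with the correct orientation of the determinant ratios. The paper gives no more detail than ``From Eqs.~(\ref{wufanshe1-1}) and (\ref{wufanshe1-2}), we have,'' so your write-up, including the remarks on nonvanishing of the determinants, is a faithful filling-in of that step.
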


\begin{figure}[!t]
\centering
\includegraphics[scale=0.5]{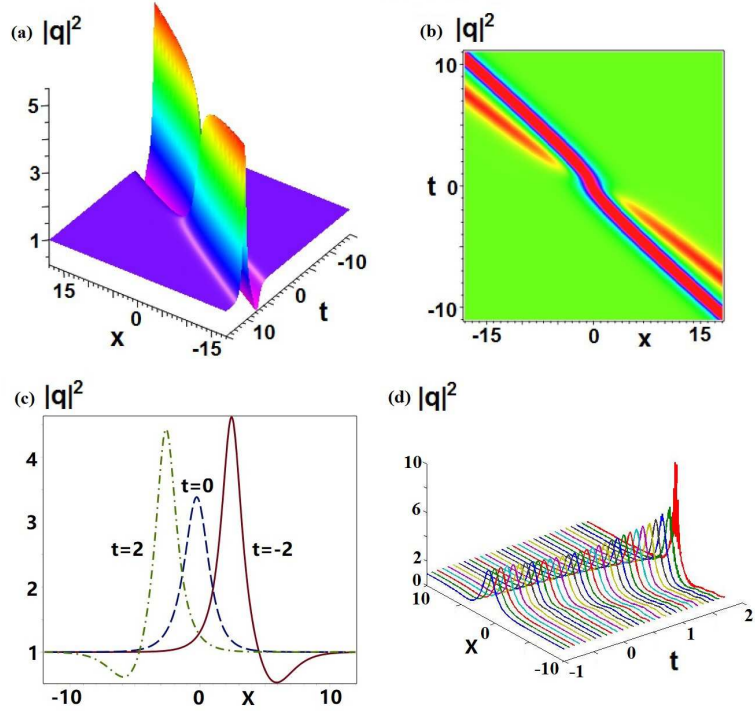}
\caption{Double-pole solutions with NZBCs. (a, b) $N_1=0, N_2=1, q_{\pm}=1, w_1=\exp\left(\frac{\pi}{4}i\right), A[w_1]=i, B[w_1]=1+\left(1-\sqrt{2}\right)i$;  (c, d) $N_1=1, N_2=0, q_{\pm}=1, z_1=2\exp\left(\frac{\pi}{6}i\right), A[w_1]=i, B[w_1]=i.$}
\label{fig4}
\end{figure}

For example, we have the following double-pole solutions:

\begin{itemize}

 \item  When $N_1=0, N_2=1, q_{\pm}=1, w_1=\exp\left(\frac{\pi}{4}i\right), A[w_1]=i, B[w_1]=1+\left(1-\sqrt{2}\right)i$, we have the double-pole dark-bright soliton, $q(x,t)=P_{11}/P_{12}$ with
\bee
\begin{array}{rl}
P_{11}=&\!\!\!\Big\{2[1+i-\sqrt{2}(1+(1+i)t)]e^{4t+2x}+2i[t^2+t+1+\sqrt{2}(t+1/2)]e^{2t+x} \v\\
  &\!\!\!+2[1-i+\sqrt{2}(i+(i-1)t)]\Big\}^2\Big\{[\sqrt{2}(1-2i+2t)+2(i-t^2)+2(2i-1)t]e^{4t+2x}\v\\
  &\!\!\!+(i\sqrt{2}t-i+\sqrt{2})e^{6t+3x}+(e^{8t+4x}+1)/2+(\sqrt{2}(it+1+i)-i)e^{2t+x}\Big\} \v\\
P_{12}=&\!\!\!\Big\{[\sqrt{2}((1+i)t+1)-1-i]e^{4t+2x}+[2\sqrt{2}(1-i+2t)-2+2i-4t^2+4(i-1)t]e^{2t+x} \v\\
 &\!\!\!+\sqrt{2}(i+(i-1)t)+1-i\Big\}^2\Big\{[t^2+t+1-\sqrt{2}(t+1/2)]e^{4t+2x}+(1-\sqrt{2}t)e^{6t+3x}/2\v\\
 &\!\!\!+(e^{8t+4x}+1)/4+[\sqrt{2}(t+1)-1]e^{2t+x}/2\Big\}.
 \end{array}
 \ene
which is a semi-rational soliton, and differs from the single-pole solutions usually expressed by the exponential functions even if the double-pole soliton displays the interaction of dark and bright solitons (see Figs.~\ref{fig4}(a)-(e)). Fig.~\ref{fig4}c displays that when $t=0$ the Gaussian-ilke profile with non-zero boundary, when $t\not=0$,

We use the exact double-pole bright-dark soliton at $t=-1$ as the initial condition without a small noise to numerically check the wave propagation such that we find that the wave stably propagates in a short time, and then the amplitude begins to become larger and larger (see Fig.~\ref{fig4}e).

 \item As $N_1=1, N_2=0, q_{\pm}=1, z_1=2\exp\left(\frac{\pi}{6}i\right), A[w_1]=i, B[w_1]=i$, we have the interaction of two breathers, which is complicated and omitted here (see Fig.~\ref{fig5}).

\end{itemize}

\begin{figure}[!t]
\centering
\vspace{0.1in}\includegraphics[scale=0.65]{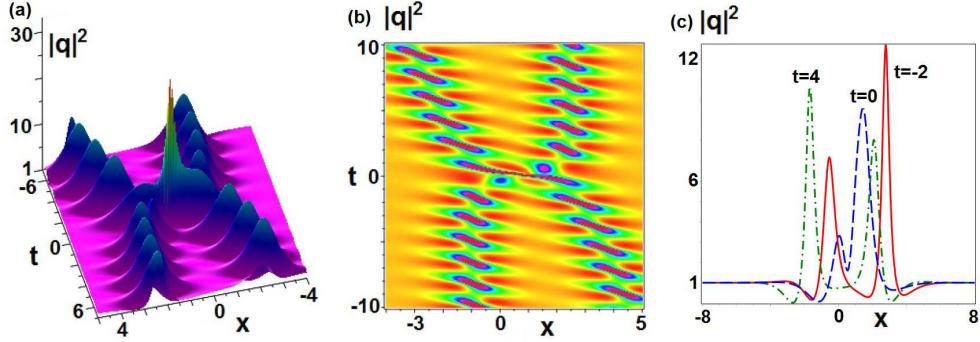}
\caption{Double-pole solutions with NZBCs. (a, b) $N_1=0, N_2=1, q_{\pm}=1, w_1=\exp\left(\frac{\pi}{4}i\right), A[w_1]=i, B[w_1]=1+\left(1-\sqrt{2}\right)i$;  (c, d) $N_1=1, N_2=0, q_{\pm}=1, z_1=2\exp\left(\frac{\pi}{6}i\right), A[w_1]=i, B[w_1]=i.$}
\label{fig5}
\end{figure}

{\it Remark.} The obtained $N$ double-pole solitons of Eq.~(\ref{DNLS}) can be applied to the modified NLS Eq.~(\ref{dnlsg}) by the gauge transformation.

\section{Conclusions and discussions}

In conclusions, we have presented the inverse scattering transforms for the DNLS equation with double poles under ZBC and NZBCs at infinity.
A rigorous theory for direct and inverse problems was proposed. The direct scattering illustrates the analyticity, symmetries, discrete and  spectrum and asymptotic behavior. The inverse problem is formulated and solved via a Riemann-Hilbert problem, which derives the trace formula and reflectionless potential. In addition, the reflectionless potential with double poles is deduced explicitly by determinants.  Some representative semi-rational bright-bright soliton, dark-bright soliton, and breather-breather solutions  are
examined in detail. Moreover, we will study long-time asymptotics for the DNLS equation with NZBCs via the modified Deift-Zhou method~\cite{MDZ} in another literature.

\vspace{0.1in}
\baselineskip=15pt

\noindent {\bf Acknowledgements}

\vspace{0.05in}
This work was partially supported by the NSFC under grants Nos.11731014 and 11571346, and CAS Interdisciplinary Innovation Team.

\end{document}